\newtheorem{lemma}{Lemma}
\DeclareMathOperator*{\argmax}{arg\,max}
\DeclareMathOperator*{\argmin}{arg\,min}
\newtheoremstyle{eventtheoremstyle}%
  {5pt}
  {3pt}
  {\itshape}
  {}
  {\bfseries}
  {:}
  {.8em}
  {\thmname{#1} #3}
\theoremstyle{eventtheoremstyle}
\newcommand*\circled[1]{\tikz[baseline=(char.base)]{
		\node[shape=circle,draw,inner sep=1pt] (char) {\scriptsize #1};}}
\newcommand{\pushright}[1]{\ifmeasuring@#1\else\omit\hfill$\displaystyle#1$\fi\ignorespaces}
\newcommand{\pushleft}[1]{\ifmeasuring@#1\else\omit$\displaystyle#1$\hfill\fi\ignorespaces}
\newcommand{\specialcell}[1]{\ifmeasuring@#1\else\omit$\displaystyle#1$\ignorespaces\fi}
\author{
	Francesco Bacchiocchi\thanks{Equal contribution.}\\
	Politecnico di Milano\\
	\texttt{francesco.bacchiocchi@polimi.it}
	\And
        Francesco Emanuele Stradi\footnotemark[1]\\
        Politecnico di Milano\\
	\texttt{francescoemanuele.stradi@polimi.it}
        \And
	Matteo Castiglioni\\
	Politecnico di Milano\\
	\texttt{matteo.castiglioni@polimi.it}
	\And
	Alberto Marchesi\\
	Politecnico di Milano\\
	\texttt{alberto.marchesi@polimi.it}
	\And
	Nicola Gatti\\
	Politecnico di Milano\\
	\texttt{nicola.gatti@polimi.it}
}
\title{Markov Persuasion Processes: \\ Learning to Persuade from Scratch}
\begin{document}

\maketitle

%
\begin{abstract}
	In \emph{Bayesian persuasion}, an informed sender strategically discloses information to a receiver so as to persuade them to undertake desirable actions.
	%
	%
	Recently, \emph{Markov persuasion processes} (MPPs) have been introduced to capture \emph{sequential} scenarios where a sender faces a {stream of myopic receivers} in a Markovian environment.
	The MPPs studied so far in the literature suffer from issues that prevent them from being fully operational in practice, \emph{e.g.}, they assume that the \emph{sender knows receivers' rewards}.
	We fix such issues by addressing MPPs where the sender has no knowledge about the environment.
	%
	We design a learning algorithm for the sender, working with partial feedback.
	%
	We prove that its regret with respect to an optimal information-disclosure policy grows sublinearly in the number of episodes, as it is the case for the loss in persuasiveness cumulated while learning.
	%
	Moreover, we provide a lower bound for our setting matching the guarantees of our algorithm. 
\end{abstract}
\section{Introduction}\label{introduction}

\emph{Bayesian persuasion}~\citep{kamenica2011bayesian} studies how an informed sender should strategically disclose information to influence the behavior of an interested receiver. 
Bayesian persuasion has received a growing attention over the last years, since it captures several fundamental problems arising in real-world applications, such as, \emph{e.g.}, 
online advertising~\citep{bro2012send,emek2014signaling,badanidiyuru2018targeting,Castiglioni2022Posted,BacchiocchiCMR022},
voting~\citep{cheng2015mixture,alonso2016persuading,castiglioni2019persuading,semipublic}, 
traffic routing~\citep{vasserman2015implementing,bhaskar2016hardness,castiglioni2020signaling}, recommendation systems~\citep{mansour2016bayesian},
security~\citep{rabinovich2015information,xu2016signaling}, marketing~\citep{babichenko2017algorithmic,candogan2019persuasion},
clinical trials~\citep{kolotilin2015experimental}, 
and financial regulation~\citep{goldstein2018stress}.

The vast majority of works on Bayesian persuasion focuses on \emph{one-shot} interactions, where information disclosure is performed in a single step.
%
Despite the fact that real-world problems are usually \emph{sequential}, there are only few exceptions that consider multi-step information disclosure~\citep{wu2022sequential,gan2022bayesian,dark,bernasconi2023persuading}.
%
In particular,~\citet{wu2022sequential} initiated the study of \emph{Markov persuasion processes} (MPPs), which model scenarios where a sender sequentially faces a stream of \emph{myopic} receivers in an unknown Markovian environment.
In each state of the environment, the sender privately observes some information---encoded in an outcome stochastically determined according to a prior distribution---and faces a \emph{new} receiver, who is then called to take an action.
The outcome and receiver's action jointly determine agents' rewards and the next state.
In an MPP, sender's goal is to (partially) disclose information at each state so as to persuade the receivers to take actions that maximize \emph{long-term} sender's expected rewards.

MPPs find application in several real-world settings, such as e-commerce and recommendation systems~\citep{wu2022sequential}.
For example, an MPP can model the problem faced by an online streaming platform recommending movies to its users.
Indeed, the platform has an informational advantage over users (\emph{e.g.}, it has access to views statistics), and it exploits available information to induce users to watch suggested movies, so as to maximize views.
Nevertheless, the MPPs studied by~\citet{wu2022sequential} suffer from several issues that prevent them from being fully operational in practice.
In particular, they make the rather strong assumption that the \emph{sender has perfect knowledge of receiver's rewards}.
%
This is unreasonable in real-world applications.
For instance, in the online streaming platform example described above, such an assumption requires that the platform knows everything about users' (private) preferences over movies.

We considerably relax the assumptions of~\citet{wu2022sequential}, by addressing MPPs where the \emph{sender does not know anything about the environment}.
We consider settings in which the sender has no knowledge about transitions, prior distributions over outcomes, sender's stochastic rewards, and receivers' ones.
Thus, they have to learn all these quantities simultaneously by repeatedly interacting with the MPP.

\subsection{Original Contributions}

Ideally, our goal is to design learning algorithms which attain \emph{regret} sublinear in the number of episodes $T$, while being \emph{persuasive}.
The regret is the difference between sender's rewards cumulated over the episodes and what would have been obtained by always using an optimal information-disclosure policy, while persuasiveness ensures that the receivers are correctly incentivized to take the desired actions.

Learning in MPPs without knowledge of receivers' rewards begets considerable additional challenges compared to the case addressed by~\citet{wu2022sequential}.
Indeed, the latter design a sublinear-regret algorithm which is persuasive at every episode with high probability, while we show that this is \emph{not} attainable in our setting.
Intuitively, this is due to the fact that, since the sender does \emph{not} know receivers' rewards, some episodes must be devoted to learn how to ``approximately'' satisfy persuasiveness requirements. 
As a consequence, in this work we look for algorithms which attain sublinear regret while ensuring that the cumulative \emph{violation} of persuasiveness grows sublinearly in $T$.\footnote{Ensuring sublinear violation has already been addressed in settings related to MPPs (see,~\emph{e.g.},~\citep{dark}), as it is the most reasonable and natural requirement one may think of in all cases in which persuasiveness cannot be achieved.}
%

As a warm-up, we start studying a \emph{full} feedback case where, after each episode, the sender observes the reward associated with every possible action in all the state-outcome pairs encountered during the episode.
We propose an algorithm, called Optimistic Persuasive Policy Search (\texttt{OPPS}), which uses information-disclosure policies computed by being \emph{optimistic} with respect to both sender's expected rewards and persuasiveness requirements.
%
%
We show that, under full feedback, \texttt{OPPS} attains $\widetilde{\mathcal{O}}(\sqrt{T})$ regret and violation.
Then, we switch to the \emph{partial} feedback case, where the sender only observes the rewards for the state-outcome-action triplets actually visited during the episode.
We extend the \texttt{OPPS} algorithm to this setting, by adding a preliminary \emph{exploration} phase having the goal of gathering as much feedback as possible about persuasiveness.
After that, the algorithm switches to an optimistic approach over information-disclosure policies that are ``approximately'' persuasive.
We prove that \texttt{OPPS} with partial feedback attains $\widetilde{\mathcal{O}}(T^\alpha)$ regret and $\widetilde{\mathcal{O}}(T^{1-\nicefrac{\alpha}{2}})$ violation, where $\alpha\in[\nicefrac{1}{2},1]$ is a parameter controlling the amount of exploration.
Finally, we provide a lower bound showing that the trade-off between regret and violation achieved by means of \texttt{OPPS} is tight.

\subsection{Related Works}

\paragraph{Sequential Bayesian Persuasion}
The work that is most related to ours is~\citep{wu2022sequential}, which introduces MPPs.
Specifically,~\citet{wu2022sequential} study settings where the sender knows everything about receivers' rewards, with the only elements unknown to them being their rewards, transition probabilities, and prior distributions over outcomes.
Moreover, they also assume that the receivers know everything they need about the environment, so as to select a best-response action, and that all rewards are deterministic.
In contrast, we consider MPP settings in which sender and receivers have no knowledge of the environment, including their rewards, which we assume to be stochastic.
Moreover,~\citet{wu2022sequential} obtain a regret bound of the order of $\widetilde{\mathcal{O}}(\sqrt{T} / D)$, where $D$ is a parameter related to receivers' rewards.
Notice that such a dependence is particularly unpleasant, as $D$ may be exponentially large in instances in which there are some receivers' actions that are best responses only for a ``small'' space of information-disclosure policies.
Other works related to ours are~\citep{gan2022bayesian}, which studies a Bayesian persuasion problem where a sender sequentially interacts with a myopic receiver in a multi-state environment, and~\citep{bernasconi2023persuading}, which addresses MPPs with a farsighted receiver.
These two works considerably depart from ours, as they both assume that the sender knows everything about the environment, including transitions, priors, and rewards.
Thus, they are \emph{not} concerned with learning problems, but with the problem of computing optimal information-disclosure policies.
Finally,~\citep{dark} studies settings where a sender faces a farsighted receiver in a sequential environment with a tree structure, addressing the case in which the only elements unknown to the sender are the prior distributions over outcomes, while rewards are deterministic and known.
The tree structure considerably eases the learning task, as it allows to express sender's expected rewards linearly in variables defining information-disclosure policies.
Intuitively, this allows to factor the uncertainty about transitions in the rewards at the leaves of the tree.

Our work is also related to learning in one-shot Bayesian persuasion settings played repeatedly~\citep{castiglioni2020online,castiglioni2021multi,DBLP:conf/sigecom/ZuIX21,pmlr-v202-bernasconi23a}, and works study online learning in \emph{Markov decision processes} (MDPs) (see, \emph{e.g.},~\citep{Near_optimal_Regret_Bounds,even2009online,neu2010online,rosenberg19a,JinLearningAdversarial2019}), in particular those on MDPs with constraints (see, \emph{e.g.},~\citep{Online_Learning_in_Weakly_Coupled,Constrained_Upper_Confidence,Exploration_Exploitation,Upper_Confidence_Primal_Dual,germano2023best}).

See Appendix~\ref{app:additional_related} for more details on related works.


\section{Preliminaries}


\subsection{Bayesian Persuasion}\label{sec:prelim_persuasion}

The classical \emph{Bayesian persuasion} framework introduced by~\citet{kamenica2011bayesian} models a \emph{one-shot} interaction between a \emph{sender} and a \emph{receiver}.
The latter has to take an action $a$ from a finite set $A$, while the former privately observes an outcome $\omega$ sampled from a finite set $\Omega$ according to a prior distribution $\mu \in \Delta(\Omega)$, which is \emph{known to both} the sender and the receiver.\footnote{In this work, we denote by $\Delta(X)$ the set of all the probability distributions having set $X$ as support.}
The rewards of both agents depend on the receiver's action and the realized outcome, as defined by the functions $r_{S}, r_{R} : \Omega \times A \to [0,1]$, where $r_{R} (\omega, a)$ and $r_{S} (\omega, a)$ denote the rewards of the sender and the receiver, respectively, when the outcome is $\omega \in \Omega$ and action $a \in A$ is played.
The sender can strategically disclose information about the outcome to the receiver, by \emph{publicly committing to} a signaling scheme $\phi$, which is a randomized mapping from outcomes to signals being sent to the receiver.
Formally, $\phi : \Omega \to \Delta(\mathcal{S})$, where $\mathcal{S}$ denotes a suitable finite set of signals.
For ease of notation, we let $\phi ( \cdot | \omega) \in \Delta(\mathcal{S})$ be the probability distribution over signals employed by the sender when the realized outcome is $\omega \in \Omega$, with $\phi ( s | \omega)$ being the probability of sending signal $s \in \mathcal{S}$.
%

The sender-receiver interaction goes on as follows: (i) the sender publicly commits to a signaling scheme $\phi$; (ii) the sender observes the realized outcome $\omega \sim \mu$ and draws a signal $s \sim \phi ( \cdot | \omega)$; and (iii) the receiver observes the signal $s$ and plays an action.
%
%
%
%
Specifically, after observing $s$ under a signaling scheme $\phi$, the receiver infers a \emph{posterior} distribution over outcomes and plays a \emph{best-response} action $b^\phi (s) \in A$ according to such distribution.
Formally:
$$ 
	b^\phi (s)\in \argmax_{a \in A} \, \sum_{\omega \in \Omega} \mu(\omega)\phi(s|\omega) r_{R}(\omega, a) ,
$$
where the expression being maximized encodes the (unnormalized) expected reward of the receiver.
As it is customary in the literature~\cite{kamenica2011bayesian}, we assume that the receiver breaks ties in favor of the sender, by selecting a best response maximizing sender's expected reward when multiple best responses are available.
%
%
%

The goal of the sender is to commit to a signaling scheme $\phi$ that maximizes their expected reward, which is computed as follows: $\sum_{\omega \in \Omega} \mu(\omega) \sum_{s \in \mathcal{S}} \phi(s|\omega) r_{S}(\omega, b^\phi(s))$.

\subsection{Markov Persuasion Processes}\label{sec:prelim_mpp}

A \emph{Markov persuasion process} (MPP)~\citep{wu2022sequential} generalizes the one-shot Bayesian persuasion framework by~\citet{kamenica2011bayesian} to settings in which the sender sequentially interacts with multiple receivers in a \emph{Markov decision process} (MDP). 
%
In an MPP, the sender faces a stream of \emph{myopic} receivers who take actions by only accounting for their immediate rewards, thus disregarding future ones.
%
%
%

Formally, an (\emph{episodic}) MPP is defined by means of a tuple $M := \left(X, A, \Omega, \mu, P, \{r_{S,t}\}_{t=1}^T, \{r_{R,t}\}_{t=1}^T\right)$, where:
%
%
\begin{itemize}
	\item $T$ is the number of episodes.\footnote{In this work, a specific episode is denoted by $t\in[T]$, where $[a \ldots b]$ is the set of all integers from $a$ to $b$ and $[b] := [1 \ldots b]$.}
	%
	\item $X$, $A$, and $\Omega$ are finite sets of sates, actions, and outcomes, respectively.
	%
	%
	%
	\item $\mu: X \to \Delta(\Omega)$ is a prior function defining a probability distribution over outcomes at each state. For ease of notation, we let $\mu(\omega | x)$ be the probability with which outcome $\omega \in \Omega$ is sampled in state $x \in X$.
	%
	%
	%
	\item $P : X \times \Omega \times A   \to \Delta(X)$ is a transition function. For ease of notation, we let $P (x^{\prime} |x,\omega, a)$ be the probability of moving from $x \in X$ to $x^{\prime} \in X$ by taking action $a \in A$, when the outcome sampled in state $x$ is $\omega\in\Omega$.
	%
	%
	\item $\{ r_{S,t} \}_{t=1}^T$ is a sequence specifying a sender's reward function $r_{S,t}: X \times \Omega \times A \to [0,1]$ at each episode~$t$. Given $x \in X$, $\omega \in \Omega$, and $a \in A$, each $r_{S,t}(x,\omega,a)$ for $t \in [T]$ is sampled independently from a distribution $\nu_S(x,\omega,a) \in \Delta([0,1])$ with mean $r_S(x,\omega,a)$.
	%
	%
	%
	%
	\item $\{ r_{R,t} \}_{t=1}^T$ is a sequence defining a receivers' reward function $r_{R,t}: X \times \Omega \times A \to [0,1]$ at each episode~$t$. Given $x \in X$, $\omega \in \Omega$, and $a \in A$, each $r_{R,t}(x,\omega,a)$ for $t \in [T]$ is sampled independently from a distribution $\nu_R(x,\omega,a) \in \Delta([0,1])$ with mean $r_R(x,\omega,a)$.\footnote{Notice that, while~\citep{wu2022sequential} considers MPPs in which the rewards are \emph{deterministic} and do \emph{not} change across episodes, in this work we address the more general case in which the rewards are \emph{stochastic} and sampled at each episode independently.} 
	%
	%
	%
\end{itemize}
For ease of presentation, we focus w.l.o.g.~on episodic MPPs enjoying the \emph{loop-free} property, as customary in the online learning in MDPs literature~\citep{rosenberg19a}.
In a loop-free MPP, states are partitioned into $L+1$ layers $X_{0}, \dots, X_{L}$ such that $X_{0} := \{x_{0}\}$ and $X_{L} := \{x_{L}\}$, with $x_0$ being the initial state of the episode and $x_L$ being the final one, in which the episode ends.
Moreover, by letting $\mathcal{K} := [0 \ldots L-1]$ for ease of notation, $P(x^{\prime} |x, \omega, a) > 0$ only when $x^\prime \in X_{k+1}$ and $x \in X_k$ for some $k \in \mathcal{K}$.\footnote{Notice that the loop-free property is w.l.o.g.~since any episodic MPP with finite horizon $H$ that is \emph{not} loop-free can be cast into a loop-free one by suitably duplicating states for $H$ times, \emph{i.e.}, a state $x \in X$ is mapped to a set of new states $(x,k)$ with $k \in [H]$.}
%
%
%
%

At each episode of an episodic MPP, the sender publicly commits to a \emph{signaling policy} $\phi : X \times \Omega \to \Delta(\mathcal{S})$, which defines a probability distribution over a finite set $\mathcal{S}$ of signals for the receivers for every state $x \in X$ and outcome $\omega \in \Omega$.
For ease of notation, we denote by $\phi(\cdot | x,\omega) \in \Delta(\mathcal{S})$ such probability distributions, with $\phi(s | x,\omega)$ being the probability of sending a signal $s \in \mathcal{S}$ in state $x$ when the realized outcome is $\omega$.
Similarly to one-shot Bayesian persuasion, a myopic receiver acting at state $x \in X$ and receiving signal $s \in \mathcal{S}$ infers a posterior distribution over outcomes and plays a best-response action.
We denote by $b^\phi (s,x) \in A$ the best response played by such a receiver under the signaling policy $\phi$ (assuming ties are broken in favor of the sender).

As customary in Bayesian persuasion settings (see, \emph{e.g.},~\citep{kamenica2011bayesian,arieli2019private}), a revelation-principle-style argument allows to focus w.l.o.g. on signaling policies that are direct and persuasive.
Formally, a signaling policy is \emph{direct} if the set of signals coincides with the set of actions, namely $\mathcal{S} = A$.
Intuitively, in such a case, signals should be interpreted as action recommendations for the receivers.
Moreover, a direct signaling policy is said to be \emph{persuasive} if it incentivizes the receivers to follow recommendations issued by the sender.
Formally, $\phi : X \times \Omega \to \Delta(A)$ is persuasive if the following holds for every state $x \in X$ and action recommendation $a\in A$:
\[
\sum_{\omega \in \Omega} \hspace{-0.5mm} \mu(\omega | x) \phi(a | x, \omega) \hspace{-0.3mm} \big(  r_{R}(x,\omega, a) - r_{R}(x,\omega, b^\phi(a,x)) \big) \hspace{-1mm} \geq \hspace{-0.5mm} 0.
\]
Intuitively, the inequality above states that a receiver acting at state $x$ is better off following the sender's recommendation to play action $a$, since by doing so they get an (unnormalized) expected reward greater than or equal to what they would obtain by playing a best-response action $b^\phi(a,x)$.

\begin{algorithm}[!htp]
	\caption{Sender-Receivers Interaction at $t \in [T]$}
	\label{alg: Sender-Receiver Interaction}
	\begin{algorithmic}[1]
		\State All the rewards $r_{S,t}(x,\omega,a), r_{R,t}(x,\omega,a) $ are sampled
		\State Sender publicly commits to $\phi_t: X\times \Omega \to \Delta(A)$
		\State The state of the MPP is initialized to $x_{0}$
		\For{$k = 0, \ldots,  L-1$}
		\State Sender observes outcome $\omega_{k}\sim\mu (x_k)$
		\State Sender draws recommendation $a_k \sim \phi(\cdot | x_k,\omega_k)$
		\State A \emph{new} Receiver observes $a_k$ and plays it
		\State The MPP evolves to state $x_{k+1}\sim P(\cdot|x_{k},\omega_{k}, a_k)$ 
		\State Sender observes the next state $x_{k+1}$
		\EndFor
		\State Sender observes \emph{feedback} for every $k \in [0 \ldots L-1]$:
		\begin{itemize}[noitemsep,nolistsep]
			\item \emph{full} $\rightarrow r_{S,t}(x_k,\omega_k, a), r_{R,t}(x_k,\omega_k, a) \,\,\,\, \forall a \in A $
			\item \emph{partial} $\rightarrow r_{S,t}(x_k,\omega_k,a_k), r_{R,t}(x_k,\omega_k,a_k)$
		\end{itemize}
		%
	\end{algorithmic}
\end{algorithm}

The interaction between the sender and the stream of receivers at episode $t \in [T]$ is described in Algorithm~\ref{alg: Sender-Receiver Interaction}.
Let us remark that sender and receivers do \emph{not} know anything about the transition function $P$, the prior function $\mu$, and the rewards $r_{S,t}(x,\omega,a), r_{R,t}(x,\omega,a) $ (including their distributions).\footnote{Notice that, since receivers do \emph{not} know anything about their rewards, Algorithm~\ref{alg: Sender-Receiver Interaction} assumes that they always play recommended actions. This is w.l.o.g.~thanks to the fact that our learning algorithms guarantee that the per-round violation of persuasiveness constraints goes to zero as the number of episodes grows. Thus, it is in the receivers' best interest to stick to recommendations.}
%
%
At the end of each episode, the sender gets to know the triplets $(x_k,\omega_k,a_k)$---for all $k \in\mathcal{K}$---that are \emph{visited} during the episode, and an additional \emph{feedback} about rewards.
In this work, we consider two types of feedback.
The first one---called \emph{full} feedback---encompasses all agents' rewards for the pairs $(x_k,\omega_k)$ visited during the episode, \emph{i.e.}, the rewards for all the triplets $(x_k,\omega_k,a)$ for $a \in A$.
The second type---called \emph{partial} feedback---only consists in agents' rewards for the visited triplets $(x_k,\omega_k,a_k)$.\footnote{Notice that, in this work we use the adjective \emph{full} to refer to a type of feedback that is \emph{not} the most informative one. Indeed, a full feedback according to the classical terminology used in online learning~\citep{cesa2006prediction,Orabona} would encompass agents' rewards for all the possible triplets $(x,\omega,a)$, while full feedback in our terminology only consists in the rewards for the triplets with $x = x_k$ and $\omega = \omega_k$ for some $k \in \mathcal{K}$.}

\section{The Learning Problem}

In this section, we formally introduce the learning problem tackled in the rest of the paper.
First, in Section~\ref{sec:prelim_occupancy}, we extend the notion of occupancy measure to MPPs.
Then, in Section~\ref{sec:prelim_problem}, we formally introduce learning objectives.
Finally, in Section~\ref{sec:confidence_short}, we provide some preliminary elements needed by our algorithms, developed in Sections~\ref{sec:full}~and~\ref{sec:partial}.

For reasons of space, the complete proofs of all our results are provided in Appendixes~\ref{app:full_feedback}~and~\ref{app:bandit_feedback}.

\subsection{Occupancy Measures in MPPs}\label{sec:prelim_occupancy}

Next, we extend the well-known notion of \emph{occupancy measure} of an MDP~\citep{rosenberg19a} to MPPs.
Given a transition function $P$, a signaling policy $\phi$, and a prior function $\mu$, the occupancy measure induced by $P$, $\phi$, and $\mu$ is a vector $q^{P,\phi,\mu} \in [0, 1]^{|X \times \Omega \times A\times X|}$ whose entries are specified as follows.
For every $x \in X_k$, $\omega\in\Omega$, $a \in A$, and $x^{\prime} \in X_{k+1}$ with $k \in \mathcal{K}$, it holds:
\begin{equation*}
	q^{P,\phi,\mu}(x, \omega, a,  x^{\prime}) := \mathbb{P}  \Big(  (x_k,\omega_k,a_k, x_{k+1})=  (x,\omega,a,x') \mid P,\phi,\mu \Big),
\end{equation*}
%
%
which is the probability that the next state is $x'$ after playing action $a$ in state $x$ when the realized outcome is $\omega$, under transition function $P$, signaling policy $\phi$, and prior function $\mu$.
%
%
Moreover, for ease of notation, we let:
\begin{align*}
	q^{P,\phi,\mu}(x,\omega,a) & := \sum_{x^\prime\in X_{k+1}}q^{P,\phi,\mu}(x,\omega,a,x^{\prime}), \\ 
	q^{P,\phi,\mu}(x,\omega) & := \sum_{a\in A}q^{P,\phi,\mu}(x,\omega,a), \\
	q^{P,\phi,\mu}(x) & := \sum_{\omega\in \Omega}q^{P,\phi,\mu}(x,\omega).
\end{align*}
The following lemma characterizes the set of \emph{valid} occupancy measures and it is a generalization to the MPP setting of a similar lemma by~\citet{rosenberg19a}.
\begin{lemma}
	A vector $q \in [0, 1]^{|X\times \Omega \times A\times X|}$ is a valid occupancy measure of an MPP if and only if it holds:
	\[
	\begin{cases}
		\specialcell{\circled{\textnormal{1}} \,\,\, \sum\limits_{x \in X_{k}}\sum\limits_{\omega\in \Omega}\sum\limits_{a\in A}\sum\limits_{x^{\prime} \in X_{k+1}} q(x,\omega, a,x^{\prime})=1 \quad\,\,\,  \hfill \forall k \in \mathcal{K} }\\
		\specialcell{\circled{\textnormal{2}} \,\,\, \sum\limits_{x^{\prime}\in X_{k-1}}\sum\limits_{\omega\in \Omega} \sum\limits_{a\in A}q(x^{\prime},\omega,a,x) = q(x) \hfill} \\
		\specialcell{\hfill \forall k \in [1 \ldots L-1], \forall x \in X_{k}  } \\
		\circled{\textnormal{3}} \,\,\, P^{q} = P\\
		\circled{\textnormal{4}} \,\,\, \mu^q = \mu,
	\end{cases}
	\]
	%
	%
	where $P$ is the transition function of the MPP and $\mu$ its prior function, while $P^q$ and $\mu^q$ are the transition and prior functions, respectively, induced by $q$ (see definitions below).
\end{lemma}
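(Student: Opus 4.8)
The plan is to prove the biconditional in two directions, following the standard occupancy-measure characterization for loop-free MDPs but adapted to the MPP structure where the prior function $\mu$ adds an extra stochastic layer between states and actions. To make the statement precise I first need the definitions of the induced quantities $P^q$ and $\mu^q$ that the lemma references; the natural definitions are
\[
P^q(x' \mid x, \omega, a) := \frac{q(x,\omega,a,x')}{q(x,\omega,a)}, \qquad \mu^q(\omega \mid x) := \frac{q(x,\omega)}{q(x)},
\]
whenever the denominators are nonzero (and defined arbitrarily, matching $P$ and $\mu$ respectively, otherwise). With these in hand, conditions \circled{3} and \circled{4} simply say that the conditional transition and outcome probabilities read off from $q$ agree with the true $P$ and $\mu$.

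For the forward direction (valid occupancy $\Rightarrow$ \circled{1}--\circled{4}), I would start from the definition $q^{P,\phi,\mu}(x,\omega,a,x') = \mathbb{P}((x_k,\omega_k,a_k,x_{k+1}) = (x,\omega,a,x') \mid P,\phi,\mu)$ and verify each constraint. Condition \circled{1} is total probability: at each layer $k$ the process occupies exactly one state-outcome-action-successor tuple, so summing the joint probability over all of $X_k \times \Omega \times A \times X_{k+1}$ gives $1$. Condition \circled{2} is the flow-conservation identity: the probability of reaching $x \in X_k$ (the right-hand side $q(x)$) equals the probability of arriving at $x$ from the previous layer, which is exactly the left-hand sum over $x' \in X_{k-1}$, $\omega$, $a$; this follows because every path reaching $x$ passes through a unique predecessor tuple in layer $k-1$. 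Conditions \circled{3} and \circled{4} follow by Bayes' rule from the factorization of the joint probability: the sampling of $\omega_k \sim \mu(\cdot \mid x_k)$ and $x_{k+1} \sim P(\cdot \mid x_k, \omega_k, a_k)$ are independent of how $x_k$ was reached, so conditioning on the appropriate events recovers $\mu$ and $P$.

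For the reverse direction (\circled{1}--\circled{4} $\Rightarrow$ valid), I would construct an explicit signaling policy $\phi$ from $q$ by setting $\phi(a \mid x,\omega) := q(x,\omega,a)/q(x,\omega)$ on the support, and then show by induction on the layer index $k$ that the occupancy $q^{P,\phi,\mu}$ induced by $P$, this $\phi$, and $\mu$ coincides with $q$. The base case is the initial state $x_0$, where \circled{1} pins down the layer-$0$ marginal. For the inductive step, I would use \circled{2} to propagate the state marginal $q(x)$ correctly to the next layer, and then factor $q(x,\omega,a,x') = q(x)\,\mu^q(\omega\mid x)\,\phi(a\mid x,\omega)\,P^q(x'\mid x,\omega,a)$, invoking \circled{3} and \circled{4} to replace $\mu^q,P^q$ by the true $\mu,P$, which is precisely the factorization defining $q^{P,\phi,\mu}$. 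I anticipate the main obstacle to be handling the zero-probability tuples cleanly: where $q(x,\omega)=0$ or $q(x)=0$ the conditional policy and the induced functions $\mu^q,P^q$ are not determined by $q$, so I must argue that these cases contribute nothing to the occupancy (both $q$ and $q^{P,\phi,\mu}$ vanish there) and that conditions \circled{3}--\circled{4} are to be read as holding only on the support of $q$. Carefully stating this support caveat, and making sure the induction's marginal-propagation step via \circled{2} matches the flow induced by the constructed $\phi$, is the delicate part; the remaining algebra is the routine telescoping of the per-layer factorization.
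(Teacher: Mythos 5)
Your proof is correct, and it is the standard argument: the paper itself never writes out a proof of this lemma---it is stated as a direct generalization of the occupancy-measure characterization of \citet{rosenberg19a} for loop-free MDPs---so your write-up simply supplies the argument that generalization implicitly relies on (forward direction by total probability, flow conservation, and the Markov factorization; reverse direction by constructing $\phi^q$ and inducting over layers using condition \circled{2} to propagate marginals and conditions \circled{3}--\circled{4} to replace $P^q,\mu^q$ with $P,\mu$). Your explicit handling of the zero-support tuples, where $P^q$ and $\mu^q$ are not determined by $q$ and both occupancies must be shown to vanish, addresses the one genuine subtlety that the paper (and the cited MDP analogue) glosses over.
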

%

As it is the case in standard MDPs, a valid occupancy measure $q \in [0, 1]^{|X\times \Omega \times A\times X|}$ induces a transition function $P^{q}$ and a signaling policy $\phi^{q}$, defined as follows:
\begin{equation*}
	P^{q}(x^{\prime}|x,\omega,a) \hspace{-0.5mm} :=  \frac{q(x,\omega,a,x^{\prime})}{q(x, \omega, a)}  , \,\,\,\,\,\,\,\ \phi^{q}(a|x,\omega) \hspace{-0.5mm}:=\frac{q(x,\omega,a)}{q(x,\omega)} .
\end{equation*}
Thus, using valid occupancy measures is \emph{equivalent} to using signaling policies. 
Moreover, in an MPP, a valid occupancy measure also induces a prior function $\mu^q$ defined so that:
\begin{equation*}
	\mu^q(\omega | x) :=\frac{q(x,\omega)}{q(x)} .
\end{equation*}
%
%
In the following, we denote by $\mathcal{Q} \subseteq [0, 1]^{|X\times \Omega \times A\times X|}$ the set of all the valid occupancy measures of an MPP.
%


\subsection{Learning Objectives}\label{sec:prelim_problem}

%
Our goal is to design learning algorithms for the sender in an episodic MPP.
In particular, we would like algorithms that prescribe sequences of signaling policies $\phi_t$ which maximize sender's cumulative reward over the $T$ episodes, while at the same time guaranteeing that the violation of persuasiveness constraints is bounded.
Notice that, differently from~\citet{wu2022sequential}, we do \emph{not} aim at designing learning algorithms whose policies $\phi_t$ are persuasive at every $t$ with high probability, since this is unattainable in our setting in which the sender does \emph{not} know anything about receivers' rewards (as implied by our Theorem~\ref{thm:lowerbound} in Section~\ref{sec:partial}).
Thus, in this paper we pursue a different objective, which we formally describe in the following.

\paragraph{Baseline}
First, we introduce the baseline used to evaluate sender's performances.
This is defined as the value of the optimization problem faced by the sender in the \emph{offline} version of the MPP.
%
%
%
Such a problem is concerned with expectations of the stochastic quantities in the episodic MPP.
By exploiting occupancy measures, the problem can be formulated as the following linear program:
\begin{subequations}\label{lp:offline_opt}
\begin{align}
	\max_{q \in \mathcal{Q}} & \quad \sum_{x \in X} \sum_{\omega \in \Omega} \sum_{a \in A} q(x,\omega,a) r_S(x,\omega,a) \quad \text{s.t.} \quad \phantom{a} \\
	&  \sum_{\omega \in \Omega}  q(x,\omega,a) \Big(  r_R(x,\omega,a)  -  r_R(x,\omega,a') \Big)  \geq  0 \nonumber\\
	& \specialcell{\hfill \forall x \in X, \forall \omega \in \Omega, \forall a \in A, \forall a' \neq a\in A.}
\end{align}
\end{subequations}
Intuitively, Problem~\eqref{lp:offline_opt} computes an occupancy measure (or, equivalently, signaling policy) maximizing sender's expected reward subject to persuasiveness constraints.
By letting $r_S \in [0,1]^{|X \times \Omega \times A |}$ be the vector whose entries are the mean values $r_S(x,\omega,a)$ of sender's rewards, our baseline is defined as $\textsc{OPT} := r_S^\top q^*$, where $q^* \in \mathcal{Q}$ is an optimal solution to Problem~\eqref{lp:offline_opt}.
In the following, we also denote by $\phi^*$ an optimal signaling policy, namely $\phi^* := \phi^{q^*}$.
%


\paragraph{Metrics}
We evaluate the performances of learning algorithms by means of two distinct metrics.
The first one is the \emph{(cumulative) regret} $R_T$, which accounts for the difference between the cumulative sender's expected reward obtained by always playing $\phi^*$ and that achieved by using the signaling policies $\phi_t$ prescribed by the algorithm.
Formally:
\begin{equation*}
	R_T := T \cdot \textsc{OPT} - \sum_{t\in[T]} r_{S}^\top q_t = \sum_{t\in[T]} r_{S}^\top \big( q^* - q_t \big),
\end{equation*}
where we let $q_t := q^{P, \phi_t, \mu}$ be the occupancy induced by $\phi_t$.
The second metric used to evaluate learning algorithms is the \emph{(cumulative) violation} $V_T$, which is formally defined as:
\begin{align*}
	V_T := \sum_{t\in[T]} \sum_{ x \in X} \sum_{\omega \in \Omega} \sum_{ a \in A}  q_t(x, \omega,a) \big (r_{R}(x,\omega,b^\phi(a,x))-r_{R}(x,\omega,a) \big).
\end{align*}
%
Intuitively, $V_T$ encodes the overall expected loss in persuasiveness incurred by the algorithm over the $T$ episodes.

In the rest of the paper, our goal is to develop learning algorithms that prescribe signaling policies $\phi_t$ which guaranteeing that both $R_T$ and $V_T$ grow sublinearly in the number of episodes $T$, namely $R_T=o(T)$ and $V_T=o(T)$.

\subsection{Confidence Bounds}\label{sec:confidence_short}

Before delving in algorithm design, we introduce estimators and confidence bounds for the stochastic quantities involved in an MPP, namely, transitions, priors, sender's rewards, and receivers' ones.
As we show in the following sections, these are extensively used by our learning algorithms.

In the following, we let $N_t ( x, \omega, a, x^{\prime} ) \in \mathbb{N}$ be the number of episodes up to episode $t \in [T]$ (this excluded) in which the tuple $(x, \omega, a, x^{\prime} )$ is visited.
Formally, it holds $N_t ( x, \omega, a, x^{\prime} ) := \sum_{\tau \in [t-1]} \mathbbm{1}_\tau \{ x,\omega,a,x' \}$, where the indicator function is equal to $1$ if and only if the tuple is visited at episode $\tau$.
Similarly, with a slight abuse of notation, we define the counters $N_t ( x, \omega, a, )$, $N_t ( x, \omega )$, and $N_t ( x)$, in terms of their respective indicator functions $\mathbbm{1}_\tau \{ x,\omega,a \}$, $\mathbbm{1}_\tau \{ x,\omega \}$, and $\mathbbm{1}_\tau \{ x \}$, which are equal to $1$ if and only if $(x,\omega,a)$, $(x,\omega)$, and $x$, respectively, are visited at $\tau$.

Next, we define the estimators employed by our algorithms.
At the beginning of episode $t \in [T]$, the estimated probability of going from $x \in X$ to $x' \in X$ by playing $a \in A$, when the outcome realized in state $x$ is $\omega \in \Omega$, is as follows:
\[
\overline{P}_t\left(x^{\prime} | x, \omega, a\right):=\frac{N_t ( x, \omega, a, x^{\prime})}{\max \left\{1, N_t(x, \omega, a)\right\}}.
\]
Moreover, for every $x \in X$ and $\omega \in \Omega$, the estimated probability of sampling outcome $\omega$ from the prior distribution at state $x$ is computed as follows: 
\begin{equation*}
	\overline{\mu}_{t}(\omega|x) := \frac{N_t(x,\omega)}{\max\{1,N_t(x)\}}.
\end{equation*}
Finally, for every $x \in X$, $\omega \in \Omega$, and $a \in A$, the estimated sender's and receivers' rewards are defined as follows:
\begin{equation*}
\overline{r}_{S, t}(x,\omega, a) := \frac{\sum_{\tau\in[t-1]}r_{S,\tau}(x,\omega, a)\mathbbm{1}_{\tau}\{x,\omega,a\}}{{\max\{1,N_t(x,\omega, a)\}}} \,\,\,\,\,\,\
\overline{r}_{R, t}(x,\omega, a)  := \frac{\sum_{\tau\in[t-1]}r_{R,\tau}(x,\omega, a)\mathbbm{1}_{\tau}\{x,\omega,a\}}{\max\{1,N_t(x,\omega, a)\}}.
\end{equation*}

For reasons of space, we refer to Appendix~\ref{app:confidence} for the definitions of the confidence bounds employed by our algorithms.
For the transition function $P$, at each episode $t \in [T]$, for every $x \in X$, $\omega \in \Omega$, and $a \in A$, we provide a confidence bound $\epsilon_t(x,\omega,a)$ for the probability distribution over next states associated with the triplet $(x,\omega,a)$, where the distance between distributions is expressed in $\| \cdot\|_1$-norm (see Lemma~\ref{lem:confidenceset}).
Similarly, we provide a confidence bound $\zeta_{t}(x)$ in terms of $\| \cdot\|_1$-norm for the prior distribution $\mu(x)$ at each state $x \in X$ (see Lemma~\ref{lem:prior_concentration}).
Moreover, for every $x \in X$, $\omega \in \Omega$, and $a \in A$, we provide confidence bounds $\xi_{S,t}(x,\omega, a)$ and $\xi_{R,t}(x,\omega, a)$ for sender's and receivers' rewards, respectively, associated with the triplet $(x,\omega,a)$ (see Lemmas~\ref{lem:sender_rewards_concentration}~and~\ref{lem:receivers_rewards_concentration} for the full feedback case, while Lemmas~\ref{lem:sender_rewards_concentration_bandit}~and~\ref{lem:receivers_rewards_concentration_bandit} for the partial feedback one).

In conclusion, for ease of presentation, for a confidence parameter $\delta \in (0,1)$, we refer to the event in which all the confidence bounds hold---called \emph{clean event}---as $\mathcal{E}(\delta)$.
By combining all the lemmas in Appendix~\ref{app:confidence}, $\mathcal{E}(\delta)$ holds with probability at least $1-4\delta$ (by applying a union bound).
\section{The Full Feedback Case}\label{sec:full}

We first address settings with full feedback, as a warm up step towards the analysis of those with partial feedback. 
%

\subsection{The \textnormal{\texttt{OPPS}} Algorithm with Full Feedback}

We propose an algorithm called Optimisitc Persuasive Policy Search (\texttt{OPPS}).
At each episode, the algorithm solves a variation of the offline optimization problem (Problem~\eqref{lp:offline_opt}), called \texttt{Opt-Opt}, which is obtained by substituting mean values with upper/lower confidence bounds.
In particular, \texttt{Opt-Opt} is \emph{optimistic} with respect to \emph{both} sender's rewards and persuasiveness constraints satisfaction.
For reasons of space, we defer the extensive definition of \texttt{Opt-Opt} to Problem~\eqref{prob:double_opt} in Appendix~\ref{app:additional}.
Crucially, by using occupancy measures, \texttt{Opt-Opt} can be formulated as a \emph{linear program}, and, thus, it can be solved efficiently.
Notice that, since confidence bounds for $P$ and $\mu$ are expressed in terms of $|| \cdot ||_1$-norm, in order to formulate \texttt{Opt-Opt} as a linear program we need to introduce some additional variables and linear constraints, as described in detail in Appendix~\ref{app:additional}.

\begin{algorithm}[!htp]
	\caption{Optimistic Persuasive Policy Search \emph{(full)}}
	\label{alg: full_feedback}
	\begin{algorithmic}[1]
		\Require{$X$, $A$, $T$, confidence parameter $\delta \in (0,1)$} 
		\State Initialize all estimators to $0$ and all bounds to $+\infty$ \label{alg2:line1}
		\For{$t=1, \ldots, T$}
		\State Update all estimators $\overline{P}_t, \overline{\mu}_t,  \overline{r}_{S,t}, \overline{r}_{R,t}$ and bounds $\epsilon_{t}, \zeta_t, \xi_{S,t}, \xi_{R,t}$ given new observations\label{alg2:line3}
		\State $\widehat q_t \gets $ Solve \texttt{Opt-Opt} (Problem~\eqref{prob:double_opt}, Appendix~\ref{app:additional}) \label{alg2:line4}
		\State $\phi_{t}\gets \phi^{\widehat q_t}$
		\State Run Algorithm~\ref{alg: Sender-Receiver Interaction} by committing to $\phi_t$ \label{alg2:line5}
		\State Observe \emph{full} feedback from Algorithm~\ref{alg: Sender-Receiver Interaction} \label{alg2:line6}
		\EndFor
		%
	\end{algorithmic}
\end{algorithm}

Algorithm~\ref{alg: full_feedback} provides the pseudocode of \texttt{OPPS} with \emph{full} feedback.
At each $t \in [T]$, the algorithm first updates all the estimators and confidence bounds according to the feedback received in previous episodes (Line~\ref{alg2:line3}).
Then, it commits to the signaling policy $\phi_t$ induced by an optimal solution $\widehat q_t$ to \texttt{Opt-Opt}, computed in Line~\ref{alg2:line4}.
Notice that, the occupancy measure $q_t$ resulting from committing to $\phi_t$ (and used in the definitions of $R_T$ and $V_T$) is in general different from $\widehat q_t$, as the former is defined in terms of the true (and unknown) transition and prior functions, namely $P$ and $\mu$.


\subsection{Algorithm Analysis with Full Feedback}

Next, we prove the guarantees of \texttt{OPPS} with \emph{full} feedback.



The first crucial component that we need for our analysis is the following lemma, which shows that \texttt{Opt-Opt} admits a feasible solution at every episode with high probability.
\begin{restatable}{lemma}{AlwaysFeasible}
	\label{lem:feasibility}
	For any $\delta \in (0,1)$, under the clean event $\mathcal{E}(\delta)$, \textnormal{\texttt{Opt-Opt}} admits a feasible solution for every $t\in[T]$. 
\end{restatable}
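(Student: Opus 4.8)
The plan is to prove feasibility constructively, by exhibiting an explicit feasible solution to \texttt{Opt-Opt}. The natural candidate is $q^*$, the occupancy measure of the true optimal policy $\phi^*$ induced by the \emph{true} transition and prior functions, i.e.\ $q^* = q^{P,\phi^*,\mu}$. By construction $q^*$ is an optimal solution to the offline program~\eqref{lp:offline_opt}, hence it is a valid occupancy measure satisfying the true persuasiveness constraints. It then remains to verify that, under the clean event $\mathcal{E}(\delta)$, this same vector satisfies every constraint of the \emph{optimistic} program \texttt{Opt-Opt} (Problem~\eqref{prob:double_opt}), whose constraints differ from those of~\eqref{lp:offline_opt} only in that the exact transition/prior identities are relaxed to $\|\cdot\|_1$-confidence balls and the receivers' rewards are replaced by their most favorable confidence-bound values.

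First I would check the occupancy-measure constraints. The normalization and flow-conservation constraints depend only on $q$ itself, not on which transition or prior induces it, so they hold for $q^*$ automatically. The transition and prior identities $P^q=P$, $\mu^q=\mu$ are relaxed in \texttt{Opt-Opt} to the requirement that $P^{q}$ and $\mu^{q}$ lie within the $\|\cdot\|_1$-balls of radii $\epsilon_t$ and $\zeta_t$ around the estimates $\overline{P}_t$ and $\overline{\mu}_t$. Since $P^{q^*}=P$ and $\mu^{q^*}=\mu$, and since under $\mathcal{E}(\delta)$ Lemmas~\ref{lem:confidenceset} and~\ref{lem:prior_concentration} give $\|P(\cdot|x,\omega,a)-\overline{P}_t(\cdot|x,\omega,a)\|_1\le\epsilon_t(x,\omega,a)$ and $\|\mu(x)-\overline{\mu}_t(x)\|_1\le\zeta_t(x)$, the true functions serve as admissible witnesses and $q^*$ satisfies the relaxed constraints.

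Next I would handle the optimistic persuasiveness constraints. The key observation is that \texttt{Opt-Opt} replaces each true difference $r_R(x,\omega,a)-r_R(x,\omega,a')$ by the \emph{largest} value consistent with the confidence intervals, namely $\big(\overline{r}_{R,t}(x,\omega,a)+\xi_{R,t}(x,\omega,a)\big)-\big(\overline{r}_{R,t}(x,\omega,a')-\xi_{R,t}(x,\omega,a')\big)$. Under $\mathcal{E}(\delta)$, the receiver-reward bound of Lemma~\ref{lem:receivers_rewards_concentration} yields $r_R(x,\omega,a)\le\overline{r}_{R,t}(x,\omega,a)+\xi_{R,t}(x,\omega,a)$ and $r_R(x,\omega,a')\ge\overline{r}_{R,t}(x,\omega,a')-\xi_{R,t}(x,\omega,a')$, so the optimistic difference pointwise dominates the true one. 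Multiplying by the nonnegative weights $q^*(x,\omega,a)$ and summing over $\omega$ can only increase the left-hand side, so each optimistic persuasiveness constraint is implied by the corresponding true one, which $q^*$ satisfies. Hence $q^*$ is feasible for \texttt{Opt-Opt}.

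The conceptual content is that, on the clean event, \texttt{Opt-Opt} is a \emph{relaxation} of~\eqref{lp:offline_opt}: optimism on the transition/prior enlarges the decision polytope so as to contain $q^*$, while optimism on the receivers' rewards only loosens the persuasiveness constraints. The one point requiring care---and the main, if modest, obstacle---is checking that the linear reformulation of the $\ell_1$-confidence constraints via the auxiliary variables (introduced to keep \texttt{Opt-Opt} an LP) is satisfiable by the witness $(q^*,P,\mu)$. This reduces to the feasibility of those auxiliary variables given that $P$ and $\mu$ fall inside the respective $\ell_1$-balls, which is immediate once the confidence-bound inequalities above are in hand.
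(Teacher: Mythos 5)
Your proof is correct and follows essentially the same route as the paper's: exhibit an occupancy measure satisfying the true persuasiveness constraints, note that under $\mathcal{E}(\delta)$ the true $P$ and $\mu$ lie inside the $\|\cdot\|_1$-confidence balls (so the relaxed transition/prior constraints and their auxiliary-variable linearizations are satisfiable), and observe that the optimistic receiver-reward differences pointwise dominate the true ones, so the optimistic persuasiveness constraints are implied by the exact ones. The only difference is the choice of witness: you use the offline optimum $q^*$, whose existence presupposes feasibility of Problem~\eqref{lp:offline_opt}, whereas the paper instantiates its generic incentive-compatible witness $q^\diamond$ with the occupancy measure of the fully-revealing signaling policy, which establishes that feasibility unconditionally---a minor point, since the paper's setup already assumes $q^*$ exists when defining $\textsc{OPT}$.
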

Intuitively, Lemma~\ref{lem:feasibility} is proved by showing (a) that Problem~\ref{lp:offline_opt} always admits a feasible solution, which is the occupancy measure $q^\diamond$ induced by the signaling policy that fully reveals outcomes to the receiver, and (b) that $q^\diamond$ is a feasible solution to \texttt{Opt-Opt} at every episode, under $\mathcal{E}(\delta)$.
Notice that point~(b) holds thanks to the fact that \texttt{Opt-Opt} optimistically accounts for persuasiveness constraints satisfaction, by using suitable upper/lower confidence bounds.

The second crucial component of our analysis is a relation between the occupancy measures $\widehat q_t$ computed by the \texttt{OPPS} algorithm and the occupancy measures $q_t$ that actually result from committing to $\phi_t$ under the true transitions and priors. 
%
%
This is formally stated by the following lemma.
\begin{restatable}{lemma}{OneNormOccupancy}
	\label{lem:transition}
	For $\delta\in(0,1)$, under the clean event $\mathcal{E}(\delta)$, with probability at least $1-2\delta$:
	\begin{equation*}
	\sum_{t\in[T]} \hspace{-0.5mm}\|q_t - \widehat{q}_t \|_1 \leq\mathcal{O} \hspace{-0.5mm}\left(L^2|X|\sqrt{T|A||\Omega|\ln \left(\nicefrac{T|X||\Omega||A|}{\delta}\right)}\right) \hspace{-0.5mm}.
	\end{equation*}
	%
\end{restatable}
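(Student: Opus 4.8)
The plan is to exploit the fact that $q_t$ and $\widehat q_t$ are the occupancy measures of the \emph{same} signaling policy $\phi_t = \phi^{\widehat q_t}$, and differ only because $q_t = q^{P,\phi_t,\mu}$ is built from the true transition and prior, whereas $\widehat q_t = q^{\widehat P_t, \phi_t, \widehat\mu_t}$ is built from the transition $\widehat P_t := P^{\widehat q_t}$ and prior $\widehat\mu_t := \mu^{\widehat q_t}$ induced by the solution of \texttt{Opt-Opt}. Since \texttt{Opt-Opt} optimizes over occupancy measures whose induced transitions and priors are constrained to lie in the confidence sets, both $\widehat P_t(\cdot\mid x,\omega,a)$ and the true $P(\cdot\mid x,\omega,a)$ sit within $\epsilon_t(x,\omega,a)$ of the empirical $\overline P_t$ in $\|\cdot\|_1$ under the clean event $\mathcal E(\delta)$ (Lemma~\ref{lem:confidenceset}); the triangle inequality then gives $\|\widehat P_t(\cdot\mid x,\omega,a) - P(\cdot\mid x,\omega,a)\|_1 \le 2\epsilon_t(x,\omega,a)$, and likewise $\|\widehat\mu_t(\cdot\mid x) - \mu(\cdot\mid x)\|_1 \le 2\zeta_t(x)$ via Lemma~\ref{lem:prior_concentration}. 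This turns the claim into a \emph{perturbation} statement: how far can an occupancy measure move when its transition and prior kernels are perturbed by these amounts while the policy is held fixed.

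First I would establish a per-episode bound through a layer-by-layer argument in the spirit of~\citet{rosenberg19a}. I would separate the two sources of error by inserting the intermediate occupancy $q^{P,\phi_t,\widehat\mu_t}$, writing
\[
\|q_t-\widehat q_t\|_1 \le \big\|q^{P,\phi_t,\mu}-q^{P,\phi_t,\widehat\mu_t}\big\|_1 + \big\|q^{P,\phi_t,\widehat\mu_t}-q^{\widehat P_t,\phi_t,\widehat\mu_t}\big\|_1,
\]
so that the first term isolates the prior perturbation and the second the transition perturbation. For each term I would induct over the layers $X_0,\dots,X_L$ on the discrepancy between the induced state-visitation probabilities, using that the discrepancy at layer $k+1$ is controlled by that at layer $k$ plus the local kernel error weighted by the occupancy mass on layer $k$. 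Because an error incurred at layer $k$ is propagated forward into the visitation of every subsequent layer, each local error is amplified by a factor $\mathcal O(L)$ once the $\|\cdot\|_1$-discrepancy is summed across all tuples, yielding a per-episode bound of the schematic form
\[
\|q_t-\widehat q_t\|_1 \le \mathcal O(L)\sum_{k\in\mathcal K}\sum_{x\in X_k,\,\omega,\,a} \widehat q_t(x,\omega,a)\,\epsilon_t(x,\omega,a) + \mathcal O(L)\sum_{k\in\mathcal K}\sum_{x\in X_k}\widehat q_t(x)\,\zeta_t(x).
\]

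Next I would sum over $t\in[T]$ and substitute the explicit radii $\epsilon_t(x,\omega,a)=\mathcal O(\sqrt{|X|\ln(\cdot)/\max\{1,N_t(x,\omega,a)\}})$ and $\zeta_t(x)=\mathcal O(\sqrt{|\Omega|\ln(\cdot)/\max\{1,N_t(x)\}})$. I would then replace the computed weights $\widehat q_t$ by the true expected visitation $q_t$ (paying $\sum_t\|q_t-\widehat q_t\|_1$ times a uniformly bounded radius, which stays lower order), and invoke Azuma--Hoeffding to replace the expected visitation $\sum_t q_t(x,\omega,a)$ by the realized counts $N_{T+1}(x,\omega,a)$; it is precisely these two martingale-concentration steps—one at the triplet level and one at the state level—that account for the extra factor $1-2\delta$ in the statement, on top of conditioning on $\mathcal E(\delta)$. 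Finally, the pigeonhole identity $\sum_t 1/\sqrt{\max\{1,N_t(\cdot)\}} = \mathcal O(\sqrt{N_{T+1}(\cdot)})$ together with Cauchy--Schwarz over the $(x,\omega,a)$ triplets, using $\sum_{x,\omega,a}N_{T+1}(x,\omega,a)\le LT$, collapses the sums into the claimed $\mathcal O\big(L^2|X|\sqrt{T|A||\Omega|\ln(\nicefrac{T|X||\Omega||A|}{\delta})}\big)$ bound.

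The hard part will be the self-referential conversion from $\widehat q_t$-weighted sums to count-weighted sums: the quantity $\sum_t\|q_t-\widehat q_t\|_1$ reappears as the remainder when swapping $\widehat q_t$ for $q_t$, so I must verify that the confidence radii multiplying it are small enough (or that only finitely many early episodes contribute at full scale) for this remainder to remain lower order and not absorb the leading term. A second, genuinely new complication relative to the standard MDP occupancy-measure machinery is the \emph{joint} presence of the prior perturbation and the transition perturbation, which both propagate across layers; keeping their contributions cleanly separated through the intermediate occupancy, while ensuring the martingale arguments remain valid conditionally on the clean event, is where the bookkeeping must be carried out carefully.
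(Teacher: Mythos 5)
Your overall route is the same as the paper's: decompose $\|q_t-\widehat q_t\|_1$ through an intermediate occupancy that isolates the prior perturbation from the transition perturbation, run a layer-by-layer induction, then conclude with Azuma--Hoeffding (two applications, which correctly accounts for the extra $1-2\delta$), the pigeonhole identity, and Cauchy--Schwarz. The choice of intermediate occupancy ($q^{P,\phi_t,\widehat\mu_t}$ versus the paper's $q^{P_t,\phi_t,\mu}$) is immaterial. However, there is a genuine gap exactly at the step you flagged: your per-episode bound weights the confidence radii by the \emph{computed} occupancy $\widehat q_t$, and your proposed repair---swap $\widehat q_t$ for $q_t$ at a cost of $\sum_t\|q_t-\widehat q_t\|_1$ times a ``uniformly bounded radius,'' claimed to be lower order---does not go through. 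The radii are only uniformly bounded by a constant (one must cap $\epsilon_t,\zeta_t$ by the trivial bound $2$ on an $\ell_1$-distance of distributions), so after the $\mathcal{O}(L)$ layer amplification the remainder is $\mathcal{O}(L)\sum_t\|q_t-\widehat q_t\|_1$, i.e., the target quantity itself with a coefficient exceeding one; the resulting inequality $S\le B+cLS$ is vacuous. Nor does the parenthetical fallback (``only finitely many early episodes contribute at full scale'') hold: the optimistic solution $\widehat q_t$ can persistently place mass on a triplet whose \emph{true} visitation probability is tiny, so its counter never grows, its radius stays at the cap forever, and the $\widehat q_t$-weighted sum over such triplets can be $\Theta(T)$. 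Moreover, with $\widehat q_t$ weights you cannot even run the subsequent Azuma step, since $\widehat q_t(x,\omega,a)$ is not the conditional law of the visit indicator.

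The fix is structural, not a smallness argument, and it is what the paper does: arrange the telescoping inside the induction so that the kernel-error terms are weighted by the true occupancy (write $\widehat P\widehat q-Pq=\widehat P(\widehat q-q)+(\widehat P-P)q$ rather than $P(\widehat q-q)+(\widehat P-P)\widehat q$). With your choice of intermediate occupancy this yields the prior radii weighted by $q^{P,\phi_t,\mu}(x)=q_t(x)$ directly, and the transition radii weighted by the intermediate occupancy $q^{P,\phi_t,\widehat\mu_t}$; the only swap then needed is intermediate-to-true, whose remainder is (twice) the \emph{already bounded} prior-error term rather than the full $\|q_t-\widehat q_t\|_1$, so the self-reference resolves benignly. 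After that correction, your Azuma/pigeonhole/Cauchy--Schwarz endgame and the probability bookkeeping are exactly the paper's.
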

Intuitively, Lemma~\ref{lem:transition} is proved by an inductive argument that relates the uncertainty associated with both the transition and the prior functions to the $\| \cdot \|_1$-norm difference between $q_t$ and $\widehat q_t$ cumulated over the episodes.
%

Lemmas~\ref{lem:feasibility}~and~\ref{lem:transition} pave the way to our two main theorems for the full feedback setting.
The first theorem bounds the regret $R_T$ achieved by \texttt{OPPS}, while the second one bounds its cumulative violation $V_T$.
Formally:
%
%
\begin{restatable}{theorem}{fullregret}
	Given any $\delta \in (0,1)$, with probability at least $1-7\delta$, Algorithm~\ref{alg: full_feedback} attains regret:
	\begin{equation*}
		R_T \leq \widetilde{\mathcal{O}}\left(L^2|X|\sqrt{T|A||\Omega|\ln \left(\frac{1}{\delta}\right)}\right).
	\end{equation*}
\end{restatable}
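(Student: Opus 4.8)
The plan is to insert the optimistic occupancy $\widehat{q}_t$ returned by \texttt{Opt-Opt} and split the regret into an \emph{optimism} term and an \emph{estimation} term:
\[
R_T = \sum_{t\in[T]} r_S^\top\big(q^*-\widehat{q}_t\big) + \sum_{t\in[T]} r_S^\top\big(\widehat{q}_t-q_t\big).
\]
The estimation term is the one that will dominate the final rate, and it is handled directly by Lemma~\ref{lem:transition}: since $r_S\in[0,1]^{|X\times\Omega\times A|}$, Hölder's inequality (collapsing the $x'$ coordinate by the triangle inequality) gives $r_S^\top(\widehat{q}_t-q_t)\le\|\widehat{q}_t-q_t\|_1$, so that summing over $t$ and invoking Lemma~\ref{lem:transition} yields a bound of order $\mathcal{O}\!\left(L^2|X|\sqrt{T|A||\Omega|\ln(\nicefrac{T|X||\Omega||A|}{\delta})}\right)$, valid with probability at least $1-2\delta$ under the clean event $\mathcal{E}(\delta)$.

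For the optimism term the key sub-claim is that, under $\mathcal{E}(\delta)$, the true optimum $q^*$ is itself feasible for \texttt{Opt-Opt}. Indeed, the true $P$ and $\mu$ lie in their confidence sets, so $q^*$ meets \texttt{Opt-Opt}'s occupancy constraints; moreover, since \texttt{Opt-Opt} relaxes persuasiveness optimistically (replacing the true receiver reward gaps by their upper confidence bounds), the genuine persuasiveness of $q^*$ implies its relaxed persuasiveness. Combining feasibility of $q^*$ with the optimality of $\widehat{q}_t$ and the fact that \texttt{Opt-Opt} maximizes the optimistic reward $\widehat{r}_{S,t}:=\overline{r}_{S,t}+\xi_{S,t}\ge r_S$ (componentwise, on $\mathcal{E}(\delta)$), I get the chain $\widehat{r}_{S,t}^\top\widehat{q}_t\ge\widehat{r}_{S,t}^\top q^*\ge r_S^\top q^*=\textsc{OPT}$. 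Hence each summand obeys $\textsc{OPT}-r_S^\top\widehat{q}_t\le(\widehat{r}_{S,t}-r_S)^\top\widehat{q}_t\le 2\,\xi_{S,t}^\top\widehat{q}_t$, reducing the optimism term to a confidence-radius sum.

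It then remains to bound $\sum_t \xi_{S,t}^\top\widehat{q}_t$, which I would split as $\sum_t \xi_{S,t}^\top(\widehat{q}_t-q_t)+\sum_t \xi_{S,t}^\top q_t$. The first piece is again absorbed by Lemma~\ref{lem:transition} (since $\xi_{S,t}\le 1$). For the second piece I would use that the full-feedback confidence radius scales as $\xi_{S,t}(x,\omega,a)=\widetilde{\mathcal{O}}(1/\sqrt{\max\{1,N_t(x,\omega)\}})$; an Azuma martingale argument replaces the expected occupancy $q_t$ by realized visit counts at the cost of a lower-order $\widetilde{\mathcal{O}}(\sqrt{T})$ term, after which $\sum_t 1/\sqrt{N_t(x,\omega)}=\mathcal{O}(\sqrt{N_{T+1}(x,\omega)})$ together with Cauchy--Schwarz and $\sum_{x,\omega}N_{T+1}(x,\omega)=LT$ gives $\widetilde{\mathcal{O}}(\sqrt{LT|X||\Omega|})$, which is of lower order than the estimation term. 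Accounting for the failure probabilities — the $1-4\delta$ clean event, the additional $2\delta$ from Lemma~\ref{lem:transition}, and the $\delta$ from the Azuma step — produces the claimed guarantee with probability at least $1-7\delta$.

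I expect the main obstacle to be the optimism step in the presence of estimated persuasiveness constraints: unlike standard online-MDP regret analysis, I must certify that the \emph{constrained} optimum $q^*$ survives \texttt{Opt-Opt}'s optimistic relaxation so that the inequality $\widehat{r}_{S,t}^\top\widehat{q}_t\ge\textsc{OPT}$ is legitimate, while the remaining difficulty is the careful conversion of the confidence sums over $\widehat{q}_t$ into sums over realized counts, where Lemma~\ref{lem:transition} is what lets me transfer from the optimistic occupancies to the true ones.
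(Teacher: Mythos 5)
Your proposal is correct and follows essentially the same route as the paper's proof: the same decomposition $R_T=\sum_t r_S^\top(q^*-\widehat{q}_t)+\sum_t r_S^\top(\widehat{q}_t-q_t)$, the same optimism chain $(r_S+2\xi_{S,t})^\top\widehat{q}_t\geq(\overline{r}_{S,t}+\xi_{S,t})^\top\widehat{q}_t\geq(\overline{r}_{S,t}+\xi_{S,t})^\top q^*\geq r_S^\top q^*$ justified by feasibility of $q^*$ in \texttt{Opt-Opt} under the clean event, the same split of $\sum_t\xi_{S,t}^\top\widehat{q}_t$ handled via H\"older and Lemma~\ref{lem:transition}, and the same probability accounting yielding $1-7\delta$. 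The only cosmetic difference is that you re-derive the visit-count concentration bound (Azuma, pigeonhole, Cauchy--Schwarz) inline, whereas the paper invokes it as Lemma~\ref{lem:concentration}; the argument is identical and both yield a lower-order term.
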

%
%
%
\begin{restatable}{theorem}{fullviolation}
	Given any $\delta \in (0,1)$, with probability at least $1-7\delta$, Algorithm~\ref{alg: full_feedback} attains cumulative violation:
	\begin{equation*}
		V_T \leq \widetilde{\mathcal{O}}\left(L^2|X|\sqrt{T|A||\Omega|\ln \left(\frac{1}{\delta}\right)}\right).
	\end{equation*}
\end{restatable}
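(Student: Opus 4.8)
The plan is to bound the cumulative violation $V_T$ by decomposing it along the same lines as the regret, exploiting the fact that \texttt{Opt-Opt} is \emph{optimistic} with respect to persuasiveness constraints. Recall that $V_T$ sums, over all episodes and all triplets $(x,\omega,a)$, the quantity $q_t(x,\omega,a)\big(r_R(x,\omega,b^\phi(a,x))-r_R(x,\omega,a)\big)$, which measures how far the \emph{true} occupancy $q_t$ is from satisfying persuasiveness under the \emph{true} receiver rewards $r_R$. The key idea is that \texttt{Opt-Opt} enforces persuasiveness using optimistic (lower-confidence) estimates of the receiver-reward differences, so the computed occupancy $\widehat{q}_t$ satisfies the estimated constraints, and we pay only (i) the gap between $\widehat{q}_t$ and $q_t$, and (ii) the gap between the estimated reward differences and the true ones.

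First I would fix the clean event $\mathcal{E}(\delta)$ and, via Lemma~\ref{lem:feasibility}, guarantee that $\widehat{q}_t$ is well-defined at every $t$, with $\widehat{q}_t$ satisfying the optimistic persuasiveness constraints of \texttt{Opt-Opt}. Next I would rewrite each per-episode violation term by inserting and subtracting the quantities evaluated at $\widehat{q}_t$ and at the estimated rewards $\overline{r}_{R,t}$. Concretely, the true violation at $(x,\omega,a)$ can be split into a term comparing $q_t$ with $\widehat{q}_t$ (controlled in $\|\cdot\|_1$-norm by Lemma~\ref{lem:transition}, since rewards lie in $[0,1]$), plus a term in which $\widehat{q}_t$ is held fixed but the receiver rewards are replaced by their estimates. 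For this second term, the optimistic constraints in \texttt{Opt-Opt} ensure the estimated version is nonpositive (the receiver is incentivized under the lower-confidence reward differences), so the only residual is the estimation error between $r_R$ and $\overline{r}_{R,t}$, which is bounded by the confidence widths $\xi_{R,t}(x,\omega,a)$ from Lemma~\ref{lem:receivers_rewards_concentration}.

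The argument then reduces to two sums. The first, $\sum_{t}\|q_t-\widehat{q}_t\|_1$, is directly $\widetilde{\mathcal{O}}\big(L^2|X|\sqrt{T|A||\Omega|}\big)$ by Lemma~\ref{lem:transition}. The second, $\sum_{t}\sum_{x,\omega,a}\widehat{q}_t(x,\omega,a)\,\xi_{R,t}(x,\omega,a)$, is a standard ``sum of confidence widths weighted by occupancy'' term: I would relate the visitation of each triplet to its occupancy (using again that $\|q_t-\widehat q_t\|_1$ is small to transfer between $\widehat q_t$-weights and empirical visit counts), then invoke the pigeonhole/telescoping bound $\sum_{t}\frac{1}{\sqrt{\max\{1,N_t(x,\omega,a)\}}}=\mathcal{O}\big(\sqrt{T}\big)$ per triplet, summed over the $|X||\Omega||A|$ triplets. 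This yields a bound of the same $\widetilde{\mathcal{O}}\big(L^2|X|\sqrt{T|A||\Omega|}\big)$ order, matching the regret.

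The main obstacle I anticipate is the second term: carefully transferring between the computed occupancy $\widehat{q}_t$ (in which persuasiveness is enforced) and the empirical visit counts $N_t$ that drive the confidence widths, while keeping the cross-terms controlled by Lemma~\ref{lem:transition}. One must be attentive that the optimism in the constraints guarantees the \emph{estimated} violation is nonpositive only for $\widehat{q}_t$, not for $q_t$, so the decomposition order matters, and the confidence-width sum must be carried out with respect to the correct counts. Finally I would collect the failure probabilities: $\mathcal{E}(\delta)$ holds with probability at least $1-4\delta$, Lemma~\ref{lem:transition} costs an additional $2\delta$, and one more $\delta$ for concentration used in the confidence-width summation, giving the stated $1-7\delta$ overall.
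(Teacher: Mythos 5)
Your proposal is correct and follows the paper's proof essentially step for step: feasibility of $\widehat{q}_t$ under the clean event, the decomposition of $V_T$ into a $\sum_{t}\|q_t-\widehat{q}_t\|_1$ term (H\"older plus Lemma~\ref{lem:transition}) plus the $\widehat{q}_t$-weighted residual left over after invoking the optimistic persuasiveness constraints, a martingale-plus-pigeonhole bound on the occupancy-weighted confidence widths, and the same $4\delta+2\delta+\delta$ union bound giving $1-7\delta$. The one step you leave implicit---which the paper states explicitly and which is the crux of why the full-feedback case is easy---is that the residual also contains the width at the best response, $\xi_{R,t}\big(x,\omega,b^{\phi_t}(a,x)\big)$, and this is controlled only because full-feedback widths depend on the \emph{pair} count $N_t(x,\omega)$ (Lemma~\ref{lem:receivers_rewards_concentration}) and are therefore action-independent, so $\xi_{R,t}(x,\omega,a)=\xi_{R,t}\big(x,\omega,b^{\phi_t}(a,x)\big)$; accordingly your pigeonhole should be run (visit-weighted) over pair counts $N_t(x,\omega)$ rather than triplet counts $N_t(x,\omega,a)$, since with triplet-based widths---the partial-feedback situation---the best-response term is precisely what visitation of the recommended triplets cannot control.
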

In conclusion, in the \emph{full} feedback case, the \texttt{OPPS} algorithm attains regret and violation growing as $\widetilde{\mathcal{O}}(\sqrt T)$.
%
%
Intuitively, this is made effective by the fact that all the estimators concentrate at a $\widetilde{\mathcal{O}}(1/\sqrt T)$ rate.
%
As shown in the following section, achieving such regret and violation bounds is \emph{not} possible anymore under \emph{partial} feedback.
%


\section{The Partial Feedback Case}\label{sec:partial}

In this section, we switch the attention to partial feedback.

The crucial aspect that makes the case of partial feedback more challenging than the one of full feedback is that, after committing to a signaling policy $\phi_t$, the sender does \emph{not} observe sufficient feedback about the persuasiveness~of~$\phi_t$.
This makes achieving sublinear violation in the partial feedback case much harder than in the full feedback case.
In order to overcome such a challenge, some episodes of learning must be devoted to the estimation of the quantities involved in persuasiveness constraints.
This is necessary to build a suitable approximation of such constraints to be exploited in the remaining episodes, in which an optimistic approach similar to that employed with full feedback must be adopted to control the regret.
As a result, there is a trade-off between regret and violation that is determined by the amount of exploration performed.
In the rest of this section, we design an algorithm that is able to optimally control such a trade-off.

\subsection{The \textnormal{\texttt{OPPS}} Algorithm with Partial Feedback}


We extend the \texttt{OPPS} algorithm introduced in Section~\ref{sec:full} to deal with the \emph{partial} feedback case.
The idea behind the new algorithm is to split episodes into two phases.
The \emph{first} one is an exploration phase with the goal of building a sufficiently-good approximation of persuasiveness constraints, so as to achieve sublinear violation.
Such a phase lasts for the first $N |X| |\Omega| |A|$ episodes, where $N:= \left \lceil T^{\alpha}\right \rceil $ and $\alpha \in [0,1]$ is a parameter controlling the length of the two phases, given as input to the algorithm.
%
The \emph{second} phase is instead devoted to achieving sublinear regret, and it follows the same steps of \texttt{OPPS} with full feedback (Algorithm~\ref{alg: full_feedback}).

The first phase of the algorithm works by considering each triplet $(x,\omega,a) \in X \times \Omega \times A$ for $N$ episode.
When $(x,\omega,a)$ is considered at episode $t \in [T]$, the algorithm commits to a signaling scheme induced by an occupancy measure $\widehat q_t$ that maximizes the probability $\sum_{x' \in X} q(x,\omega,a,x')$ of visiting such a triplet, while at the same time satisfying all the constraints of the \texttt{Opt-Opt} problem.
Crucially, such a procedure does \emph{not} guarantee that every triplet is visited $N$ times.
Indeed, there might be triplets $(x,\omega,a)$ that are visited with very low probability.
This can be the case either when transitions and priors place very low probability on $(x,\omega)$ or when action $a$ is associated with very low receivers' rewards, and, thus, it must be recommended with very low probability in order to satisfy the optimistic persuasiveness constraints defined in the \texttt{Opt-Opt} problem. 
%
%

\begin{algorithm}[!htp]
	\caption{Optimistic Persuasive Policy Search \emph{(partial)}}
	\label{alg:bandit_feedback}
	\begin{algorithmic}[1]
		\Require{$X$, $\Omega$, $A$, $T$, $\delta \in (0,1)$, $\alpha \in [0,1]$}
		\State $N \gets \left \lceil {T^{\alpha}} \right \rceil $ \label{alg3:line1}
		\State Initialize all estimators to $0$ and all bounds to $+\infty$ \label{alg3:line2}
		\State Initialize counter $C(x,\omega,a) $ to $ 0$ for all $(x,\omega, a)$ \label{alg3:line3}
		\For{$t=1, \ldots, T$}\label{alg3:line4}
		\State Update all estimators $\overline{P}_t, \overline{\mu}_t,  \overline{r}_{S,t}, \overline{r}_{R,t}$ and bounds $\epsilon_{t}, \zeta_t, \xi_{S,t}, \xi_{R,t}$ given new observations \label{alg3:line5}
		\If{$t \leq N |X| |\Omega| |A|$} \label{alg3:line6}
		\State $(x,\omega, a) \gets \argmin_{(x,\omega, a) \in X \times \Omega \times A} C(x,\omega,a)$ \label{alg3:line7}
		\State $\widehat {q}_t \gets $ \begin{tabular}{l}
			Solve \texttt{Opt-Opt} with its objective\\modified as $ \sum_{x' \in X} q(x,\omega,a,x')$
		\end{tabular} \label{alg3:line8}
		\State $C(x,\omega,a) \gets C(x,\omega,a) + 1$ \label{alg3:line9}
		\Else \label{alg3:line10}
		\State $\widehat {q}_t \gets $ Solve \texttt{Opt-Opt} (Problem~\eqref{prob:double_opt}, Appendix~\ref{app:additional}) 
		\EndIf \label{alg3:line12}
		\State $\phi_{t}\gets  \phi^{\widehat q_t}$ \label{alg3:line13}
		\State Run Algorithm~\ref{alg: Sender-Receiver Interaction} by committing to $\phi_t$ \label{alg3:line14}
		\State Observe \emph{partial} feedback from Algorithm~\ref{alg: Sender-Receiver Interaction} \label{alg3:line15}
		\EndFor
		%
	\end{algorithmic}
\end{algorithm}

Algorithm~\ref{alg:bandit_feedback} provides the pseudocode of \texttt{OPPS} with \emph{partial} feedback.
Notice that the variables $C(x,\omega,a)$ (initialized in Line~\ref{alg3:line3} and updated in Line~\ref{alg3:line9}) are counters used to keep track of how many times each triplet $(x,\omega,a)$ is considered during the first phase, namely when $t \leq N |X| |\Omega| |A|$.
Moreover, the algorithm ensures that every triplet si considered exactly $N$ times during the first phase, by selecting them accordingly as in Line~\ref{alg3:line7}.
Let us also observe that Algorithm~\ref{alg:bandit_feedback} updates all the estimators and bounds (by using partial feedback) and selects the signaling policy $\phi_t$ as done by Algorithm~\ref{alg: full_feedback}.
The main difference with respect to Algorithm~\ref{alg: full_feedback} is that the occupancy measure $\widehat q_t$ used to define $\phi_t$ is computed in a different way during the first (exploration) phase of the algorithm (see Line~\ref{alg3:line8}).


\subsection{Algorithm Analysis with Partial Feedback}

In the following, we prove the guarantees attained by \texttt{OPPS} with \emph{partial} feedback.
We start by stating the following result on the regret attained by the algorithm.
\begin{restatable}{theorem}{banditregret}\label{thm:bandit_regret}
	Given any $\delta \in (0,1)$, with probability at least $1 - 7\delta $, Algorithm~\ref{alg:bandit_feedback} attains regret:
		\begin{equation*}
		R_T \le \widetilde{\mathcal{O}} \left(  N L |X| |\Omega||A| + \hspace{-0.5mm} L^2|X|\sqrt{T|A||\Omega|\ln \left(\frac{1}{\delta}\right)}\right).
	\end{equation*}
	%
\end{restatable}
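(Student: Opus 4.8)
The plan is to split the horizon into the \emph{exploration} phase, comprising the first $N|X||\Omega||A|$ episodes, and the \emph{exploitation} phase, comprising the remaining ones, and to bound the regret contributed by each separately. Writing $R_T=\sum_{t\in[T]} r_S^\top(q^*-q_t)$, I would decompose the sum accordingly. For the exploration phase I would argue crudely: since $r_S\in[0,1]^{|X\times\Omega\times A|}$ and every valid occupancy measure satisfies the per-layer normalization constraint \circled{\textnormal{1}}, one has $r_S^\top q=\sum_{k\in\mathcal{K}}\sum_{x\in X_k,\omega,a}q(x,\omega,a)\,r_S(x,\omega,a)\le L$ for every $q\in\mathcal{Q}$, hence $r_S^\top(q^*-q_t)\le L$ for each $t$. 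As there are exactly $N|X||\Omega||A|$ exploration episodes, their total contribution is at most $L\,N|X||\Omega||A|$, which yields the first term of the bound. Crucially, this step does \emph{not} rely on the occupancies $\widehat q_t$ produced during exploration being near-optimal, which they need not be.

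For the exploitation phase I would reuse the full-feedback regret analysis essentially verbatim, since in those episodes Algorithm~\ref{alg:bandit_feedback} solves exactly the same \texttt{Opt-Opt} program as Algorithm~\ref{alg: full_feedback}. The two ingredients are optimism and the occupancy control of Lemma~\ref{lem:transition}. Under the clean event $\mathcal{E}(\delta)$, Lemma~\ref{lem:feasibility} guarantees that \texttt{Opt-Opt} is feasible and, more importantly, that an occupancy consistent with the $\ell_1$-confidence sets for $P$ and $\mu$ attaining value at least $\textsc{OPT}$ is feasible; since \texttt{Opt-Opt} maximizes optimistic sender rewards $\widetilde r_{S,t}$ that upper bound $r_S$, its optimum $\widehat q_t$ satisfies $\widetilde r_{S,t}^\top\widehat q_t\ge\textsc{OPT}$. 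I would then write, for each exploitation episode,
\[
r_S^\top(q^*-q_t)\le \widetilde r_{S,t}^\top\widehat q_t-r_S^\top q_t = (\widetilde r_{S,t}-r_S)^\top\widehat q_t + r_S^\top(\widehat q_t-q_t).
\]
The first summand is bounded, under $\mathcal{E}(\delta)$, by twice the sender-reward confidence widths $\xi_{S,t}$ weighted by $\widehat q_t$; splitting $\widehat q_t=q_t+(\widehat q_t-q_t)$, applying Lemma~\ref{lem:transition} to the difference, and using the standard visit-count pigeonhole together with a Cauchy--Schwarz step over the $|X||\Omega||A|$ triplets gives a $\widetilde{\mathcal{O}}(\sqrt{LT|X||\Omega||A|})$ bound, which is dominated. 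The second summand is at most $\|r_S\|_\infty\,\|\widehat q_t-q_t\|_1\le\|\widehat q_t-q_t\|_1$, so summing over all episodes and invoking Lemma~\ref{lem:transition} yields $\mathcal{O}\big(L^2|X|\sqrt{T|A||\Omega|\ln(T|X||\Omega||A|/\delta)}\big)$, which is the second term of the bound.

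Combining the two phases and collecting the failure probabilities completes the argument: the clean event costs $4\delta$, Lemma~\ref{lem:transition} a further $2\delta$, and the martingale concentration relating the planned occupancies $q_t$ to the realized visit counts $N_t$ (needed to turn the per-triplet sums $\sum_t \widehat q_t/\sqrt{\max\{1,N_t\}}$ into $\widetilde{\mathcal{O}}(\sqrt{\cdot})$ quantities) a final $\delta$, for $1-7\delta$ overall. I expect the only genuinely delicate step to be the optimism claim $\textsc{OPT}\le\widetilde r_{S,t}^\top\widehat q_t$: it requires that, under $\mathcal{E}(\delta)$, the baseline policy remains feasible once $P$ and $\mu$ are replaced by their $\ell_1$-confidence sets and persuasiveness is relaxed optimistically, which is precisely what the optimistic construction of \texttt{Opt-Opt} together with Lemma~\ref{lem:feasibility} is designed to ensure. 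Everything else is either a trivial per-episode bound (exploration) or a direct transcription of the full-feedback proof (exploitation).
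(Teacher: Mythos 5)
Your proposal is correct and follows essentially the same route as the paper's proof: a crude per-episode bound of $L$ over the $N|X||\Omega||A|$ exploration episodes, and for the exploitation episodes the optimism chain $(r_S+2\xi_{S,t})^\top\widehat q_t \ge (\overline r_{S,t}+\xi_{S,t})^\top\widehat q_t \ge (\overline r_{S,t}+\xi_{S,t})^\top q^* \ge r_S^\top q^*$ (valid because $q^*$ remains feasible for \texttt{Opt-Opt} under the clean event), combined with Lemma~\ref{lem:transition} via H\"older and the reward-width concentration of Lemma~\ref{lem:concentration}, with the identical $4\delta+2\delta+\delta$ failure-probability accounting. The only cosmetic difference is that the paper decomposes $q^*-q_t=(q^*-\widehat q_t)+(\widehat q_t-q_t)$ uniformly over all episodes before splitting the first term by phase, whereas you split the horizon first; the two orderings are equivalent.
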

%
In order to prove Theorem~\ref{thm:bandit_regret}, we split the analysis into two cases: one targets exploration episodes in the first phase of the algorithm, while the other is concerned with the subsequent (exploitation) phase. 
In the first $N$ episodes in which the \texttt{OPPS} algorithm explores without being driven by the \texttt{Opt-Opt} objective, the algorithm incurs in linear regret.
Instead, in the second phase, \texttt{OPPS} employs an optimistic approach, since the algorithm is driven by the objective of the \texttt{Opt-Opt} problem.
Thus, in the second phase, the algorithm attains regret sublinear in $T$.
The two cases combined give the regret bound in Theorem~\ref{thm:bandit_regret}.
%
%

In the following, we state the result on the violations attained by the \texttt{OPPS} algorithm under \emph{partial} feedback.
\begin{restatable}{theorem}{banditviolations}\label{thm:bandit_violations}
	Given any $\delta \in (0,1)$, with probability at least $1 - 9\delta $, Algorithm~\ref{alg:bandit_feedback} attains cumulative violation:
%
	\begin{equation*}
		V_T \leq \widetilde{\mathcal{O}}  \left(  \rho \left( |A| \frac{T}{\sqrt N } + |A|
		\sqrt{N} + L^2 \sqrt{T}   \right)\right),
	\end{equation*}
	where $\rho := (|X| |\Omega| |A|)^{3/2} \sqrt{ \ln\left(\nicefrac{1}{\delta} \right)} .$
	%
\end{restatable}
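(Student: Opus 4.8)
The plan is to bound the per-episode violation by the receivers' reward confidence widths, exploiting that the committed $\widehat q_t$ satisfies the \emph{optimistic} persuasiveness constraints of \texttt{Opt-Opt} at \emph{every} episode in both phases (the exploration step in Line~\ref{alg3:line8} only changes the objective of \texttt{Opt-Opt}, not its constraints). Fix an episode $t$ and observe that, since $b^{\phi_t}(a,x)$ attains the maximal receiver reward, the per-episode violation equals $v_t=\sum_{x,a}\max_{a'\in A}\sum_\omega q_t(x,\omega,a)\big(r_R(x,\omega,a')-r_R(x,\omega,a)\big)$. I would first replace $q_t$ by $\widehat q_t$, paying $\sum_{x,\omega,a}|q_t(x,\omega,a)-\widehat q_t(x,\omega,a)|\le\|q_t-\widehat q_t\|_1$ because the reward gaps lie in $[-1,1]$. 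Then, under the clean event $\mathcal E(\delta)$, I would upper bound $r_R(x,\omega,a')$ by $\overline r_{R,t}+\xi_{R,t}$ and lower bound $r_R(x,\omega,a)$ by $\overline r_{R,t}-\xi_{R,t}$, and combine this with the optimistic persuasiveness constraint met by $\widehat q_t$, which lower bounds the \emph{estimated} gap by the opposite combination of confidence terms. A short rearrangement cancels the unknown estimated means and gives, for each $(x,a)$ and its best response $a'$, a bound of the form $2\sum_\omega \widehat q_t(x,\omega,a)\big(\xi_{R,t}(x,\omega,a)+\xi_{R,t}(x,\omega,a')\big)$.

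Summing over $t$, the occupancy-mismatch terms collapse to $\sum_t\|q_t-\widehat q_t\|_1$, which Lemma~\ref{lem:transition} controls and which is responsible for the $L^2\sqrt T$ term. It then remains to bound the cumulative confidence mass $\Phi:=\sum_t\sum_{x,a,\omega}\widehat q_t(x,\omega,a)\big(\xi_{R,t}(x,\omega,a)+\xi_{R,t}(x,\omega,b^{\phi_t}(a,x))\big)$. I would split $\Phi$ into a \emph{matched} part, where the occupancy weight and the width refer to the same triplet, and a \emph{mismatched} part carrying the best-response widths $\xi_{R,t}(x,\omega,b^{\phi_t}(a,x))$; for the latter I would factor out $\sum_a\widehat q_t(x,\omega,a)=\widehat q_t(x,\omega)$ and bound the best-response width by $\max_{a'}\xi_{R,t}(x,\omega,a')\le\sum_{a'}\xi_{R,t}(x,\omega,a')$.

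For each triplet I then split the horizon into the two phases. The exploration phase devotes $N$ episodes to each triplet, during which \texttt{Opt-Opt} maximizes its visiting probability; summing widths $\xi_{R,t}\approx\sqrt{\ln(\nicefrac1\delta)/N_t(\cdot)}$ over the visits of a single triplet telescopes like $\sum_{n\le N}\nicefrac{1}{\sqrt n}=\mathcal O(\sqrt N)$, which after accounting for the number of triplets and the $\sqrt{|X||\Omega||A|}$-type factors yields the $\rho|A|\sqrt N$ term. In the exploitation phase, the counts accumulated during exploration make each relevant width of order $\nicefrac{1}{\sqrt N}$, so that $\sum_t\widehat q_t(x,\omega,a)\,\xi_{R,t}\lesssim \tfrac{1}{\sqrt N}\sum_t\widehat q_t(x,\omega,a)$ with $\sum_{x,\omega,a}\widehat q_t(x,\omega,a)=L$, producing the $\rho|A|\tfrac{T}{\sqrt N}$ term after summing over the $T$ episodes and the triplets. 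To make this rigorous I would relate the realized counts $N_t(\cdot)$ to the committed occupancies through a Freedman/Azuma concentration of $N_t(\cdot)-\sum_{\tau<t}q_\tau(\cdot)$, which is also where the extra failure probability (giving $1-9\delta$ overall) enters.

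The hard part will be the mismatched best-response term together with triplets whose maximal visiting probability is intrinsically small: for these the exploration phase cannot guarantee $N$ visits, so the naive bound $\xi_{R,t}\le 1$ threatens a linear contribution. I would resolve this with a dichotomy driven by the optimistic constraints. If a pair $(x,\omega)$ is hard to reach, then $\widehat q_t(x,\omega)$ is itself small and the product $\widehat q_t(x,\omega)\,\xi_{R,t}(x,\omega,a')$ is controlled; otherwise $(x,\omega)$ is reachable and, although $\xi_{R,t}(x,\omega,a')$ is large, the exploration step keeps recommending $a'$ (the loose optimistic constraints permit it) until enough samples accrue, so either the width shrinks to the target order or the residual persuasiveness gap that blocks $a'$ is itself small enough to bound the violation it contributes. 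Turning this self-correcting behaviour into a clean per-triplet inequality, uniformly over the tightening of the \texttt{Opt-Opt} constraints across episodes, is the delicate technical heart of the argument.
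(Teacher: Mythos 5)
Your first part---using the optimistic IC constraints of \texttt{Opt-Opt}, the cancellation of the estimated means, the passage between $q_t$ and $\widehat q_t$ via Lemma~\ref{lem:transition}, and the counting bound for the matched term---faithfully reproduces the paper's Lemma~\ref{lem:violation_decomposition} and Lemma~\ref{lem:concentration}. The gap is in how you handle the mismatched term $\sum_t\sum_{x,\omega,a} q_t(x,\omega,a)\,\xi_{R,t}(x,\omega,b^t(a,x))$, which is the actual technical heart of the theorem. Your step ``factor out $\widehat q_t(x,\omega)$ and bound the best-response width by $\sum_{a'}\xi_{R,t}(x,\omega,a')$'' destroys exactly the pairing that makes the bound possible: after this step you must control $\widehat q_t(x,\omega)\,\xi_{R,t}(x,\omega,a')$ for \emph{every} action $a'$, including actions that the optimistic IC constraints forbid recommending with any substantial probability (e.g., actions with very low receivers' rewards---precisely the problematic triplets the paper flags in Section~\ref{sec:partial}). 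For such $a'$ the count $N_t(x,\omega,a')$ never grows under partial feedback, so $\xi_{R,t}(x,\omega,a')$ stays of order $1$ while $\widehat q_t(x,\omega)$ can be of order $1$, and your bound becomes linear in $T$. Your proposed dichotomy does not repair this: its premise that ``the exploration step keeps recommending $a'$ until enough samples accrue'' is false, because the exploration phase \emph{also} enforces the optimistic IC constraints (this is the very design choice of Algorithm~\ref{alg:bandit_feedback}), so a low-reward action $a'$ may be permanently recommended with negligible probability; and your fallback clause (``the residual persuasiveness gap that blocks $a'$ is itself small'') is not connected by any argument to the violation term you need to bound, since that term involves the width at $a'$ weighted by the reachability of $(x,\omega)$, not by any persuasiveness gap.

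The idea you are missing is the recommendation-swapping inequality, Equation~\eqref{eq:violations_partial_ic} in the paper: for the specific action $a'=b^t(a,x)$ (and only for it), the mass $q_t(x,\omega,a)$ can be transferred to the triplet $(x,\omega,b^t(a,x))$ while preserving the \emph{exact} IC constraints, because recommending the receiver's best response in place of $a$ is always persuasive. Since the exact-IC feasible set is contained in the \texttt{Opt-Opt} feasible set under the clean event, and during exploration \texttt{Opt-Opt} maximizes the visiting probability of each triplet in turn, one obtains $q_t(x,\omega,a)\le q_{t_j(x,\omega,b^t(a,x))}(x,\omega,b^t(a,x))$. Combined with Azuma--Hoeffding, this shows that the count $N_{N\ell}(x,\omega,b^t(a,x))$ (with $\ell=|X||\Omega||A|$) accumulated during exploration is at least of order $N\,q_t(x,\omega,a)$ minus a deviation term; hence either the weight $q_t(x,\omega,a)$ is small, or the best-response width is of order $1/\sqrt{N\,q_t(x,\omega,a)}$, and the product is $O\bigl(\sqrt{N_{N\ell}(x,\omega,b^t(a,x))}/N\bigr)$. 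Cauchy--Schwarz over triplets then yields the $|A|\sqrt N$ (exploration) and $|A|\,T/\sqrt N$ (exploitation) terms. This weight-to-count transfer, which works only because the mismatched width is evaluated at the best response and not at an arbitrary action, is exactly what your factoring step gives up, so the proposal as written cannot be completed into a proof of the stated bound.
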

Proving Theorem~\ref{thm:bandit_violations} requires a non-trivial analysis.
The result follows by showing that uniformly exploring over feasible solutions to the \texttt{Opt-Opt} problem leads to a violation bound of the order of $\mathcal{O}(\sqrt{N})$ during the exploration phase.
%
%
Intuitively, this follows by upper bounding the occupancy of each triplet $(x,\omega,a)$ with an occupancy of a previous (exploration) episode relative to the best response of the follower in state $x$ upon receiving recommendation $a$. 
%

Theorems~\ref{thm:bandit_regret}~and~\ref{thm:bandit_violations} establish the trade-off between regret and violation achieved by the \texttt{OPPS} algorithm.
Indeed, by recalling the definition of $N$ (see Line~\ref{alg3:line1} in Algorithm~\ref{alg:bandit_feedback}), it is easy to see that the algorithm attains $R_T \leq \widetilde{\mathcal{O}}(T^\alpha)$ and $V_T \leq \widetilde{\mathcal{O}}(T^{1-\nicefrac{\alpha}{2}})$, where $\alpha \in [\nicefrac{1}{2},1]$ is the parameter controlling the trade-off, given as input to the algorithm.

\subsection{Lower Bound}


We conclude the section by showing that the regret and violation bounds attained by the \texttt{OPPS} algorithm are tight for any choice of $\alpha \in [\nicefrac{1}{2}, 1]$.
We do so by devising a lower bound matching these bounds (Theorem~\ref{thm:lowerbound}).
Its main idea is to consider two instances of episodic MPP involving a receiver with two actions $a_1, a_2$ such that only $a_1$ provides positive reward to the sender.
In one instance, receiver's rewards by playing $a_1$ are higher than those obtained by taking $a_2$, while in the second instance the opposite holds.
As a result, recommending action $a_1$ results in low regret in the first instance and high violation in the second one, while recommending action $a_2$ results in low violation in the second instance and high regret in the first one.
This leads to the trade-off formally stated by the following theorem.
%
%
\begin{restatable}{theorem}{lowerbound}\label{thm:lowerbound}
	Given $\alpha \in [1/2, 1]$, there is no learning algorithm achieving both $R_T = o (T^\alpha)$ and $V_T = o (T^{1- \nicefrac{\alpha}{2}})$ with probability at least $1-\delta$, for any $\delta\in(0,1)$.
	%
\end{restatable}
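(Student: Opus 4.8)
The plan is to prove Theorem~\ref{thm:lowerbound} by a standard two-instance, information-theoretic (change-of-measure) argument that makes rigorous the intuition sketched just before the statement. I would work with the simplest possible family of MPPs: a single non-terminal state $x_0$ in which a single outcome $\omega_0$ is always realized (so the prior and the transitions carry no information and signaling degenerates to issuing an action recommendation), followed by the terminal state. The receiver has two actions $a_1,a_2$; the sender's rewards are deterministic with $r_S(\omega_0,a_1)=1$ and $r_S(\omega_0,a_2)=0$; and the receiver's rewards are Bernoulli with $r_R(\omega_0,a_1)=\nicefrac12$ in \emph{both} instances, while $r_R(\omega_0,a_2)=\nicefrac12-\Delta$ in instance $I_1$ and $r_R(\omega_0,a_2)=\nicefrac12+\Delta$ in instance $I_2$, for a gap $\Delta$ I would tune as $\Delta=\Theta(T^{-\nicefrac{\alpha}{2}})$. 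Since $r_R(\cdot,a_1)$ is identical across the two instances and partial feedback only reveals the reward of the recommended action, the \emph{only} statistically informative observation that separates $I_1$ from $I_2$ is a sample of $r_R(\cdot,a_2)$, obtained exactly when $a_2$ is recommended.

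I would first record the exact bookkeeping. In $I_1$ action $a_1$ is receiver-preferred, so recommending $a_1$ is persuasive and $\textsc{OPT}=1$ per episode; in $I_2$ persuasiveness forces $q(\omega_0,a_1)=0$, so $\textsc{OPT}=0$. Writing $p_t:=q_t(\omega_0,a_1)$ for the (history-dependent, hence random) probability the algorithm recommends $a_1$ at episode $t$, and $Z:=\sum_{t\in[T]}p_t$, the definitions of $R_T$ and $V_T$ give $R_T=T-Z$ on $I_1$ and $V_T=\Delta\,Z$ on $I_2$ (whereas the regret on $I_2$ is non-positive and the violation on $I_1$ is irrelevant to the claim). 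Thus a small regret on $I_1$ forces $Z$ close to $T$, i.e.\ the expected number of $a_2$-recommendations $T-Z=R_T$ to be $o(T^\alpha)$, so the learner gathers only $o(T^\alpha)$ informative samples.

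The heart of the argument is a change of measure showing that $O(T^\alpha)$ samples of $r_R(\cdot,a_2)$ cannot separate $I_1$ from $I_2$ once $\Delta=\Theta(T^{-\nicefrac{\alpha}{2}})$. To keep the divergence bounded I would truncate at the stopping time $\tau$ equal to the first episode in which $a_2$ has been recommended $K+1$ times, with $K=\Theta(T^\alpha)$ (and $\tau=T$ otherwise). Up to $\tau$ at most $K$ samples of $r_R(\cdot,a_2)$ are revealed, so the divergence-decomposition (Wald-type) identity for interactive protocols yields $\mathrm{KL}\big(\mathbb{P}^{I_1}_{\le\tau}\,\|\,\mathbb{P}^{I_2}_{\le\tau}\big)\le K\cdot\mathrm{KL}\big(\mathrm{Ber}(\nicefrac12-\Delta)\,\|\,\mathrm{Ber}(\nicefrac12+\Delta)\big)=\Theta(K\Delta^2)=\Theta(1)$, which can be made smaller than any target constant by shrinking the constant in $\Delta$. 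Letting $\mathcal{B}$ be the event that $a_2$ is recommended at most $K$ times over the whole horizon (a function of the truncated transcript, since on $\mathcal{B}$ we have $\tau=T$), Pinsker transfers its probability: $\mathbb{P}^{I_2}(\mathcal{B})\ge\mathbb{P}^{I_1}(\mathcal{B})-\sqrt{\tfrac12\,\mathrm{KL}}$.

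To close, on $I_1$ the regret guarantee $R_T=T-Z\le o(T^\alpha)$ (holding with probability $\ge1-\delta$) forces $\sum_t(1-p_t)=o(T^\alpha)$, and an Azuma/Freedman comparison between this occupancy count and the \emph{realized} number of $a_2$-recommendations makes $\mathbb{P}^{I_1}(\mathcal{B})$ close to $1$; hence $\mathbb{P}^{I_2}(\mathcal{B})$ is bounded below by a positive constant. But on $\mathcal{B}$ under $I_2$ the algorithm recommends $a_1$ at least $T-K$ times, and the same martingale concentration then gives $Z=\sum_t p_t\ge T-\Theta(T^\alpha)$, whence $V_T=\Delta Z\ge\Theta(T^{-\nicefrac{\alpha}{2}})\big(T-\Theta(T^\alpha)\big)=\Omega(T^{1-\nicefrac{\alpha}{2}})$ on $I_2$ with constant probability — contradicting $V_T=o(T^{1-\nicefrac{\alpha}{2}})$ at confidence $1-\delta$ (one symmetrizes over a uniform prior on $\{I_1,I_2\}$ to force failure on the bad instance). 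I expect the main obstacle to be exactly this last bookkeeping: the count of informative rounds is itself random and instance-dependent, which is why a naive per-round KL summation must be replaced by the truncation/stopping-time decomposition to hold the divergence at $\Theta(1)$, and why the occupancy-based quantities $R_T,V_T$ in the theorem and the realized recommendation counts used in the change of measure must be linked through martingale concentration; some extra care at the endpoints $\alpha=\nicefrac12$ and $\alpha=1$ (where $\sqrt{T}$ and $T^\alpha$ compare differently) is routine but must be checked.
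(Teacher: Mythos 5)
Your proposal is correct and follows essentially the same route as the paper's proof: two single-state, single-outcome instances differing only in the receiver's reward for $a_2$ by a gap of order $T^{-\nicefrac{\alpha}{2}}$, with the regret guarantee on the first instance forcing few informative ($a_2$) recommendations, and a KL/Pinsker change of measure transferring this to the second instance, where it yields violation $\Omega(T^{1-\nicefrac{\alpha}{2}})$ with constant probability. The only differences are technical: the paper controls the divergence via the standard divergence decomposition combined with a reverse Markov bound on $\mathbb{E}^1\left[\sum_{t} \phi^t(a_2)\right]$ (working directly with recommendation probabilities, so no Azuma step is needed), whereas you cap the informative-sample count by a stopping-time truncation and link occupancies to realized counts through martingale concentration --- both are valid ways to close the same argument.
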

Theorem~\ref{thm:lowerbound} shows that the bounds in Theorems~\ref{thm:bandit_regret}~and~\ref{thm:bandit_violations} are tight for any $\alpha \in [\nicefrac{1}{2}, 1]$.
Moreover, it also proves that it is impossible to achieve sublinear regret while being persuasive at every episode with high probability, when the sender has no information about the receivers.
%

Notice that, in our MPP setting with partial feedback, we deal with a trade-off between regret and violation that is similar to the one faced by~\citet{dark} in related settings.
Differently from them, we are able to achieve an optimal trade-off for any $\alpha \in [\nicefrac{1}{2},1]$.
Indeed,~\citet{dark} only obtain optimality for $\alpha \in [\nicefrac{1}{2}, \nicefrac{2}{3}]$, leaving as an open problem matching the lower bound for the other values of the parameter $\alpha$.
Crucially, we are able to achieve trade-off optimality by using a clever exploration method.
Indeed, when considering a triplet $(x,\omega,a)$ in the first phase, the \texttt{OPPS} algorithm does \emph{not} simply commits to a signaling policy that maximizes the probability of visiting such a triplet, but it rather does so while also \emph{optimistically} accounting for persuasiveness constraints.
This allows to reduce the violation cumulated during the first phase, thus achieving trade-off optimality.

\newpage

\bibliography{example_paper}
\bibliographystyle{unsrtnat}



\appendix
\newpage
\section*{Appendix}
The appendix is organized as follows:
\begin{itemize}
\item In Appendix~\ref{app:additional_related} we report additional related works concerning the online learning in Markov decision processes and online Bayesian persuasion literatures.
\item In Appendix~\ref{app:confidence} we describe the estimators and the confidence bounds related to the stochastic quantities of the Markov persuasive processes.
\item In Appendix~\ref{app:additional} we report the per-round optimization problem performed by the algorithms we present.
\item In Appendix~\ref{app:full_feedback} we report the omitted proofs related to the \emph{full-feedback} setting.
\item In Appendix~\ref{app:bandit_feedback} we report the omitted proofs related to the \emph{partial-feedback} setting.

\end{itemize}

\section{Additional Related Works}\label{app:additional_related}
\paragraph{Online learning in Constrained MDPs} Our paper is also related to the problem of designing no-regret algorithms in online constrained Markov decision processes. The literature on online learning in Markov decision processes is extensive  (see, \emph{e.g.},~\citet{, Near_optimal_Regret_Bounds, even2009online, neu2010online} for fundamental works on the topic). In such settings, two types of feedback are usually investigated. The \textit{full-information feedback} setting~\citep{rosenberg19a}, in which the entire reward function is observed after the learner's choice and the \textit{partial feedback} setting~\citep{JinLearningAdversarial2019}, where the learner only observes the reward gained during the episode. Over the last decade, there has been significant attention to the field of online Markov decision processes in presence of constraints.~The majority of previous works on this topic have focused on settings where constraints are stochastically sampled from a fixed distribution (see, \emph{e.g.},~\cite{Constrained_Upper_Confidence}).  \citet{Online_Learning_in_Weakly_Coupled} deal with adversarial reward and stochastic constraints, assuming known transition probabilities and full information feedback. 
\citet{Exploration_Exploitation} propose two approaches to address the exploration-exploitation dilemma in episodic constrained MDPs. These approaches guarantee sublinear regret and constraint violation when transition probabilities, rewards, and constraints are unknown and stochastic, and the feedback is partial.
\citet{Upper_Confidence_Primal_Dual} provide a primal-dual approach based on \textit{optimism in the face of uncertainty}. This work shows the effectiveness of such an approach when dealing with episodic constrained MDPs with adversarial rewards and stochastic constraints, achieving both sublinear regret and constraint violation with full-information feedback. Finally, \citet{germano2023best} propose a best-of-both-worlds algorithm in constrained Markov decision processes with full information feedback. 
While the previous works are related to ours, the aforementioned techniques cannot be easily generalized to our setting as they are not designed to properly handle the presence of outcomes and IC constraints.
\paragraph{Online Bayesian Persuasion}
It is also worth citing some works that study learning problems in which a one-shot Bayesian persuasion setting is played repeatedly~\citep{castiglioni2020online,castiglioni2021multi,DBLP:conf/sigecom/ZuIX21,pmlr-v202-bernasconi23a}.
These works considerably depart from ours, since they do \emph{not} consider any kind of {sequential} structure in the sender-receiver interaction at each episode. 

\section{Confidence Bounds}\label{app:confidence}


In this section, we further describe the estimators and confidence bounds for the stochastic quantities involved in an episodic MPP, namely, transitions, priors, sender's rewards, and receivers' ones.
%

%

\subsection{Transition Probabilities}

First, we introduce confidence bounds for transition probabilities $P(x' | x, \omega, a)$, by generalizing those introduced by~\citet{rosenberg19a} for MDPs to MPPs.
In the following, we let $N_t(x, \omega, a)$, respectively $N_t ( x, \omega, a, x^{\prime} )$, be the counter specifying the number of episodes up to episode $t \in [T]$ (excluded) in which the triplet $(x,\omega,a)$, respectively the tuple $(x, \omega, a, x^{\prime} )$, is visited.
%
%
Then, the estimated probability of going from $x \in X$ to $x' \in X$ by playing action $a \in A$, when the outcome realized in state $x$ is $\omega \in \Omega$, is defined as follows:
\[
	\overline{P}_t\left(x^{\prime} | x, \omega, a\right):=\frac{N_t ( x, \omega, a, x^{\prime})}{\max \left\{1, N_t(x, \omega, a)\right\}}.
\]
%
%
For any $\delta \in (0,1)$, the confidence set at episode $t \in [T]$ for the transition function $P$ is $\mathcal{P}_t := \bigcap_{(x,\omega,a) \in X \times \Omega \times A} \mathcal{P}_t^{x,\omega,a}$, where $\mathcal{P}_t^{x,\omega,a}$ is a set of transition functions defined as:
%
\begin{equation*}
	\mathcal{P}_t^{x,\omega,a} \hspace{-1mm}:=\hspace{-1mm}\Big\{  \widehat{P} : \hspace{-0.7mm}\left\|\widehat{P}(\cdot| x, \omega, a) \hspace{-1mm} -  \overline{P}_t(\cdot | x, \omega, a)\right\|_1 \hspace{-1mm} \leq \epsilon_t (x, \omega, a)  \Big\},
\end{equation*}
where $\widehat{P}(\cdot| x, \omega, a)$ and $\overline{P}_t(\cdot | x, \omega, a) $ are vectors whose entries are the values $\widehat{P}(x'| x, \omega, a)$ and $\overline{P}_t(x' | x, \omega, a)$, respectively, while $\epsilon_t (x, \omega, a )$ is a confidence bound defined as:
%
%
\[
	\epsilon_t(x, \omega, a) := \sqrt{\frac{2 |X_{k(x)+1}|\ln \left(\nicefrac{T|X||\Omega||A|}{\delta}\right)}{\max \left\{1, N_t(x,\omega, a)\right\}}}.
\]
%

The following lemma formally proves that $\mathcal{P}_t$ is a suitable confidence set for the transition function of an MPP.
%
%
\begin{lemma}\label{lem:confidenceset}
	Given any $\delta\in(0,1)$, with probability at least $1-\delta$, the following condition holds for every $x \in X$, $\omega \in \Omega$, $a \in A$, and $t \in [T]$ jointly:
	\[
		\left\|P (\cdot| x, \omega, a )-\overline{P}_t (\cdot | x, \omega,  a )\right\|_1\leq\epsilon_t (x, \omega, a ).
	\]
	%
\end{lemma}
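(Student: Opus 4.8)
The plan is to reduce the statement to a standard $\ell_1$-deviation inequality for empirical distributions (Weissman's inequality) applied one triplet at a time, and then to combine the per-triplet bounds via a union bound whose cardinality produces exactly the logarithmic factor $\ln(\nicefrac{T|X||\Omega||A|}{\delta})$ appearing in $\epsilon_t$. Since the transition function $P(x'|x,\omega,a)$ differs from an MDP transition only by the extra index $\omega$, the triplet $(x,\omega,a)$ plays the role of a state–action pair and the argument mirrors the one of \citet{rosenberg19a}. The first step is to fix a triplet $(x,\omega,a)$ and exploit the loop-free property: whenever $(x,\omega,a)$ is visited, the next state is drawn from $P(\cdot|x,\omega,a)\in\Delta(X_{k(x)+1})$, a distribution supported on the single layer $X_{k(x)+1}$. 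Hence $\overline{P}_t(\cdot|x,\omega,a)$ is the empirical distribution of $N_t(x,\omega,a)$ draws from a distribution over a set of size $d:=|X_{k(x)+1}|$.

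The technical point to settle is that the successive next-state observations for $(x,\omega,a)$ are not literally i.i.d., because the signaling policies $\phi_t$ are chosen adaptively and whether the triplet is visited at all depends on the entire history. I would handle this with the usual filtration/coupling argument: letting $\mathcal{F}$ denote the filtration generated by the interaction, note that conditioned on visiting $(x,\omega,a)$ the realized next state is independent of the past and distributed exactly as $P(\cdot|x,\omega,a)$, so one may couple the realized next states with an i.i.d. sequence $\{Y_m\}_{m\geq 1}$ sampled from $P(\cdot|x,\omega,a)$ in such a way that the empirical distribution after $m$ visits equals the empirical distribution of $Y_1,\dots,Y_m$. This is precisely the reduction used for MDPs, and it transfers verbatim.

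Next I would invoke Weissman's inequality: for each fixed sample count $m$, the empirical distribution $\widehat{P}_m$ of $Y_1,\dots,Y_m$ satisfies $\mathbb{P}(\|\widehat{P}_m - P(\cdot|x,\omega,a)\|_1 \geq \epsilon) \leq (2^{d}-2)\,e^{-m\epsilon^2/2}$. Choosing $\epsilon = \sqrt{2d\ln(\nicefrac{T|X||\Omega||A|}{\delta})/m}$ makes the right-hand side at most $\nicefrac{\delta}{(T|X||\Omega||A|)}$, after the routine bound $\ln(2^d-2)\leq d\ln 2\leq d\ln(\nicefrac{T|X||\Omega||A|}{\delta})$; this constant check is exactly what fixes the shape of $\epsilon_t$. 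I would then union bound over all sample counts $m\in\{1,\dots,T\}$ and all $|X||\Omega||A|$ triplets, which generates the $T|X||\Omega||A|$ factor inside the logarithm and yields total failure probability at most $\delta$. Evaluating at $m=N_t(x,\omega,a)$ gives the claim for every $t$ simultaneously. The degenerate case $N_t(x,\omega,a)=0$ is treated separately: there $\overline{P}_t(\cdot|x,\omega,a)=0$, so the $\ell_1$ distance equals $1$, while $\epsilon_t\geq 1$ by the choice of constants, so the inequality holds trivially.

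The main obstacle is the non-i.i.d. issue flagged above, namely making the coupling rigorous enough that Weissman's bound—stated for a fixed i.i.d. sample—can legitimately be applied with the random, history-dependent count $N_t(x,\omega,a)$. Once that martingale reduction is in place, the remainder is careful but entirely standard union-bound bookkeeping, and the generalization from MDPs to MPPs adds nothing beyond carrying the extra outcome index $\omega$ through the notation.
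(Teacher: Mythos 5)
Your proposal is correct and takes essentially the same route as the paper: the paper's proof of this lemma is a one-line appeal to the analysis of \citep{Near_optimal_Regret_Bounds}---which is precisely Weissman's $\ell_1$ deviation inequality for empirical distributions---followed by a union bound over all $x$, $\omega$, $a$, and $t$, and this is exactly the argument you spell out (including the reduction of the adaptively sampled visits to i.i.d.\ draws and the union bound over sample counts whose cardinality produces the $\ln\left(\nicefrac{T|X||\Omega||A|}{\delta}\right)$ factor). The only extra content in your write-up is the explicit treatment of the $N_t(x,\omega,a)=0$ case and the constant bookkeeping, both of which are routine and consistent with the cited analysis.
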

%
Lemma~\ref{lem:confidenceset} can be easily proven by applying the same analysis as presented in~\citep{Near_optimal_Regret_Bounds} and employing a union bound over all $x$, $\omega$, $a$, and $t$..
%


\subsection{Prior Distributions}

Next, we introduce confidence bounds for prior distributions.
For every state $x \in X$, we define $\overline{\mu}_{t}(\cdot | x) \in \Delta(\Omega)$ as the estimator of the prior distribution at $x$ built by using observations up to episode $t\in [T]$ (this excluded).
Formally, the entries of vector $\overline{\mu}_{t}(\cdot | x)$ are such that, for every $\omega \in \Omega$:
%
%
\begin{equation*}
	\overline{\mu}_{t}(\omega|x) := \frac{\sum_{\tau\in[t-1]}\mathbbm{1}_{\tau}\{x,\omega\}}{\max\{1,N_t(x)\}},
\end{equation*}
where $N_t(x)$ is the number of visits to state $x$ up to episode $t$ (excluded), while $\mathbbm{1}_{\tau}\{ x,\omega \}$ is an indicator function equal to $1$ if and only if the pair $(x,\omega)$ is visited at episode $\tau$.
%

The following lemma provides confidence bounds for priors.
\begin{lemma}\label{lem:prior_concentration}
	Given any $\delta\in(0,1)$, with probability at least $1-\delta$, the following holds for all $x\in X$ and $t\in[T]$ jointly:
	\begin{equation*}
		\left\| \mu(\cdot | x)  - \overline \mu_{t}(\cdot | x) \right\|_1\leq \zeta_{t}(x),
	\end{equation*}
	where we let $\zeta_{t}(x):=\sqrt{\frac{2|\Omega|\ln(\nicefrac{T|X|}{\delta})}{\max\{1, N_t(x)\}}}$.
	%
\end{lemma}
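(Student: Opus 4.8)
The plan is to establish the bound via a standard $\ell_1$ deviation inequality for empirical distributions, combined with a union bound, after first reducing the adaptively chosen counter $N_t(x)$ to a fixed number of i.i.d.\ samples.

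First I would observe that, since the prior function $\mu$ depends only on the state $x$ and not on the history of the interaction, the outcomes sampled whenever the process visits a fixed state $x$ form a sequence of i.i.d.\ draws from $\mu(\cdot | x)$, irrespective of the signaling policies committed to by the sender. Consequently, $\overline{\mu}_t(\cdot | x)$ is exactly the empirical distribution obtained from the first $N_t(x)$ such draws, and it depends on the episode index $t$ only through the counter $N_t(x)$.

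Next I would fix a state $x$ and a number of samples $n \geq 1$, and let $\widehat{\mu}_n$ denote the empirical distribution of the first $n$ outcomes drawn at $x$. The core concentration step uses the identity $\| \mu(\cdot | x) - \widehat{\mu}_n \|_1 = \max_{\sigma \in \{-1,+1\}^{|\Omega|}} \sum_{\omega \in \Omega} \sigma_\omega \big( \widehat{\mu}_n(\omega) - \mu(\omega | x) \big)$. For each fixed sign vector $\sigma$, the inner sum is an average of $n$ i.i.d.\ random variables, each taking values in $[-1,+1]$ and having mean zero, so Hoeffding's inequality yields $\Pr\big[ \sum_{\omega} \sigma_\omega ( \widehat{\mu}_n(\omega) - \mu(\omega | x) ) \geq \epsilon \big] \leq e^{-n \epsilon^2 / 2}$. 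A union bound over the $2^{|\Omega|}$ sign vectors then gives the Weissman-type inequality $\Pr\big[ \| \mu(\cdot | x) - \widehat{\mu}_n \|_1 \geq \epsilon \big] \leq 2^{|\Omega|} e^{-n \epsilon^2 / 2}$. Instantiating $\epsilon = \sqrt{ 2 |\Omega| \ln(\nicefrac{T|X|}{\delta}) / n }$ makes the right-hand side equal to $\big( \nicefrac{2\delta}{T|X|} \big)^{|\Omega|}$. I would then take a union bound over all states $x \in X$ and all possible sample counts $n \in [T]$, which is legitimate because $N_t(x) < T$ and $\overline{\mu}_t$ depends on $t$ only through $N_t(x)$; for $|\Omega| \geq 1$ the total failure probability is at most $|X| \cdot T \cdot \big( \nicefrac{2\delta}{T|X|} \big)^{|\Omega|} \leq \delta$ (the constant being absorbed, e.g.\ via the sharper $(2^{|\Omega|}-2)$ prefactor of the $\ell_1$ bound). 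On the complementary event the claim holds simultaneously for every $x$ and $t$ by evaluating $\zeta_t(x)$ at $n = N_t(x)$, while the $\max\{1, N_t(x)\}$ in the denominator takes care of the vacuous case $N_t(x) = 0$.

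The step I expect to be the main obstacle is the reduction from the random, algorithm-dependent counter $N_t(x)$ to a fixed sample size: a naive union bound taken directly over the adaptively chosen $N_t(x)$ would be invalid, since $N_t(x)$ is correlated with the realized outcomes. The clean resolution is exactly the first observation above---because the prior is state-dependent only, the draws at $x$ are i.i.d.\ regardless of the (adaptive) policy---so one controls the empirical distribution of the first $n$ draws uniformly over $n \in [T]$ and only afterwards instantiates at $n = N_t(x)$. Everything else is routine: the $\ell_1$-to-sign-vector identity, Hoeffding, and the two union bounds.
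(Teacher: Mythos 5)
Your proof is correct and takes essentially the same approach as the paper, whose own proof is only a one-line sketch (a concentration inequality plus a union bound over all states and episodes): you fill in exactly the standard details, namely the Weissman-type $\ell_1$ deviation bound and the union bound over counter values $n \in [T]$ rather than naively over episodes, which correctly handles the adaptivity of $N_t(x)$. The only cosmetic discrepancy is that the paper nominally invokes Bernstein's inequality, but the stated form of $\zeta_t(x)$---with the factor $2|\Omega|$ under the square root---is precisely what your Hoeffding-over-sign-vectors derivation produces, so your argument is the natural (and more careful) instantiation of what the paper intends.
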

Lemma~\ref{lem:prior_concentration} follows by applying Bernstein's inequality and a union bound over all states and episodes.
%

\subsection{Sender's and Receivers' Rewards} 
Finally, we introduce estimators for rewards.
In the following, present the results related to sender's rewards and receiver's rewards under full and partial feedback.
%
%
For every $x \in X$, $\omega \in \Omega$, and $a \in A$, the estimated sender's and receivers' rewards built with observations up to episode $t \in [T]$ (this excluded) are defined as follows:
\begin{align*}
	\overline{r}_{S, t}(x,\omega, a) &:= \frac{\sum_{\tau\in[t-1]}r_{S,\tau}(x,\omega, a)\mathbbm{1}_{\tau}\{x,\omega,a\}}{{\max\{1,N_t(x,\omega, a)\}}},\\
	 \overline{r}_{R, t}(x,\omega, a) & := \frac{\sum_{\tau\in[t-1]}r_{R,\tau}(x,\omega, a)\mathbbm{1}_{\tau}\{x,\omega,a\}}{\max\{1,N_t(x,\omega, a)\}},
\end{align*}
where $\mathbbm{1}_\tau\{x,\omega,a\}$ is an indicator function equal to $1$ if and only if the triplet $(x,\omega,a)$ is visited during episode $\tau$.

The following lemma provides confidence bounds for sender's rewards, when \emph{full} feedback is available.
\begin{lemma}\label{lem:sender_rewards_concentration}
	Given any $\delta\in(0,1)$, with probability at least $1-\delta$, the following condition holds for every $x\in X$, $\omega \in \Omega$, $a\in A$, and $t\in[T]$ jointly:
	\begin{equation*}
		\big| r_{S}(x,\omega,a) - \overline r_{S,t}(x,\omega,a)\big|\leq \xi_{S,t}(x,\omega,a),
	\end{equation*}
	where $\xi_{S,t}(x,\omega, a):=\min\left\{1,\sqrt{\frac{\ln(\nicefrac{3T|X||\Omega|}{\delta})}{\max\{1,N_t(x,\omega)\} }}\right\}$.
\end{lemma}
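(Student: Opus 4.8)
The plan is to prove the bound by a standard concentration-of-measure argument for empirical means, exploiting the structure of \emph{full} feedback to identify the correct effective sample size. The crucial observation is that, under full feedback, whenever the state-outcome pair $(x,\omega)$ is visited during an episode $\tau$, the sender observes the reward realizations $r_{S,\tau}(x,\omega,a)$ for \emph{all} actions $a \in A$ simultaneously (see Algorithm~\ref{alg: Sender-Receiver Interaction}). Hence, for every action $a$, the estimator $\overline{r}_{S,t}(x,\omega,a)$ is the average of exactly $N_t(x,\omega)$ observed reward samples---one per visit to $(x,\omega)$ before episode $t$---which is precisely why the confidence bound $\xi_{S,t}$ must scale with $\max\{1,N_t(x,\omega)\}$ rather than with the coarser triplet counter $N_t(x,\omega,a)$. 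Any proof that used only the triplet-visit samples would be unable to reach the claimed rate, so correctly accounting for the full-feedback samples is the conceptual starting point.

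First, I would fix a triplet $(x,\omega,a)$ and introduce the natural filtration $\{\mathcal{F}_\tau\}$, where $\mathcal{F}_{\tau-1}$ collects all randomness generated before episode $\tau$ (in particular the committed policy $\phi_\tau$, which is $\mathcal{F}_{\tau-1}$-measurable). Since at each episode the reward $r_{S,\tau}(x,\omega,a)$ is drawn independently from $\nu_S(x,\omega,a)$---with mean $r_S(x,\omega,a)$ and support $[0,1]$---and independently of the within-episode trajectory that determines whether $(x,\omega)$ is visited, the centered sequence $Y_\tau := \big(r_{S,\tau}(x,\omega,a)-r_S(x,\omega,a)\big)\,\mathbbm{1}_\tau\{x,\omega\}$ forms a bounded martingale-difference sequence with respect to $\{\mathcal{F}_\tau\}$, i.e.\ $\mathbb{E}[Y_\tau \mid \mathcal{F}_{\tau-1}] = 0$. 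The quantity to control then rewrites as $|\overline{r}_{S,t}(x,\omega,a)-r_S(x,\omega,a)| = \big|\sum_{\tau < t} Y_\tau\big|/\max\{1,N_t(x,\omega)\}$.

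The main technical point is that the sample count $N_t(x,\omega)$ is itself a random, history-dependent quantity, so Hoeffding's inequality cannot be applied with a fixed number of observations. I would resolve this in the usual way: for each \emph{fixed} value $n \in [T]$ of the counter, apply Hoeffding's inequality to the average of the first $n$ reward realizations collected at visits of $(x,\omega)$, obtaining a deviation of order $\sqrt{\ln(1/\delta')/n}$ with failure probability $\delta'$; then take a union bound over all counter values $n \in [T]$, over all state-outcome pairs $(x,\omega) \in X\times\Omega$, and over all actions $a \in A$. Calibrating $\delta'$ so that the total failure probability is at most $\delta$ produces the logarithmic factor $\ln(\nicefrac{3T|X||\Omega|}{\delta})$ and the stated bound $\xi_{S,t}(x,\omega,a)$, while the truncation at $1$ is immediate since rewards lie in $[0,1]$ and so the deviation can never exceed $1$.

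The step I expect to be the most delicate is the martingale/independence argument: one must argue carefully that, conditionally on $\mathcal{F}_{\tau-1}$, the freshly sampled reward $r_{S,\tau}(x,\omega,a)$ is independent of the visit indicator $\mathbbm{1}_\tau\{x,\omega\}$ (which is a function of the realized trajectory and hence of the adaptively chosen policy), so that the centered increments genuinely have zero conditional mean despite the policy depending on all past observations. Once this is in place, the concentration step and the union-bound bookkeeping are routine, and the result follows exactly as in the analogous reward-estimation arguments for MDPs in~\citet{Near_optimal_Regret_Bounds}.
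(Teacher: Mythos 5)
Your proof is correct and takes essentially the same route as the paper's own (one-sentence) argument: Hoeffding's inequality applied at fixed sample counts, followed by a union bound over counter values, state--outcome pairs, and episodes. If anything, your write-up is more careful than the paper's sketch --- you correctly identify that under full feedback the estimator for each action averages $N_t(x,\omega)$ samples (which is what justifies the pair counter in $\xi_{S,t}$, despite the estimator being written with triplet indicators in the paper), you handle the randomness of the counter via the standard fixed-$n$ peeling, and you include the union over actions $a \in A$ that the paper's sketch omits, whose extra $|A|$ factor is indeed absorbed by the quadratic slack in the Hoeffding exponent at the stated threshold.
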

Lemma~\ref{lem:sender_rewards_concentration} follows by applying Hoeffding's inequality and a union bound over all $x$, $\omega$ and $t$.

The following lemma provides confidence bounds for receiver's rewards, when \emph{full} feedback is available.
\begin{lemma}\label{lem:receivers_rewards_concentration}
	Given any $\delta\in(0,1)$, with probability at least $1-\delta$, the following condition holds for every $x\in X$, $\omega \in \Omega$, $a\in A$, and $t\in[T]$ jointly:
	\begin{equation*}
	\big| r_{R}(x,\omega,a) - \overline r_{R,t}(x,\omega,a)\big|\leq \xi_{R,t}(x,\omega,a),
	\end{equation*}
	where $\xi_{R,t}(x,\omega, a):=\min\left\{1,\sqrt{\frac{\ln(\nicefrac{3T|X||\Omega|}{\delta})}{\max\{1,N_t(x,\omega)\} }}\right\}$.
\end{lemma}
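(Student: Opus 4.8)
The plan is to observe that this statement is the exact receiver-side analog of Lemma~\ref{lem:sender_rewards_concentration}: the estimator $\overline r_{R,t}(x,\omega,a)$ is an empirical mean of per-episode reward draws, and the only change from the sender case is replacing $\nu_S$ by $\nu_R$ and $r_S$ by $r_R$. So the argument is essentially verbatim, and I would first set up the i.i.d. structure. Fix a triplet $(x,\omega,a)$. By the MPP definition, each $r_{R,\tau}(x,\omega,a)$ is drawn independently from $\nu_R(x,\omega,a)\in\Delta([0,1])$ with mean $r_R(x,\omega,a)$. Under \emph{full} feedback, every visit to the pair $(x,\omega)$ reveals $r_{R,\tau}(x,\omega,a)$ for all $a$, so the number of observed samples entering the estimator of $r_R(x,\omega,a)$ is $N_t(x,\omega)$, which is exactly the quantity appearing in the denominator of $\xi_{R,t}$. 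Crucially, since all rewards are sampled at the start of the episode (Line~1 of Algorithm~\ref{alg: Sender-Receiver Interaction}) while the visit pattern is determined only by $\mu$, $P$, and the committed policy $\phi_\tau$ (receivers always follow recommendations and do not condition on rewards), the reward realizations are independent of the indicators that select which episodes are visited. Hence the observed rewards for $(x,\omega,a)$ form a genuine i.i.d. sequence of $[0,1]$-valued variables, and the visit schedule merely sub-samples it.

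Given this, I would apply Hoeffding's inequality. For a \emph{fixed} sample size $n$, the empirical mean of $n$ i.i.d.\ draws bounded in $[0,1]$ satisfies $\mathbb{P}\big(|\overline r - r_R(x,\omega,a)| \ge \sqrt{\ln(1/\delta')/n}\big) \le 2\,\delta'^{\,2}$ up to the standard constant, so choosing $\delta'$ so that the per-event failure is small enough and then taking a union bound over the pairs $(x,\omega)$ and over the possible sample counts $n\in[T]$ yields a bound that holds for all $x,\omega,a,t$ jointly. Summing the failure probabilities and calibrating them to total $\delta$ produces the logarithmic factor $\ln(\nicefrac{3T|X||\Omega|}{\delta})$ in $\xi_{R,t}(x,\omega,a)$; the outer $\min\{1,\cdot\}$ is free, since $r_R$ and $\overline r_{R,t}$ both lie in $[0,1]$ and hence their difference never exceeds $1$, which keeps the bound non-vacuous when $N_t(x,\omega)$ is small.

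The one point requiring care — and what I would flag as the main (though minor) obstacle — is that the relevant sample count $N_t(x,\omega)$ is itself a data-dependent random quantity, so a naive union over episodes $t$ with a fixed bound is not immediate. The clean resolution is the standard peeling device: rather than unioning over $t$, fix the \emph{number of observations} $n$, apply Hoeffding to the first $n$ terms of the i.i.d.\ reward sequence for $(x,\omega,a)$, and union over $n\in[T]$; the independence of reward draws from the visit indicators established above is precisely what licenses treating the observed samples as a prefix of a fixed i.i.d.\ sequence. This makes the bound uniform over all $t$ and completes the proof exactly as in Lemma~\ref{lem:sender_rewards_concentration}.
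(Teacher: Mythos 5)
Your proof is correct and follows essentially the same route as the paper's, which simply invokes Hoeffding's inequality together with a union bound over $x$, $\omega$, and $t$ (the factor $N_t(x,\omega)$ rather than $N_t(x,\omega,a)$ arising exactly as you say, because full feedback reveals the rewards of every action at each visited pair $(x,\omega)$). Your peeling over the sample count $n\in[T]$ is just a more careful formalization of the paper's union bound over episodes, correctly handling the data-dependence of $N_t(x,\omega)$, so there is nothing further to add.
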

Lemma~\ref{lem:receivers_rewards_concentration} follows by applying Hoeffding's inequality and a union bound over all $x$, $\omega$ and $t$.

The following lemma provides confidence bounds for sender's rewards, when only \emph{partial} feedback is available.
\begin{lemma}\label{lem:sender_rewards_concentration_bandit}
	Given any $\delta\in(0,1)$, with probability at least $1-\delta$, the following condition holds for every $x\in X$, $\omega \in \Omega$, $a\in A$, and $t\in[T]$ jointly:
	\begin{equation*}
	\big| r_{S}(x,\omega,a) - \overline r_{S,t}(x,\omega,a)\big|\leq \xi_{S,t}(x,\omega,a),
	\end{equation*}
	where $\xi_{S,t}(x,\omega, a):=\min\left\{1,\sqrt{\frac{\ln(\nicefrac{3T|X||\Omega||A|}{\delta})}{\max\{1,N_t(x,\omega, a)\} }}\right\}$.
\end{lemma}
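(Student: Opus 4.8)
The plan is to reduce the claim to a standard Hoeffding concentration bound for the empirical mean of i.i.d.\ $[0,1]$-valued random variables, while handling two complications specific to the partial feedback setting: the number of samples $N_t(x,\omega,a)$ is random, and the episodes contributing to $\overline r_{S,t}(x,\omega,a)$ are selected in a data-dependent way. First I would fix a triplet $(x,\omega,a)\in X\times\Omega\times A$ and recall from Section~\ref{sec:prelim_mpp} that the rewards $\{r_{S,\tau}(x,\omega,a)\}_{\tau\in[T]}$ are sampled independently at each episode from $\nu_S(x,\omega,a)\in\Delta([0,1])$ with mean $r_S(x,\omega,a)$ (Line~1 of Algorithm~\ref{alg: Sender-Receiver Interaction}). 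By definition of the estimator, $\overline r_{S,t}(x,\omega,a)$ averages $r_{S,\tau}(x,\omega,a)$ only over the episodes $\tau\in[t-1]$ in which the triplet is visited, i.e.\ over exactly $N_t(x,\omega,a)$ of these samples.

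The key step is to argue that this data-dependent average behaves like the empirical mean of the first $N_t(x,\omega,a)$ terms of an i.i.d.\ sequence. I would let $\{\mathcal{F}_\tau\}_\tau$ be the filtration generated by the algorithm's history (committed policies, outcomes, actions, states, and observed rewards) up to and including episode $\tau$, and observe that the indicator $\mathbbm{1}_\tau\{x,\omega,a\}$ is determined by the policy $\phi_\tau$ together with the realized trajectory, none of which depend on the realized reward $r_{S,\tau}(x,\omega,a)$, since rewards do not influence the MPP dynamics (these are governed only by $\mu$, $P$, and the recommendation-following receivers). Hence $\mathbb{E}[(r_{S,\tau}(x,\omega,a)-r_S(x,\omega,a))\,\mathbbm{1}_\tau\{x,\omega,a\}\mid\mathcal{F}_{\tau-1}]=0$, so that $M_t:=\sum_{\tau\in[t-1]}(r_{S,\tau}(x,\omega,a)-r_S(x,\omega,a))\,\mathbbm{1}_\tau\{x,\omega,a\}$ is a martingale whose increments lie in $[-1,1]$ and are nonzero precisely on the $N_t(x,\omega,a)$ visited episodes. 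Equivalently, an optional-skipping argument shows that the rewards collected at the (random) visiting episodes form an i.i.d.\ subsequence with mean $r_S(x,\omega,a)$.

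Given this, I would invoke Hoeffding's inequality for a fixed sample size: for every $n\in[T]$, the average of $n$ such i.i.d.\ samples deviates from $r_S(x,\omega,a)$ by more than $\sqrt{\ln(2/\delta')/(2n)}$ with probability at most $\delta'$. A union bound over the $n\in[T]$ possible values of the counter, over the $|X||\Omega||A|$ triplets, with $\delta'$ of order $\delta/(T|X||\Omega||A|)$, makes the deviation bound hold for all triplets and all $n$ simultaneously; since $N_t(x,\omega,a)\le t\le T$, this covers the realized value of the counter at every episode $t$ at once, which yields the ``for all $t\in[T]$ jointly'' guarantee. Finally, $\sqrt{\ln(\cdot)/(2n)}\le\sqrt{\ln(\cdot)/n}$ absorbs the constant into the stated $\xi_{S,t}(x,\omega,a)$, and the trivial bound $|r_S(x,\omega,a)-\overline r_{S,t}(x,\omega,a)|\le 1$ (both quantities lie in $[0,1]$) justifies the outer $\min\{1,\cdot\}$. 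The extra $|A|$ inside the logarithm, relative to the full-feedback Lemma~\ref{lem:sender_rewards_concentration}, is exactly the cost of the union over actions, which is forced here because partial feedback reveals only the played action, and hence the relevant counter is the per-action $N_t(x,\omega,a)$ rather than $N_t(x,\omega)$.

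I expect the main obstacle to be the rigorous justification of the i.i.d.-subsequence (optional-skipping) step: one must confirm that conditioning on a triplet being visited at episode $\tau$ does not bias the distribution of the reward realized at that same episode, so that the data-dependent choice of contributing episodes introduces no selection bias into the empirical mean. This rests on the fact that reward realizations are drawn independently of the dynamics and of the policy committed at each episode. Everything downstream---Hoeffding together with the double union bound over counts and triplets---is routine, and the loose leading constant together with the $\min\{1,\cdot\}$ clipping leaves ample slack.
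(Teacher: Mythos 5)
Your proposal is correct and matches the paper's argument, which is exactly Hoeffding's inequality combined with a union bound over all $x$, $\omega$, $a$, and $t$ (the paper states this in one line without elaboration). Your additional care---the martingale/optional-skipping justification that visitation indicators $\mathbbm{1}_\tau\{x,\omega,a\}$ do not bias the sampled rewards, and the union bound over counter values $n \in [T]$ rather than directly over episodes---is precisely the standard rigorous reading of that sketch, with the constants ($\nicefrac{1}{2n}$ versus $\nicefrac{1}{n}$, the factor $3$ inside the logarithm, and the $\min\{1,\cdot\}$ clipping for $N_t = 0$) all checking out with slack.
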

Lemma~\ref{lem:sender_rewards_concentration_bandit} follows by applying Hoeffding's inequality and a union bound over all $x$, $\omega$, $a$, and $t$.

Finally, the following lemma provides confidence bounds for receiver's rewards, when only \emph{partial} feedback is available.
\begin{lemma}\label{lem:receivers_rewards_concentration_bandit}
	Given any $\delta\in(0,1)$, with probability at least $1-\delta$, the following condition holds for every $x\in X$, $\omega \in \Omega$, $a\in A$, and $t\in[T]$ jointly:
	\begin{equation*}
	\big| r_{R}(x,\omega,a) - \overline r_{R,t}(x,\omega,a)\big|\leq \xi_{R,t}(x,\omega,a),
	\end{equation*}
	where $\xi_{R,t}(x,\omega, a):=\min\left\{1,\sqrt{\frac{\ln(\nicefrac{3T|X||\Omega||A|}{\delta})}{\max\{1,N_t(x,\omega, a)\} }}\right\}$.
\end{lemma}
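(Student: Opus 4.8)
The plan is to prove a standard Hoeffding-type concentration bound for the per-triplet empirical reward mean, followed by a union bound, exactly mirroring the argument for the sender's rewards in Lemma~\ref{lem:sender_rewards_concentration_bandit}. The key structural fact I would exploit is that, by the MPP model (see Algorithm~\ref{alg: Sender-Receiver Interaction}, where all rewards are sampled at the start of the episode), at every episode $\tau$ the reward $r_{R,\tau}(x,\omega,a)$ is drawn independently from the fixed distribution $\nu_R(x,\omega,a) \in \Delta([0,1])$ with mean $r_R(x,\omega,a)$, regardless of the signaling policy committed at $\tau$ or of which triplets are actually visited. Hence, for each fixed triplet $(x,\omega,a)$, the rewards appearing in the numerator of $\overline{r}_{R,t}(x,\omega,a)$ are i.i.d.\ realizations of a $[0,1]$-bounded random variable with mean $r_R(x,\omega,a)$.

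First I would fix a triplet $(x,\omega,a)$ and a candidate number of observations $n \in [T]$, and consider the empirical mean of the first $n$ i.i.d.\ realizations of $\nu_R(x,\omega,a)$. By Hoeffding's inequality, this empirical mean deviates from $r_R(x,\omega,a)$ by more than $\sqrt{\nicefrac{\ln(3T|X||\Omega||A|/\delta)}{n}}$ with probability at most $2\exp(-2\ln(3T|X||\Omega||A|/\delta))$, which is comfortably smaller than $\nicefrac{\delta}{(T|X||\Omega||A|)}$; the factor $3$ and the squared exponent leave ample slack and absorb the constant $\nicefrac{1}{2}$ that would otherwise appear inside the square root.

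Then I would take a union bound over all $|X||\Omega||A|$ triplets and all $n \in [T]$, so that the event holds simultaneously for every triplet and every possible realized count, with total failure probability at most $\delta$. Since the actual count $N_t(x,\omega,a)$ is one such $n \le T$, the bound transfers to the estimator built at episode $t$ using $n = N_t(x,\omega,a)$ observations. For the corner case $N_t(x,\omega,a) = 0$ (or whenever the Hoeffding radius would exceed $1$), the claimed bound holds trivially because both $r_R(x,\omega,a)$ and $\overline{r}_{R,t}(x,\omega,a)$ lie in $[0,1]$ and $\xi_{R,t}$ takes the minimum with $1$; this is precisely why the $\max\{1,\cdot\}$ in the denominator and the $\min\{1,\cdot\}$ in the definition of $\xi_{R,t}$ are present.

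The main obstacle is the routine-but-essential subtlety that the number of observations $N_t(x,\omega,a)$ is a data-dependent random variable rather than a fixed horizon, so a naive application of Hoeffding to ``the first $N_t(x,\omega,a)$ samples'' is not directly justified. The union bound over all possible counts $n \in [T]$, combined with the independence of the freshly sampled reward at each episode from the visitation decision (which depends only on the previously committed policy and on transitions and priors), is exactly what resolves this issue and yields the stated bound.
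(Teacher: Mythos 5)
Your proposal is correct and follows essentially the same route as the paper, whose proof of this lemma is exactly the one-line invocation of Hoeffding's inequality plus a union bound over all $x$, $\omega$, $a$, and $t$. Your version merely fills in the standard details the paper leaves implicit---unioning over possible counts $n \in [T]$ to handle the data-dependent $N_t(x,\omega,a)$, using the independence of each episode's freshly sampled reward from the visitation event, and covering the $N_t(x,\omega,a)=0$ case via the $\min\{1,\cdot\}$ truncation---all of which are sound.
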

Lemma~\ref{lem:receivers_rewards_concentration_bandit} follows by applying Hoeffding's inequality and a union bound over all $x$, $\omega$, $a$, and $t$.
%

\section{Optimistic optimization problem}\label{app:additional}


In the following section we describe the linear program solved by Algorithm~\ref{alg: full_feedback} and Algorithm~\ref{alg:bandit_feedback}, namely \texttt{Opt-Opt}. Intuitively, \texttt{Opt-Opt} is the optimistic version of Program~\eqref{lp:offline_opt}, since the objective is guided by the optimism and the confidence bounds of the estimated parameters are chosen to make constraints easier to be satisfied. Notice that the confidence bounds on the transitions and the prior are applied to the $\|\cdot\|_1$ differences between the empirical and the real mean of the distributions. Thus, in order to insert the aforementioned confidence bounds in a LP-formulation, the related constraints must be linearized by means of additional optimization variables.

The linear program solved by Algorithm~\ref{alg: full_feedback} and Algorithm~\ref{alg:bandit_feedback} is the following.

	\begin{subequations}\label{prob:double_opt}
		\begin{align}
			\max_{q,\zeta,\epsilon } & \quad \sum_{x \in X_k} \sum_{\omega \in \Omega} \sum_{a \in A} \sum_{x' \in X_{k+1}} q(x,\omega, a, x') \big ( \overline r_{S,t}(x,\omega,a) + \xi_{S,t} (x,\omega,a) \big  ) \quad \text{s.t.} \hspace{3.9cm} \phantom{\text{s.t.}}  \label{opt:line1}\\
			& \specialcell{\sum_{x \in X_k} \sum_{\omega \in \Omega} \sum_{a \in A} \sum_{x' \in X_{k+1}} q(x,\omega,a,x') = 1 \hfill \forall k\in[0\dots L-1]} \label{opt:line2}\\
			&\specialcell{\sum_{x' \in X_{k-1}} \sum_{\omega \in \Omega} \sum_{a \in A} q(x',\omega,a,x) = \sum_{\omega \in \Omega} \sum_{a \in A} \sum_{x' \in X_{k+1}} q(x,\omega,a,x') \hfill \forall k\in[0 \dots L-1], \forall x \in X_k} \label{opt:line3}\\
			&q(x, \omega, a, x^{\prime})-\overline{P}_t(x^{\prime} | x, \omega,a) \sum_{x'' \in X_{k+1}} q(x, \omega, a, x'') \leq \epsilon(x, \omega, a, x^{\prime}) \nonumber\\
			&\specialcell{\hfill \forall k\in[0 \ldots L-1], \forall(x, \omega, a, x^{\prime}) \in X_k \times \Omega\times A \times X_{k+1}} \label{opt:line4} \\
			&\overline{P}_t(x' | x, a,\omega) \sum_{x'' \in X_{k+1}} q(x, \omega, a, x'')-q(x, \omega, a, x^{\prime}) \leq \epsilon (x,\omega, a, x') \nonumber\\
			&\specialcell{\hfill \forall k\in[0 \ldots L-1] , \forall(x, \omega, a, x^{\prime}) \in X_k \times \Omega\times A \times X_{k+1}} \label{opt:line5} \\
			&\specialcell{\sum_{x^{\prime} \in X_{k+1}} \epsilon(x,\omega, a, x^{\prime}) \leq \epsilon_t(x, \omega, a) \sum_{x^{\prime} \in X_{k+1}} q(x, \omega, a, x^{\prime}) \hfill \forall k\in[0 \ldots L-1], \forall(x, \omega, a) \in X_k \times \Omega \times A} \label{opt:line6}\\
			&\specialcell{q(x,\omega)-\overline\mu_t(\omega | x)\sum_{\omega' \in\Omega}q(x,\omega')\leq \zeta(x,\omega) \hfill \forall k\in[0 \dots L-1], \forall (x,\omega) \in X_k \times \Omega}\label{opt:line7}\\
			&\specialcell{\overline\mu_t(\omega | x)\sum_{\omega' \in\Omega}q(x,\omega')-q(x,\omega)\leq \zeta(x,\omega)\hfill \forall k\in[0\dots L-1], \forall (x,\omega) \in X_k \times \Omega} \label{opt:line8}\\
			&\specialcell{\sum_{\omega\in \Omega}\zeta(x,\omega)\leq \zeta_t(x)\sum_{\omega\in\Omega}q(x,\omega), \hfill \forall k\in[0 \dots L-1], \forall x \in X_k}\label{opt:line9}\\
			&\sum_{\omega\in\Omega} \sum_{x'\in X_{k+1}}q(x,\omega, a,x') \big( \overline r_{R,t}(x,\omega, a) + \xi_{R,t}(x,\omega,a)-\overline r_{R,t}(x,\omega,a')+\xi_{R,t}(x,\omega, a') \big)\geq 0 \nonumber \\
			& \specialcell{\hfill \forall k\in[0 \dots L-1], \forall (x,a) \in X_k \times A, \forall  a' \in A } \label{opt:line10}\\
			&\specialcell{q(x,\omega,a,x') \geq 0 \hfill \forall k\in[0 \dots L-1], \forall (x,a,x') \in X_k \times \Omega \times A \times X_{k+1},}\label{opt:line11}
		\end{align}
	\end{subequations}
where Objective~\eqref{opt:line1} maximizes the upper confidence bound of the sender reward, Constraint~\eqref{opt:line2} ensures that the occupancy measure sums to $1$ for every layer, Constraint~\eqref{opt:line3} is the flow constraint, Constraint~\eqref{opt:line4} is related to the confidence interval on the transition functions, Constraint~\eqref{opt:line5} is still related to the confidence bounds on the transition function, Constraint~\eqref{opt:line6} allows to write linearly the constraints related to the transition functions even if the interval holds for the $\|\cdot\|_1$,
Constraint~\eqref{opt:line7} is related to the confidence interval on the outcomes, Constraint~\eqref{opt:line8} is still related to the confidence bounds on the outcomes, Constraint~\eqref{opt:line9} allows to write linearly the constraints related to the outcomes even if the interval holds for the $\|\cdot\|_1$,
Constraint~\eqref{opt:line10} is the optimistic constraint for the Incentive Compatibility (IC) property and, finally, Constraint~\eqref{opt:line11} ensures that the occupancy are greater than zero. 


\AlwaysFeasible*

\begin{proof}
	First we notice that under the clean event $\mathcal{E}(\delta)$ the true transition function $P$ and the prior $\mu$ are included in the their confidence interval; thus, they are available in the constrained space defined by \texttt{Opt-Opt}. Then, we focus on the incentive compatibility constraints. Referring as $q^\diamond$ to an incentive compatible occupancy measure, under $\mathcal{E}(\delta)$, we have that:
	\begin{align*}
		&\sum_{\omega\in\Omega, x'\in X_{k+1}}q^\diamond(x,\omega, a,x')\left ( \overline r_{R,t}(x,\omega, a) + \xi_{R,t}(x,\omega,a)-\overline r_{R,t}(x,\omega, \bar a)+\xi_{R,t}(x,\omega,\overline a)\right )\geq \\
		&\sum_{\omega\in\Omega, x'\in X_{k+1}}q^\diamond(x,\omega, a,x')\left (  r_{R}(x,\omega, a)- r_{R}(x,\omega, \bar a)\right )\geq 0,
	\end{align*}
for any $ k\in[L-1], (x,a) \in X_k \times A, \forall \bar a \in A $. As a result, if  $q^\diamond$ is incentive compatible, it belongs to the optimistic decision space, which concludes the proof.
\end{proof}

\section{Full feedback}\label{app:full_feedback}
In this section we report the omitted proof related to Algorithm~\ref{alg: full_feedback}. Notice that the bound on the transition function estimations still hold when the feedback is partial.
\subsection{Transition functions}
We start by showing that the estimated occupancy measures which encompass the information related to the outcomes and the transitions concentrate with respect to the true occupancy measures.

\OneNormOccupancy*
\begin{proof}
	We start noticing that, for any $(x,\omega, a)\in X \times \Omega \times A$, we have:
	\begin{align*}
		\sum_{x'\in X_{k(x)+1}} & | q^{P_t, \phi_t, \mu_t}(x,\omega, a, x') - q^{P, \phi_t, \mu}(x,\omega, a, x')  |  \\
		& = \sum_{x'\in X_{k(x)+1}} \left | q^{P_t, \phi_t, \mu_t}(x,\omega, a) P_t(x' | x,\omega,a) - q^{P, \phi_t, \mu}(x,\omega, a) P(x' | x,\omega,a)\right | \\
		& \leq \sum_{x'\in X_{k(x)+1}} \left | q^{P_t, \phi_t, \mu_t}(x,\omega, a) P_t(x' | x,\omega,a) - q^{P, \phi_t, \mu}(x,\omega, a) P_t(x' | x,\omega,a)\right |  \\
		& \ \ \ + \sum_{x'\in X_{k(x)+1}} \left | q^{P, \phi_t, \mu}(x,\omega, a) P_t(x' | x,\omega,a) - q^{P, \phi_t, \mu}(x,\omega, a) P(x' | x,\omega,a)\right |  \\
		& = \sum_{x'\in X_{k(x)+1}} \left | q^{P_t, \phi_t, \mu_t}(x,\omega, a)  - q^{P, \phi_t, \mu}(x,\omega, a)\right | P_t(x' | x,\omega,a) \\
		& \ \ \ + \sum_{x'\in X_{k(x)+1}}q^{P, \phi_t, \mu}(x,\omega, a) \left |  P_t(x' | x,\omega,a) -  P(x' | x,\omega,a)\right |  \\
		& =  \left | q^{P_t, \phi_t, \mu_t}(x,\omega, a)  - q^{P, \phi_t, \mu}(x,\omega, a)\right |  + q^{P, \phi_t, \mu}(x,\omega, a) \|  P_t(\cdot | x,\omega,a) -  P(\cdot | x,\omega,a)\| _1. 
	\end{align*}
	Thus, summing over $t \in [T]$ and $(x,\omega, a)\in X \times \Omega \times A$ we obtain:
	\begin{align*}
		\sum_{t\in[T]}	\|q_t - \widehat{q}_t \|_1 \leq \sum_{t\in[T]}\sum_{x\in X,\omega\in \Omega,a\in A} &\Big( \left | q^{P_t, \phi_t, \mu_t}(x,\omega, a)  - q^{P, \phi_t, \mu}(x,\omega, a)\right |  \\ &+ q^{P, \phi_t, \mu}(x,\omega, a) \|  P_t(\cdot | x,\omega,a) -  P(\cdot | x,\omega,a)\| _1\Big).
	\end{align*}
	Next, we focus on the first part of the equation, noticing that:
	\begin{align*}
		\left | q^{P_t, \phi_t, \mu_t}(x,\omega, a)  - q^{P, \phi_t, \mu}(x,\omega, a)\right |  \leq \left | q^{P_t, \phi_t, \mu_t}(x,\omega, a)  - q^{P_t, \phi_t, \mu}(x,\omega, a)\right |  + \left | q^{P_t, \phi_t, \mu}(x,\omega, a)  - q^{P, \phi_t, \mu}(x,\omega, a)\right | 
	\end{align*}
	
	\paragraph{Bound on $\left | q^{P_t, \phi_t, \mu_t}(x,\omega, a)  - q^{P_t, \phi_t, \mu}(x,\omega, a)\right |$}
	
	We bound this term by induction. At the first layer we have:
	\begin{align*}
		\sum_{x_0\in X_0}\sum_{\omega_0\in\Omega}\sum_{a_0\in A} & \left | q^{P_t, \phi_t, \mu_t}(x_0,\omega_0, a_0)  - q^{P_t, \phi_t, \mu}(x_0,\omega_0, a_0)\right | \\ & = \sum_{\omega_0\in\Omega}\sum_{a_0\in A} \left | \mu_t(x_0,\omega_0)\phi_t(a_0|x_0,\omega_0) - \mu(x_0,\omega_0)\phi_t(a_0|x_0,\omega_0)\right | \\
		& \leq \sum_{\omega_0\in\Omega} \left | \mu_t(x_0,\omega_0) - \mu(x_0,\omega_0)\right | \\
		& = q^{P_t,\phi_t,\mu}(x_0)\sum_{\omega_0\in\Omega} \left | \mu_t(x_0,\omega_0) - \mu(x_0,\omega_0)\right |.
	\end{align*}
	observing that $X_0=\{x_0\}$. Now we show that, if the result holds for $x_{k-1}$, it holds for $x_k$, as follows,
	\begin{align*}
		&\sum_{x_k\in X_k}\sum_{\omega_k\in\Omega}\sum_{a_k\in A} \left | q^{P_t, \phi_t, \mu_t}(x_k,\omega_k, a_k)  - q^{P_t, \phi_t, \mu}(x_k,\omega_k, a_k)\right |  \\
		& = \sum_{x_{k-1}\in X_{k-1}}\sum_{\omega_{k-1}\in\Omega}\sum_{a_{k-1}\in A}\sum_{x_k\in X_k}\sum_{\omega_k\in\Omega}\sum_{a_k\in A}   | q^{P_t, \phi_t, \mu_t}(x_{k-1},\omega_{k-1}, a_{k-1})P_t(x_k | x_{k-1},\omega_{k-1}, a_{k-1}) \mu_t(x_k,\omega_k)  + \\ & \mkern250mu  -q^{P_t, \phi_t, \mu}(x_{k-1},\omega_{k-1}, a_{k-1})P_t(x_k | x_{k-1},\omega_{k-1}, a_{k-1})\mu(x_k,\omega_k) | \phi_t(a_k | x_k,\omega_k) \\
		& = \sum_{x_{k-1}\in X_{k-1}}\sum_{\omega_{k-1}\in\Omega}\sum_{a_{k-1}\in A}\sum_{x_k\in X_k}\sum_{\omega_k\in\Omega}   | q^{P_t, \phi_t, \mu_t}(x_{k-1},\omega_{k-1}, a_{k-1})P_t(x_k | x_{k-1},\omega_{k-1}, a_{k-1}) \mu_t(x_k,\omega_k)  + \\ & \mkern300mu  -q^{P_t, \phi_t, \mu}(x_{k-1},\omega_{k-1}, a_{k-1})P_t(x_k | x_{k-1},\omega_{k-1}, a_{k-1})\mu(x_k,\omega_k) | \\
		& \leq \sum_{x_{k-1}\in X_{k-1}}\sum_{\omega_{k-1}\in\Omega}\sum_{a_{k-1}\in A}\sum_{x_k\in X_k}\sum_{\omega_k\in\Omega}   | q^{P_t, \phi_t, \mu_t}(x_{k-1},\omega_{k-1}, a_{k-1})P_t(x_k | x_{k-1},\omega_{k-1}, a_{k-1}) \mu_t(x_k,\omega_k)  + \\ & \mkern300mu  -q^{P_t, \phi_t, \mu}(x_{k-1},\omega_{k-1}, a_{k-1})P_t(x_k | x_{k-1},\omega_{k-1}, a_{k-1})\mu_t(x_k,\omega_k) | + \\ 
		&  \quad +\sum_{x_{k-1}\in X_{k-1}}\sum_{\omega_{k-1}\in\Omega}\sum_{a_{k-1}\in A}\sum_{x_k\in X_k}\sum_{\omega_k\in\Omega}   | q^{P_t, \phi_t, \mu}(x_{k-1},\omega_{k-1}, a_{k-1})P_t(x_k | x_{k-1},\omega_{k-1}, a_{k-1}) \mu_t(x_k,\omega_k)  + \\ & \mkern300mu  -q^{P_t, \phi_t, \mu}(x_{k-1},\omega_{k-1}, a_{k-1})P_t(x_k | x_{k-1},\omega_{k-1}, a_{k-1})\mu(x_k,\omega_k) | \\
		& \leq \sum_{x_{k-1}\in X_{k-1}k}\sum_{\omega_{k-1}\in\Omega}\sum_{a_{k-1}\in A} \left | q^{P_t, \phi_t, \mu_t}(x_{k-1},\omega_{k-1}, a_{k-1})  - q^{P_t, \phi_t, \mu}(x_{k-1},\omega_{k-1}, a_{k-1})\right | \\ & \mkern300mu + \sum_{x_{k}\in X_{k}}q^{P_t,\phi_t,\mu} (x_k) \sum_{\omega_k\in\Omega} \left | \mu_t(x_k,\omega_k) - \mu(x_k,\omega_k)\right |.
	\end{align*}
	Thus, by induction hypothesis, it follows,
	\begin{equation*}
		\sum_{x_k\in X_k}\sum_{\omega_k\in\Omega}\sum_{a_k\in A} \left | q^{P_t, \phi_t, \mu_t}(x_k,\omega_k, a_k)  - q^{P_t, \phi_t, \mu}(x_k,\omega_k, a_k)\right | \leq \sum_{s=0}^{k} \sum_{x_{s}\in X_{s}} q^{P_t,\phi_t,\mu} (x_s)  \| \mu_t(\cdot |x_s) - \mu(\cdot |x_s)\|_1.
	\end{equation*}
	
	\paragraph{Bound on $\left | q^{P_t, \phi_t, \mu}(x,\omega, a)  - q^{P, \phi_t, \mu}(x,\omega, a)\right | $} To bound this term, we proceed again by induction. Thus, we notice that:
	\begin{align*}
		&\sum_{x_1\in X_1}\sum_{\omega_1\in\Omega}\sum_{a_1\in A} | q^{P_t, \phi_t, \mu}(x_1,\omega_1, a_1)  -  q^{P, \phi_t, \mu}(x_1,\omega_1, a_1) | \\ & =  \sum_{\omega_0\in\Omega}\sum_{a_0\in A} \sum_{x_1\in X_1}\sum_{\omega_1\in\Omega}\sum_{a_1\in A} | \mu(x_0,\omega_0)\phi_t(a_0|x_0,\omega_0)P_t(x_1 |x_0, \omega_0, a_0)\mu(x_1,\omega_1)\phi_t(a_1|x_1,\omega_1) \\ &  \mkern300mu - \mu(x_0,\omega_0)\phi_t(a_0|x_0,\omega_0)P(x_1 |x_0, \omega_0, a_0)\mu(x_1,\omega_1)\phi_t(a_1|x_1,\omega_1)| \\
		& = \sum_{\omega_0\in\Omega}\sum_{a_0\in A} \mu(x_0,\omega_0)\phi_t(a_0|x_0,\omega_0) \sum_{x_1\in X_1} | P_t(x_1 |x_0, \omega_0, a_0)-P(x_1 |x_0, \omega_0, a_0)| \sum_{\omega_1\in\Omega}\sum_{a_1\in A}
		\mu(x_1,\omega_1)\phi_t(a_1|x_1,\omega_1)| \\
		& \leq  \sum_{\omega_0\in\Omega}\sum_{a_0\in A} q^{P,\phi_t,\mu}(x_0, \omega_0,a_0) \|P_t(\cdot|x_0,\omega_0,a_0)-P(\cdot|x_0,\omega_0,a_0)\|_1.
	\end{align*}
	Now, we proceed with the induction step, 
	\begin{align*}
		& \sum_{x_k\in X_k}\sum_{\omega_k\in\Omega}\sum_{a_k\in A} | q^{P_t, \phi_t, \mu}(x_k,\omega_k, a_k)  -  q^{P, \phi_t, \mu}(x_k,\omega_k, a_k) | \\ 
		& =\sum_{x_{k-1}\in X_{k-1}}\sum_{\omega_{k-1}\in\Omega}\sum_{a_{k-1}\in A}\sum_{x_k\in X_k}\sum_{\omega_k\in\Omega}\sum_{a_k\in A} | q^{P_t,\phi_t,\mu}(x_{k-1},\omega_{k-1},a_{k-1})P_t(x_{k} |x_{k-1}, \omega_{k-1}, a_{k-1})\mu(x_{k},\omega_{k})\phi_t(a_{k}|x_{k},\omega_{k}) \\ &  \mkern300mu - q^{P,\phi_t,\mu}(x_{k-1},\omega_{k-1},a_{k-1})P(x_{k} |x_{k-1}, \omega_{k-1}, a_{k-1})\mu(x_{k},\omega_{k})\phi_t(a_{k}|x_{k},\omega_{k})|\\
		& = \sum_{x_{k-1}\in X_{k-1}}\sum_{\omega_{k-1}\in\Omega}\sum_{a_{k-1}\in A}\sum_{x_k\in X_k} | q^{P_t,\phi_t,\mu}(x_{k-1},\omega_{k-1},a_{k-1})P_t(x_{k} |x_{k-1}, \omega_{k-1}, a_{k-1})\\ &  \mkern300mu - q^{P,\phi_t,\mu}(x_{k-1},\omega_{k-1},a_{k-1})P(x_{k} |x_{k-1}, \omega_{k-1}, a_{k-1})|\\
		& \leq  \sum_{x_{k-1}\in X_{k-1}}\sum_{\omega_{k-1}\in\Omega}\sum_{a_{k-1}\in A}\sum_{x_k\in X_k} | q^{P_t,\phi_t,\mu}(x_{k-1},\omega_{k-1},a_{k-1})P_t(x_{k} |x_{k-1}, \omega_{k-1}, a_{k-1})\\ &  \mkern300mu - q^{P,\phi_t,\mu}(x_{k-1},\omega_{k-1},a_{k-1})P_t(x_{k} |x_{k-1}, \omega_{k-1}, a_{k-1})|\\
		& +  \sum_{x_{k-1}\in X_{k-1}}\sum_{\omega_{k-1}\in\Omega}\sum_{a_{k-1}\in A}\sum_{x_k\in X_k} | q^{P,\phi_t,\mu}(x_{k-1},\omega_{k-1},a_{k-1})P_t(x_{k} |x_{k-1}, \omega_{k-1}, a_{k-1})\\ &  \mkern300mu - q^{P,\phi_t,\mu}(x_{k-1},\omega_{k-1},a_{k-1})P(x_{k} |x_{k-1}, \omega_{k-1}, a_{k-1})|\\
		& \leq \sum_{x_{k-1}\in X_{k-1}}\sum_{\omega_{k-1}\in\Omega}\sum_{a_{k-1}\in A} | q^{P_t, \phi_t, \mu}(x_{k-1},\omega_{k-1}, a_{k-1})  -  q^{P, \phi_t, \mu}(x_{k-1},\omega_{k-1}, a_{k-1}) |  \\ &  \mkern100mu+ \sum_{x_{k-1}\in X_{k-1}}\sum_{\omega_{k-1}\in\Omega}\sum_{a_{k-1}\in A} q^{P, \phi_t, \mu}(x_{k-1},\omega_{k-1}, a_{k-1})\| P_t(\cdot|x_{k-1},\omega_{k-1},a_{k-1})- P(\cdot|x_{k-1},\omega_{k-1},a_{k-1})\|_1.
	\end{align*}
	Thus by induction hypothesis we obtain,
	\begin{align*}
		\sum_{x_k\in X_k}\sum_{\omega_k\in\Omega}&\sum_{a_k\in A} | q^{P_t, \phi_t, \mu}(x_k,\omega_k, a_k)  -  q^{P, \phi_t, \mu}(x_k,\omega_k, a_k) | \\ &\leq \sum_{s=0}^{k-1} \sum_{x_{s}\in X_{s}}\sum_{\omega_{s}\in\Omega}\sum_{a_{s}\in A} q^{P, \phi_t, \mu}(x_{s},\omega_{s}, a_{s})\| P_t(\cdot|x_{s},\omega_{s},a_{s})- P(\cdot|x_{s},\omega_{s},a_{s})\|_1.
	\end{align*}
	Returning to the quantity of interest we have: 
	\begin{align}
		\sum_{t\in[T]}	\|q_t - \widehat{q}_t \|_1 &\leq  2\sum_{t\in[T]}\sum_{k=0}^{L-1}\sum_{s=0}^{k-1} \sum_{x_{s}\in X_{s}}\sum_{\omega_{s}\in\Omega}\sum_{a_{s}\in A} q^{P, \phi_t, \mu}(x_{s},\omega_{s}, a_{s})\| P_t(\cdot|x_{s},\omega_{s},a_{s}) +\nonumber \\& \mkern100mu- P(\cdot|x_{s},\omega_{s},a_{s})\|_1  + \sum_{t\in[T]}\sum_{k=0}^{L-1}\sum_{s=0}^{k} \sum_{x_{s}\in X_{s}} q^{P_t,\phi_t,\mu} (x_s)  \| \mu_t(\cdot |x_s) - \mu(\cdot |x_s)\|_1.\label{eq:rosenberg}
	\end{align}
	We proceed bounding the first term in Inequality~\eqref{eq:rosenberg}. Fixing a layer $k \in [0, \dots, L-1]$, employing Azuma-Hoeffding inequality and noticing that $\| P_t(\cdot|x_{k},\omega_{k},a_{k})-  P(\cdot|x_{k},\omega_{k},a_{k})\|_1 \leq 2 $, we have, with probability $1-2\delta$:
	\begin{align*}
		\sum_{t\in[T]} & \sum_{s=0}^{k-1} \sum_{x_{s}\in X_{s}}\sum_{\omega_{s}\in\Omega}\sum_{a_{s}\in A} q^{P, \phi_t, \mu}(x_{s},\omega_{s}, a_{s})\| P_t(\cdot|x_{s},\omega_{s},a_{s})- P(\cdot|x_{s},\omega_{s},a_{s})\|_1  \\
		& \leq \sum_{s=0}^{k-1}\sum_{t\in[T]}  \sum_{x_{s}\in X_{s}}\sum_{\omega_{s}\in\Omega}\sum_{a_{s}\in A}\sqrt{\frac{2 |X_{k(x_s)+1}|\ln \left(\frac{T|X||\Omega||A|}{\delta}\right)}{\max \left\{1, N_t(x_s, \omega_s, a_s)\right\}}}\mathbbm{1}_t\{x_s,a_s,\omega_s\}+ \sum_{s=0}^{k-1}2|X_s| \sqrt{2T\ln\left(\frac{L}{\delta}\right)}\\
		& \leq \sum_{s=0}^{k-1}\sqrt{2T|X_s||X_{s+1}|A||\Omega|\ln \left(\frac{T|X||\Omega||A|}{\delta}\right)} + \sum_{s=0}^{k-1}2|X_s| \sqrt{2T\ln\left(\frac{L}{\delta}\right)} \\
		& \leq  |X|\sqrt{2T|A||\Omega|\ln \left(\frac{T|X||\Omega||A|}{\delta}\right)}+ 2|X| \sqrt{2T\ln\left(\frac{L}{\delta}\right)}.
	\end{align*}
	Finally summing over $L$, we have, with probability at least $1-2\delta$ (which derives from a union bound between Azuma-Hoeffding inequality and the bound on the transitions):
	\begin{align*}
		\sum_{t\in[T]}\sum_{k=0}^{L-1}\sum_{s=0}^{k-1} \sum_{x_{s}\in X_{s}}\sum_{\omega_{s}\in\Omega}\sum_{a_{s}\in A} & q^{P, \phi_t, \mu}(x_{s},\omega_{s}, a_{s})\| P_t(\cdot|x_{s},\omega_{s},a_{s})- P(\cdot|x_{s},\omega_{s},a_{s})\|_1 \\ 
		& \leq  L|X|\sqrt{2T|A||\Omega|\ln \left(\frac{T|X||\Omega||A|}{\delta}\right)}+ 2L|X| \sqrt{2T\ln\left(\frac{L}{\delta}\right)}.
	\end{align*}
	To bound the remaining term in Inequality~\eqref{eq:rosenberg}, we proceed as follows, 
	\begin{align*}
		 \sum_{t\in[T]}\sum_{k=0}^{L-1}&\sum_{s=0}^{k} \sum_{x_{s}\in X_{s}}  q^{P_t,\phi_t,\mu} (x_s)  \| \mu_t(\cdot |x_s) - \mu(\cdot |x_s)\|_1  \\ & \leq \sum_{t\in[T]}\sum_{k=0}^{L-1}\sum_{s=0}^{k} \sum_{x_{s}\in X_{s}} q^{P,\phi_t,\mu} (x_s)  \| \mu_t(\cdot |x_s) - \mu(\cdot |x_s)\|_1 +\\ & \mkern200mu + \sum_{t\in[T]}\sum_{k=0}^{L-1}\sum_{s=0}^{k} \sum_{x_{s}\in X_{s}} \left(q^{P_t,\phi_t,\mu} (x_s) - q^{P,\phi_t,\mu} (x_s)\right) \| \mu_t(\cdot |x_s) - \mu(\cdot |x_s)\|_1 \\
		& \leq \sum_{t\in[T]}\sum_{k=0}^{L-1}\sum_{s=0}^{k} \sum_{x_{s}\in X_{s}} q^{P,\phi_t,\mu} (x_s)  \| \mu_t(\cdot |x_s) - \mu(\cdot |x_s)\|_1 + \sum_{t\in[T]}\sum_{k=0}^{L-1}\sum_{s=0}^{k} \sum_{x_{s}\in X_{s}} 2 \left(q^{P_t,\phi_t,\mu} (x_s) - q^{P,\phi_t,\mu} (x_s)\right) \\
		& \leq \sum_{t\in[T]}\sum_{k=0}^{L-1}\sum_{s=0}^{k} \sum_{x_{s}\in X_{s}} q^{P,\phi_t,\mu} (x_s)  \| \mu_t(\cdot |x_s) - \mu(\cdot |x_s)\|_1  +\\ & \mkern200mu+ \sum_{t\in[T]}\sum_{k=0}^{L-1}\sum_{s=0}^{k} \sum_{x_{s}\in X_{s}} \sum_{\omega_s\in\Omega}  \sum_{a_s\in A} 2 \left|q^{P_t,\phi_t,\mu} (x_s, \omega_s, a_s) - q^{P,\phi_t,\mu} (x_s, \omega_s, a_s)\right|.
	\end{align*}
	The second term is bounded by the previous analysis paying an additional $L$ factor, while, to bound the first terms we apply the Azuma-Hoeffding inequality and proceed as follows:
	\begin{align*}
		&\sum_{t\in[T]}\sum_{k=0}^{L-1}\sum_{s=0}^{k} \sum_{x_{s}\in X_{s}} q^{P,\phi_t,\mu} (x_s) \sum_{\omega_s\in\Omega} | \mu_t(x_s,\omega_s) - \mu(x_s,\omega_s)|  \\ & \leq \sum_{t\in[T]}\sum_{k=0}^{L-1}\sum_{s=0}^{k} \sum_{x_{s}\in X_{s}} \mathbbm{1}_t \{x_s\}  \| \mu_t(\cdot |x_s) - \mu(\cdot |x_s)\|_1 + 2L|X| \sqrt{2T\ln\left(\frac{L}{\delta}\right)}\\
		& \leq \sum_{t\in[T]}\sum_{k=0}^{L-1}\sum_{s=0}^{k} \sum_{x_{s}\in X_{s}} \mathbbm{1}_t \{x_s\}  \sqrt{\frac{2|\Omega|\ln(T|X|/\delta)}{\max\{1, N_t(x_s)\}}} + 2L|X| \sqrt{2T\ln\left(\frac{L}{\delta}\right)}\\
		& \leq 2L\sqrt{2L|X||\Omega|T\ln\left(\frac{T|X|}{\delta}\right)} + 2L|X| \sqrt{2T\ln\left(\frac{L}{\delta}\right)},
	\end{align*}
	with probability at least $1-2\delta$, given the union bound over the Azuma-Hoeffding and the bound on the outcomes.
	Finally, with a union bound between the bound on the transitions and the outcomes (which are both encompassed by the clean event) and the Azuma-Hoeffding inequalities, with probability at least $1-4\delta$, we have:
	\begin{align*}
		\sum_{t\in[T]}	\|q_t - \widehat{q}_t \|_1 &\leq \mathcal{O}\Bigg(L\sqrt{L|X||\Omega|T\ln\left(\frac{T|X|}{\delta}\right)} + L|X| \sqrt{T\ln\left(\frac{L}{\delta}\right)} +\\ &\mkern200mu+ L^2|X|\sqrt{T|A||\Omega|\ln \left(\frac{T|X||\Omega||A|}{\delta}\right)}+ L^2|X| \sqrt{T\ln\left(\frac{L}{\delta}\right)}\Bigg)\\
		& \leq  \mathcal{O}\left(L^2|X|\sqrt{T|A||\Omega|\ln \left(\frac{T|X||\Omega||A|}{\delta}\right)}\right),
	\end{align*}
	which concludes the proof.
\end{proof}

\subsection{Regret}

In the following section we show that Algorithm~\ref{alg: full_feedback} attains $\tilde{\mathcal{O}}(\sqrt{T})$ regret. This is done showing that the confidence intervals over transitions, outcomes and sender reward concentrate at a rate of $\tilde{\mathcal{O}}(1/\sqrt{T})$. 

\fullregret*

\begin{proof}
	We notice that the regret can be decomposed as follows:
	\begin{align*}
		R_T = \sum_{t\in[T]} r_{S}^\top (q^* -  q_t) = \sum_{t\in[T]} r_{S}^\top (q^* -  \widehat q_t) + \sum_{t\in[T]} r_{S}^\top (\widehat q_t -  q_t). 
	\end{align*}
	The second term is bounded by Hölder inequality and applying Lemma~\ref{lem:transition}. To bound the first term we notice that, under the clean event, and by definition of the linear program solved by Algorithm~\ref{alg: full_feedback}, it holds:
	\begin{align*}
		(r_S+2\xi_{S,t})^\top\widehat{q}_t \geq (\overline r_{S,t}+\xi_{S,t})^\top\widehat{q}_t \geq (\overline r_{S,t}+\xi_{S,t})^\top q^*  \geq r_S^\top q^*.
	\end{align*}
	Thus, we have, 
	\begin{align*}
		\sum_{t\in[T]} r_{S}^\top (q^* -  \widehat q_t)  \leq 2 \sum_{t\in[T]} \xi_{S,t}^\top \widehat{q}_t  =  2\sum_{t\in[T]} \xi_{S,t}^\top q_t + 2 \sum_{t\in[T]} \xi_{S,t}^\top (\widehat{q}_t-q_t).
	\end{align*}
	The second term is bounded by Hölder inequality and applying Lemma~\ref{lem:transition}, which holds under the clean event, with probability at least $1-2\delta$. To bound the first term we employ Lemma~\ref{lem:concentration} which holds under the clean event, with probability at least $1-\delta$, and a union bound, which concludes the proof. 
\end{proof}

\subsection{Violations}

In the following section we show that Algorithm~\ref{alg: full_feedback} attains $\tilde{\mathcal{O}}(\sqrt{T})$ violations. This is possible since, in the \emph{full-feedback} setting, the incentive compatibility constraints collapse to standard linear constraints.

\fullviolation*

\begin{proof}
	In the proof, we compactly denote the receivers' best response in a given state-action pair $(x, a) \in X \times A$ at time $t \in [T]$ as $b^t(a, x) \coloneqq b^{\phi^{\widehat{q}_t}}(a, x)$. Furthermore, by employing the definition of the linear program and summing over $(x, \omega, a)$, for any episode $t$, under the clean event, it holds:
	\begin{equation*}
		\sum_{x\in X,\omega \in \Omega ,a\in A} \widehat{q}_t(x,\omega, a) \left(\overline{r}_{R,t}(x,\omega,a)+ \xi_{R,t}(x,\omega, a) - \overline{r}_{R,t}(x,\omega,b^t(a,x)) + \xi_{R,t}(x,\omega,b^t(a,x))\right) \geq 0,
	\end{equation*}
	which, in turn, implies that:
	\begin{equation*}
		\sum_{x\in X,\omega \in \Omega ,a\in A} \widehat{q}_t(x,\omega, a) \left(r_{R}(x,\omega,a)+ 2\xi_{R,t}(x,\omega, a) - r_{R}(x,\omega,b^t(a,x)) + 2\xi_{R,t}(x,\omega,b^t(a,x))\right)\geq 0.
	\end{equation*}
	Thus, noticing that, in the \emph{full-feedback} setting, we have $\xi_{R,t}(x,\omega, a)=\xi_{R,t}(x,\omega,b^t(a,x))$, we obtain:
	\begin{align*}
		\sum_{x\in X,\omega \in \Omega ,a\in A} \widehat{q}_t(x,\omega, a)\left(r_{R}(x,\omega,b^t(a,x)) - 	r_{R}(x,\omega,a)\right) & \leq 4 	\sum_{x\in X,\omega \in \Omega ,a\in A} \widehat{q}_t(x,\omega, a)\xi_{R,t}(x,\omega, a) \\ 
		&\leq 4\sum_{x\in X,\omega \in \Omega }\widehat{q}_t(x,\omega) \xi_{R,t}(x,\omega),
	\end{align*} 
	where $\xi_{R,t}(x,\omega)=\sqrt{\frac{\ln(3T|X||\Omega|/\delta)}{\max\{1,N_t(x,\omega)\}}}$.
	
	Now we combine the previous equations to bound the first term of the last inequality as follows:
	\begin{align}
		\sum_{t\in[T]}\sum_{x\in X,\omega \in \Omega ,a\in A} & \widehat{q}_t(x,\omega, a)\left(r_{R}(x,\omega,b^t(a,x)) - 	r_{R}(x,\omega,a)\right) \leq 4\sum_{t\in[T]} \sum_{x\in X,\omega \in \Omega }\widehat{q}_t(x,\omega) \xi_{R,t}(x,\omega)\nonumber\\
		& = 4\sum_{t\in[T]} \sum_{x\in X,\omega \in \Omega }q_t(x,\omega) \xi_{R,t}(x,\omega) + 4\sum_{t\in[T]} \sum_{x\in X,\omega \in \Omega }(\widehat{q}_t(x,\omega)-q_t(x,\omega)) \xi_{R,t}(x,\omega)\nonumber\\
		& \leq 4\sum_{t\in[T]} \sum_{x\in X,\omega \in \Omega }q_t(x,\omega) \xi_{R,t}(x,\omega) + \mathcal{O}\left(L^2|X|\sqrt{T|A||\Omega|\ln \left(\frac{T|X||\Omega||A|}{\delta}\right)}\right) \label{eq1:violation_full}\\
		& = 4\sum_{t\in[T]} \sum_{x\in X,\omega \in \Omega }\mathbbm{1}_t\{x,\omega\} \xi_{R,t}(x,\omega) + 4\sum_{t\in[T]} \sum_{x\in X,\omega \in \Omega }(q_t(x,\omega) - \mathbbm{1}_t\{x,\omega\}) \nonumber\\ & \mkern300mu  +\mathcal{O}\left(L^2|X|\sqrt{T|A||\Omega|\ln \left(\frac{T|X||\Omega||A|}{\delta}\right)}\right)\nonumber \\
		& = 4\sum_{t\in[T]} \sum_{x\in X,\omega \in \Omega }\mathbbm{1}_t(x,\omega) \xi_{R,t}(x,\omega) + 4\sum_{t\in[T]} \sum_{x\in X}(q_t(x) - \mathbbm{1}_t(x)) \nonumber\\ & \mkern300mu  +\mathcal{O}\left(L^2|X|\sqrt{T|A||\Omega|\ln \left(\frac{T|X||\Omega||A|}{\delta}\right)}\right) \nonumber\\
		& \leq  4\sum_{t\in[T]} \sum_{x\in X,\omega \in \Omega }\mathbbm{1}_t(x,\omega) \xi_{R,t}(x,\omega) + 4|X|\sqrt{2T\ln\frac{X}{\delta}} \nonumber\\ & \mkern300mu  +\mathcal{O}\left(L^2|X|\sqrt{T|A||\Omega|\ln \left(\frac{T|X||\Omega||A|}{\delta}\right)}\right)  \label{eq2:violation_full} \\
		& \leq \sqrt{9L|X||\Omega|T\ln\frac{3T|X||\Omega|}{\delta}} +{O}\left(L^2|X|\sqrt{T|A||\Omega|\ln \left(\frac{T|X||\Omega||A|}{\delta}\right)}\right)  \label{eq3:violation_full}\\
		& \leq {O}\left(L^2|X|\sqrt{T|A||\Omega|\ln \left(\frac{T|X||\Omega||A|}{\delta}\right)}\right) \nonumber,
	\end{align}
	where Inequality~\eqref{eq1:violation_full} holds by Hölder inequality and Lemma~\ref{lem:transition}, which holds under the clean event, with probability at least $1-2\delta$, Inequality~\eqref{eq2:violation_full} follows by Azuma-Hoeffding and 
	Inequality~\eqref{eq3:violation_full} by Cauchy-Schwarz  inequality and observing that $1+\sum_{t\in[T]}\frac{1}{\sqrt{t}}\leq 3\sqrt{T}$.
	
	Finally, returning to the quantity of interest, we bound the cumulative violations as follows: 
	\begin{align*}
		V_T&:=\sum_{t\in[T]}\sum_{x\in X,\omega \in \Omega ,a\in A}  q_t(x,\omega, a)\left(r_{R}(x,\omega,b^t(a,x)) - 	r_{R}(x,\omega,a)\right) \\& = \sum_{t\in[T]}\sum_{x\in X,\omega \in \Omega ,a\in A}  \widehat{q}_t(x,\omega, a)\left(r_{R}(x,\omega,b^t(a,x)) - r_{R}(x,\omega,a)\right)  \\ & \mkern200mu + \sum_{t\in[T]} \sum_{x\in X,\omega \in \Omega ,a\in A}  (q_t(x,\omega, a)-\widehat{q}_t(x,\omega, a))\left(r_{R}(x,\omega,b^t(a,x)) - 	r_{R}(x,\omega,a)\right) \\
		& \leq \sum_{t\in[T]}\sum_{x\in X,\omega \in \Omega ,a\in A}  \widehat{q}_t(x,\omega, a)\left(r_{R}(x,\omega,b^t(a,x)) - r_{R}(x,\omega,a)\right) +  \sum_{t\in[T]} \sum_{x\in X,\omega \in \Omega ,a\in A}  |q_t(x,\omega, a)-\widehat{q}_t(x,\omega, a)|\\ 
		& \leq \sum_{t\in[T]}\sum_{x\in X,\omega \in \Omega ,a\in A}  \widehat{q}_t(x,\omega, a)\left(r_{R}(x,\omega,b^t(a,x)) - r_{R}(x,\omega,a)\right)  +  {O}\left(L^2|X|\sqrt{T|A||\Omega|\ln \left(\frac{T|X||\Omega||A|}{\delta}\right)}\right) \\
		& \leq {O}\left(L^2|X|\sqrt{T|A||\Omega|\ln \left(\frac{T|X||\Omega||A|}{\delta}\right)}\right),
	\end{align*}
	where the last steps hold by Hölder inequality, Lemma~\ref{lem:transition} and the previous bound on the estimated occupancy measure. The final result holds with probability at least $1-7\delta$ employing a union bound over the clean event, which holds with probability at least $1-4\delta$, the Azuma-Hoeffding inequality used above, which holds with probability at least $1-\delta$ and Lemma~\ref{lem:transition}, which, under the clean event, holds with probability at least $1-2\delta$.  
\end{proof}

\section{Partial  feedback}\label{app:bandit_feedback}
\subsection{Regret}
\begin{lemma}
	\label{lem:concentration}
	Under the event $\mathcal{E}(\delta)$, with probability at least $1-\delta$, it holds:
	\begin{equation*}
	\sum_{t\in[T]} \xi_{S,t}^\top q_t \leq \mathcal{O} \left(\sqrt{L|X||\Omega||A|T\ln\left(\frac{T|X||\Omega||A|}{\delta}\right)}\right) \quad \text{and} \quad \sum_{t\in[T]} \xi_{R,t}^\top q_t \leq \mathcal{O} \left(\sqrt{L|X||\Omega||A|T\ln\left(\frac{T|X||\Omega||A|}{\delta}\right)}\right) 
	\end{equation*}
	
\end{lemma}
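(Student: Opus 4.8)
The plan is to bound the two sums by the same argument, so I will focus on $\sum_{t\in[T]}\xi_{S,t}^\top q_t = \sum_{t\in[T]}\sum_{x,\omega,a} q_t(x,\omega,a)\,\xi_{S,t}(x,\omega,a)$; the argument for $\xi_{R,t}^\top q_t$ is verbatim. The first move is to replace the occupancy weights $q_t$ by the indicators $\mathbbm{1}_t\{x,\omega,a\}$ of the triplets actually visited at episode $t$. Since the policy $\phi_t$ (hence $q_t=q^{P,\phi_t,\mu}$) is fixed at the start of the episode from past observations, and each width $\xi_{S,t}(x,\omega,a)$ is a function of the counters up to $t-1$ only, both are measurable with respect to the history $\mathcal{F}_{t-1}$. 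Therefore $q_t(x,\omega,a)=\mathbb{E}[\mathbbm{1}_t\{x,\omega,a\}\mid \mathcal{F}_{t-1}]$, and $\sum_{x,\omega,a}(q_t-\mathbbm{1}_t)\,\xi_{S,t}$ is a martingale difference. Because exactly one triplet per layer is visited and $\xi_{S,t}\le 1$, each such per-episode increment lies in $[-L,L]$, so Azuma--Hoeffding gives $\sum_{t}\sum_{x,\omega,a}(q_t-\mathbbm{1}_t)\xi_{S,t}\le \mathcal{O}(L\sqrt{T\ln(\nicefrac{1}{\delta})})$ with high probability.

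Next I would control $\sum_{t}\sum_{x,\omega,a}\mathbbm{1}_t\{x,\omega,a\}\,\xi_{S,t}(x,\omega,a)$ by the standard visit-count (pigeonhole) argument. Using $\xi_{S,t}(x,\omega,a)\le \sqrt{\ln(\nicefrac{3T|X||\Omega||A|}{\delta})/\max\{1,N_t(x,\omega,a)\}}$ and the fact that on the $m$-th visit to a triplet $z:=(x,\omega,a)$ one has $N_t(z)=m-1$, I obtain for each fixed $z$ the telescoping estimate $\sum_{t:\,\mathbbm{1}_t\{z\}=1}1/\sqrt{\max\{1,N_t(z)\}}\le 1+\sum_{j=1}^{N_{T+1}(z)-1}j^{-1/2}\le 3\sqrt{N_{T+1}(z)}$. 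Summing over $z$ and applying Cauchy--Schwarz yields $\sum_z\sqrt{N_{T+1}(z)}\le \sqrt{|X||\Omega||A|}\,\sqrt{\sum_z N_{T+1}(z)}=\sqrt{|X||\Omega||A|}\,\sqrt{LT}$, where the final identity uses that each of the $T$ episodes contributes exactly $L$ visited triplets, so $\sum_z N_{T+1}(z)=LT$. This produces the dominant term $\mathcal{O}\big(\sqrt{L|X||\Omega||A|\,T\,\ln(\nicefrac{T|X||\Omega||A|}{\delta})}\big)$. (The same argument run with the full-feedback widths, which depend on $N_t(x,\omega)$ rather than $N_t(x,\omega,a)$, only sums over $(x,\omega)$ pairs and hence gives a smaller bound, so the stated estimate covers that case too.)

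Combining the two pieces, and noting that the Azuma remainder $\mathcal{O}(L\sqrt{T\ln(\nicefrac{1}{\delta})})$ is absorbed into the dominant term since $L\le |X|$ in a loop-free MPP, yields the claimed bound; a union bound over the two martingale concentrations (each at level $\nicefrac{\delta}{2}$), carried out under the clean event $\mathcal{E}(\delta)$, gives both inequalities with probability at least $1-\delta$. I expect the only delicate point to be the first step: one must check that $\xi_{S,t}$ is genuinely predictable, so that the identity $q_t=\mathbb{E}[\mathbbm{1}_t\mid\mathcal{F}_{t-1}]$ holds, and that the passage from $q_t$ to visit indicators is done simultaneously for all triplets rather than pointwise. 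Bundling all triplets into a single per-episode increment, as above, is precisely what keeps the martingale range at $\mathcal{O}(L)$ and avoids an unwanted extra $\sqrt{|X||\Omega||A|}$ factor. The visit-count summation and the Cauchy--Schwarz step are routine.
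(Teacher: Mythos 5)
Your proposal is correct and follows essentially the same route as the paper's proof: an Azuma--Hoeffding step to replace the occupancy weights $q_t$ by per-episode visit indicators (with increments bounded by $L$), followed by the per-triplet telescoping bound $\sum_t \mathbbm{1}_t\{z\}/\sqrt{\max\{1,N_t(z)\}} \le 3\sqrt{N_{T+1}(z)}$, Cauchy--Schwarz over triplets, and the identity $\sum_z N_{T+1}(z) \le LT$. Your explicit treatment of predictability of $\xi_{S,t}$, the $\nicefrac{\delta}{2}$ union bound, and the absorption of the Azuma remainder via $L \le |X|$ are slightly more careful renderings of steps the paper performs implicitly, not a different argument.
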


\begin{proof}
	We bound the quantity of interest as follows:
	\begin{align}
		\sum_{t\in[T]} \xi_{S,t}^\top q_t & \leq \sum_{t\in[T]} \sum_{x\in X,\omega \in \Omega,a\in A} \xi_{S,t}(x,\omega,a)\mathbbm{1}_t\{x,\omega,a\} + L\sqrt{2T\ln\frac{1}{\delta}}\label{eq1:conc}\\
		& =  \sum_{t\in[T]} \sum_{x\in X,\omega \in \Omega,a\in A}\sqrt{\frac{\ln(3T|X||\Omega||A|/\delta)}{ \max\left\{1,N_t(x,\omega, a)\right\}}} \mathbbm{1}_t\{x,\omega,a\} + L\sqrt{2T\ln\frac{1}{\delta}}\nonumber\\
		& \leq \sqrt{9\ln \left(\frac{3T|X||\Omega||A|}{\delta}\right) }\sum_{x\in X,\omega\in\Omega,a\in A}\sqrt{N_T(x,\omega,a)} + L\sqrt{2T\ln\frac{1}{\delta}} \label{eq2:conc}\\
		& \leq \sqrt{9\ln \left(\frac{3T|X||\Omega||A|}{\delta}\right) }\sqrt{|X||\Omega||A|\sum_{x\in X,\omega\in\Omega,a\in A}N_T(x,\omega,a)} + L\sqrt{2T\ln\frac{1}{\delta}} \label{eq3:conc} \\
		& \leq  \sqrt{9L|X||\Omega||A|T\ln \left(\frac{3T|X||\Omega||A|}{\delta}\right)}+ L\sqrt{2T\ln\frac{1}{\delta}} \label{eq4:conc},
	\end{align}
	where Inequality~\eqref{eq1:conc} holds by the Azuma-Hoeffding inequality with probability $1-\delta$, Inequality~\eqref{eq2:conc} follows by observing that $1+\sum_{t\in[T]}\frac{1}{\sqrt{t}}\leq 3 \sqrt{T}$, Inequality~\eqref{eq3:conc} follows from the Cauchy-Schwarz inequality, and Inequality~\eqref{eq4:conc} holds, noticing that $\sum_{x\in X,\omega\in\Omega,a\in A}N_T(x,\omega,a)\leq LT$. With the same analysis, we can prove that the same upper bound holds for $\sum_{t\in[T]} \xi_{R,t}^\top q_t \,$, concluding the proof.
	
\end{proof}
\banditregret*
\begin{proof}
	As a first step, we decompose the sender's regret as follows:
	\begin{align}
		R_T & = \sum_{t\in[T]} r_{S}^\top (q^* -  q_t) \nonumber  \\ 
		& = \sum_{t\in[T]} r_{S}^\top (q^* -  \widehat q_t) + \sum_{t\in[T]} r_{S}^\top (\widehat q_t -  q_t) \nonumber \\
		& \le  \sum_{t\in[T]} r_{S}^\top (q^* -  \widehat q_t) + \mathcal{O}\left(L^2|X|\sqrt{T|A||\Omega|\ln \left(\frac{T|X||\Omega||A|}{\delta}\right)}\right). \label{eq:regret_decomposition} 
	\end{align}
	We observe that the last inequality holds under the event $\mathcal{E}(\delta)$, with a probability of at least $1-2\delta$, and it is derived by applying the Hölder inequality and employing Lemma~\ref{lem:transition}. To bound the first term in Equation~\eqref{eq:regret_decomposition}, we notice that under $\mathcal{E}(\delta)$, we have:
	\begin{align*}
		(r_S+2\xi_{S,t})^\top\widehat{q}_t \geq (\overline r_{S,t}+\xi_{S,t})^\top\widehat{q}_t \geq (\overline r_{S,t}+\xi_{S,t})^\top q^*  \geq r_S^\top q^*,
	\end{align*}
	for each $t > N |X|  |\Omega||A|$ because of the optimality of $\widehat{q}_t$. Thus, rearranging the latter chain of inequalities we have:
	\begin{align*}
		\sum_{t\in[T]}  r_{S}^\top (q^* -  \, \widehat q_t ) & = \sum_{t \le N|X||\Omega||A|} r_{S}^\top (q^* -  \widehat q_t)  + \sum_{t > N|X||\Omega||A|} r_{S}^\top (q^* -  \widehat q_t)\\
		&\le N L |X| |\Omega||A|  + 2 \left( \sum_{t\in[T]}   \xi_{S,t}^\top (   \widehat q_t  - q_t  ) +  \sum_{t\in[T]}   \xi_{S,t}^\top  q_t  \right)\\
		& \le N   L |X|  |\Omega||A| + \mathcal{O}\left(L^2|X|\sqrt{T|A||\Omega|\ln \left(\frac{T|X||\Omega||A|}{\delta}\right)}\right).
	\end{align*}
	In the first inequality above, we employ the fact that the support of each reward function belongs to $[0,1]$, while in the second inequality, we make use of Lemma~\ref{lem:transition}, the Hölder inequality, and Lemma~\ref{lem:concentration}, which hold with a probability of at least $1-3\delta$. Substituting the latter inequality into Equation~\eqref{eq:regret_decomposition}, we obtain:
	\begin{equation*}
		R_T \le \mathcal{O}\left(N   L |X| |\Omega| |A|  + L^2|X|\sqrt{T|A||\Omega|\ln \left(\frac{T|X||\Omega||A|}{\delta}\right)}\right).
	\end{equation*}
	
	 Finally, we observe that the previous upper bound holds with probability at least $1-7\delta$. This follows by employing a union bound and observing that $\mathcal{E}(\delta)$ holds with a probability at least $1-4\delta$, which concludes the proof.
\end{proof}

\subsection{Violations}
	In the following we denote the receivers' best response in a given action $a \in A$ and state $x \in X$ as $b^t(a,x) \coloneqq b^{\phi^{\widehat{q}_t}}(a,x)$. 
\begin{restatable}{lemma}{violationerror}\label{lem:violation_decomposition}
	Under the event $\mathcal{E}(\delta)$ the following holds:
	\begin{equation*}
	V_T \leq \mathcal{O}\left(L^2|X|\sqrt{T|A||\Omega|\ln \left(\frac{T|X||\Omega||A|}{\delta}\right)}\right) + \sum_{t\in[T]} \sum_{x \in X, \omega \in \Omega, a \in A}  q_t(x,\omega, a) \xi_{R,t}(x,\omega,b^t(a,x)),
	\end{equation*}
	with probability at least $1-3\delta$.
\end{restatable}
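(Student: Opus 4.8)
The plan is to replay the opening of the full-feedback violation argument and to stop exactly where the two feedback models part ways. Throughout write $b^t(a,x) := b^{\phi^{\widehat q_t}}(a,x)$, so that the best response appearing in $V_T$ coincides with this quantity (recall $\phi_t = \phi^{\widehat q_t}$). First I would pass from the true occupancy $q_t$ to the computed occupancy $\widehat q_t$ inside $V_T$. Since each reward gap $r_R(x,\omega,b^t(a,x)) - r_R(x,\omega,a)$ lies in $[-1,1]$, Hölder's inequality gives
\[
V_T \le \sum_{t\in[T]} \sum_{x,\omega,a} \widehat q_t(x,\omega,a)\big( r_R(x,\omega,b^t(a,x)) - r_R(x,\omega,a) \big) + \sum_{t\in[T]} \|q_t - \widehat q_t\|_1 ,
\]
and the last sum is $\mathcal{O}\big(L^2|X|\sqrt{T|A||\Omega|\ln(\nicefrac{T|X||\Omega||A|}{\delta})}\big)$ by Lemma~\ref{lem:transition}, which under $\mathcal{E}(\delta)$ holds with probability at least $1-2\delta$.

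Next I would exploit that $\widehat q_t$ is feasible for \texttt{Opt-Opt}, so its optimistic incentive-compatibility constraint~\eqref{opt:line10} holds for every deviation $a'$, in particular for $a' = b^t(a,x)$. Summing that constraint over $a$ and $x$ and using $\sum_{x'} \widehat q_t(x,\omega,a,x') = \widehat q_t(x,\omega,a)$ yields
\[
\sum_{x,\omega,a} \widehat q_t(x,\omega,a)\big( \overline r_{R,t}(x,\omega,a) + \xi_{R,t}(x,\omega,a) - \overline r_{R,t}(x,\omega,b^t(a,x)) + \xi_{R,t}(x,\omega,b^t(a,x)) \big) \ge 0 .
\]
Under $\mathcal{E}(\delta)$ we have $|\overline r_{R,t} - r_R| \le \xi_{R,t}$ entrywise, so substituting the true means and rearranging gives, for each $t$,
\[
\sum_{x,\omega,a} \widehat q_t(x,\omega,a)\big( r_R(x,\omega,b^t(a,x)) - r_R(x,\omega,a) \big) \le 2\sum_{x,\omega,a} \widehat q_t(x,\omega,a)\big( \xi_{R,t}(x,\omega,a) + \xi_{R,t}(x,\omega,b^t(a,x)) \big).
\]

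Summing over $t$ and splitting the right-hand side, the $\xi_{R,t}(x,\omega,a)$ half is first converted from $\widehat q_t$ back to $q_t$ (again via Lemma~\ref{lem:transition} and Hölder, using $\xi_{R,t} \le 1$) and then bounded by $\sum_t \xi_{R,t}^\top q_t = \mathcal{O}\big(\sqrt{L|X||\Omega||A|T\ln(\nicefrac{T|X||\Omega||A|}{\delta})}\big)$ through Lemma~\ref{lem:concentration}, contributing only a further $\sqrt T$ term. The $\xi_{R,t}(x,\omega,b^t(a,x))$ half is converted from $\widehat q_t$ to $q_t$ in the same way, paying one more $\mathcal{O}(\sqrt T)$ occupancy error and leaving precisely the residual $\sum_t \sum_{x,\omega,a} q_t(x,\omega,a)\xi_{R,t}(x,\omega,b^t(a,x))$ asserted in the statement (the harmless constant being absorbed). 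Folding the occupancy error of the first paragraph, the concentration term, and the two conversion errors into a single $\mathcal{O}\big(L^2|X|\sqrt{T|A||\Omega|\ln(\cdots)}\big)$ term yields the claim, and a union bound over Lemma~\ref{lem:transition} ($1-2\delta$) and Lemma~\ref{lem:concentration} ($1-\delta$) gives probability $1-3\delta$.

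The one genuinely delicate point --- and the reason this lemma merely \emph{decomposes} rather than fully bounds $V_T$ --- is that the best-response confidence term cannot be merged into the $q_t\xi_{R,t}$ concentration sum as it was under full feedback. There $\xi_{R,t}(x,\omega,a) = \xi_{R,t}(x,\omega,b^t(a,x))$ because the confidence radius depends only on the pair counter $N_t(x,\omega)$; under partial feedback $\xi_{R,t}$ depends on the triplet counter $N_t(x,\omega,a)$, and $b^t(a,x)$ may be an action that the optimistic IC constraints force to be recommended with vanishing probability, so its counter stays small and its radius large. Isolating this term cleanly is the crux here; controlling it via the dedicated exploration phase of Algorithm~\ref{alg:bandit_feedback} is left to the subsequent analysis.
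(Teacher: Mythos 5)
Your proof is correct and follows essentially the same route as the paper's: the same $q_t \leftrightarrow \widehat q_t$ conversions via Lemma~\ref{lem:transition} and H\"older, the same use of feasibility of $\widehat q_t$ for \texttt{Opt-Opt} under the clean event to pass from the optimistic IC constraint to $2\big(\xi_{R,t}(x,\omega,a)+\xi_{R,t}(x,\omega,b^t(a,x))\big)$, Lemma~\ref{lem:concentration} to absorb the non-best-response half, and the identical $1-3\delta$ union bound over Lemma~\ref{lem:transition} and Lemma~\ref{lem:concentration}. Your closing observation---that the $\xi_{R,t}(x,\omega,b^t(a,x))$ term cannot be merged as in the full-feedback case because the partial-feedback radius depends on the triplet counter $N_t(x,\omega,a)$, and must be handled by the exploration phase in the subsequent theorem---is precisely the paper's rationale for stating this lemma as a decomposition.
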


\begin{proof}
As a first step, we observe that by employing the definition of $\xi_{R,t}$ and noticing that $\widehat{q}_t$ is a feasible solution to LP~\eqref{prob:double_opt} for each $t \in [T]$ under the event $\mathcal{E}(\delta)$, we have:
\begin{align*}
	\sum_{x \in X, \omega \in \Omega, a \in A} \widehat q_t (x,\omega,a)
	\left(r_{R}(x,\omega,a)+ 2\xi_{R,t}(x,\omega, a) - r_{R}(x,\omega,b^t(a,x)) + 2\xi_{R,t}(x,\omega,b^t(a,x))\right) &\geq \\
	\sum_{x \in X, \omega \in \Omega, a \in A} \widehat q_t (x,\omega,a) \left(\overline{r}_{R,t}(x,\omega,a)+ \xi_{R,t}(x,\omega, a) - \overline{r}_{R,t}(x,\omega,b^t(a,x)) + \xi_{R,t}(x,\omega,b^t(a,x))\right)&\geq 0.
\end{align*}
Then, rearranging the above inequality we get:
\begin{align}\label{eq:from_reward_to_xi}
	\hspace{-2.5mm}\sum_{x \in X, \omega \in \Omega, a \in A} \hspace{-5.5mm} \widehat q_t (x,\omega,a) & \left(r_{R,t}(x,\omega,b^t(a,x))- r_{R,t}(x,\omega,a) \right) \nonumber \\ & \leq 2 \hspace{-6.5mm} \sum_{x \in X, \omega \in \Omega, a \in A}\hspace{-5.5mm} \widehat q_t (x,\omega,a) \left(\xi_{R,t}(x,\omega, a) + \xi_{R,t}(x,\omega,b^t(a,x))\right).
\end{align}
Furthermore, we can decompose the receivers' violations as follows:
\begin{align*}
	V_T
	& =  \sum_{t\in[T]} \sum_{x \in X, \omega \in \Omega, a \in A} 
	 \left(q_t(x,\omega,a) \pm \widehat q_t(x,\omega,a) 
	 \right) \left(r_{R}(x,\omega,b^{t}(a,x))-r_{R}(x,\omega,a)\right)\\
	 & \le  \mathcal{O}\left(L^2|X|\sqrt{T|A||\Omega|\ln \left(\frac{T|X||\Omega||A|}{\delta}\right)}\right) + \sum_{t\in[T]} \sum_{x \in X, \omega \in \Omega, a \in A} 
	  \widehat  q_t(x,\omega,a) \left(r_{R}(x,\omega,b^{t}(a,x))-r_{R}(x,\omega,a)\right)\\
	 &\leq \mathcal{O}\left(L^2|X|\sqrt{T|A||\Omega|\ln \left(\frac{T|X||\Omega||A|}{\delta}\right)}\right) +\\
	 &\hspace{25mm} \sum_{t\in[T]}  \sum_{x \in X, \omega \in \Omega, a \in A}  \left( \widehat q_t (x,\omega,a) \pm  q_t (x,\omega,a) \right) \left(\xi_{R,t}(x,\omega, a) + \xi_{R,t}(x,\omega,b^{t}(a,x))\right)\\
	&\leq \mathcal{O}\left(L^2|X|\sqrt{T|A||\Omega|\ln \left(\frac{T|X||\Omega||A|}{\delta}\right)}\right) + \sum_{t\in[T]} \sum_{x \in X, \omega \in \Omega, a \in A}  q_t (x,\omega,a) 
	\left(2\xi_{R,t}(x,\omega, a) + 2\xi_{R,t}(x,\omega,b^{t}(a,x))\right)\\
	&\leq \mathcal{O}\left(L^2|X|\sqrt{T|A||\Omega|\ln \left(\frac{T|X||\Omega||A|}{\delta}\right)}\right) + 2 \sum_{t\in[T]} \sum_{x \in X, \omega \in \Omega, a \in A}  q_t (x,\omega,a) \xi_{R,t}(x,\omega,b^{t}(a,x)),
\end{align*}
where the first and third inequalities hold by Lemma~\ref{lem:transition}, the second inequality is a consequence of Inequality~\eqref{eq:from_reward_to_xi}, and the third inequality follows by means of Lemma~\ref{lem:concentration}, which holds with a probability of at least $1-\delta$. Therefore, employing a union bound over the events of Lemma~\ref{lem:transition} and Lemma~\ref{lem:concentration}, the previous result holds with probability at least $1-3\delta$, under the clean event.
\end{proof}

\banditviolations*
\begin{proof}
	As a preliminary observation, we notice that Algorithm~\ref{alg:bandit_feedback} is divided into $N$ epochs of length $\ell = |X| |\Omega| |A|$, where in each epoch, Algorithm~\ref{alg:bandit_feedback} maximizes the probability of visiting each triplet $(x, \omega, a)$. In the following, we define $t_{j}(x, \omega, a) \in [T]$ as the round in which Algorithm~\ref{alg:bandit_feedback} maximizes the occupancy of the triplet $(x, \omega, a)$ in the epoch $j \in [N-1]$. Formally:
	\begin{equation*}
		t_{j}(x, \omega, a) \coloneqq \{ t \in [\,  j\ell+1, \dots ,(j+1)\ell \,] \,\ | \,\, \textnormal{$ \sum_{x' \in X} q(x,\omega,a,x')$ is the objective function of Program~\eqref{prob:double_opt} } \}
	\end{equation*}
	Furthermore, for each occupancy measure $q_t$ with $t \in [T]$, the following holds:
	\begin{equation}\label{eq:violations_partial_ic}
		q_t(x, \omega, a) =q(x, \omega, b^{t}(a,x)) \le q_{t_{j}(x,\omega, b^{t}(a,x))}(x,\omega, b^{t}(a,x))
	\end{equation}
	 for each $j \in [N-1]$ where $q \in \mathcal{Q}$ is an occupancy measure that satisfies the IC constraints of the offline optimization problem (see Program~\eqref{lp:offline_opt}). The first equality above follows by observing that there always exists an occupancy that satisfies the IC constraints that recommends action $b^{t}(a,x) \in A$ instead of $a \in A$ in the state $x \in X$ with the same probability of $q_t$. The inequality, on the other hand, follows by observing that the space of occupancy measures satisfying the IC constraint of the offline optimization problem~\eqref{lp:offline_opt} is always a subset of the feasibility set of Program~\eqref{prob:double_opt}.
	
	Furthermore, by Lemma~\ref{lem:violation_decomposition} we have that:
	\begin{equation*}
	V_T \leq \mathcal{O}\left(L^2|X|\sqrt{T|A||\Omega|\ln \left(\frac{T|X||\Omega||A|}{\delta}\right)}\right) + \sum_{t \in [T]} \sum_{x \in X, \omega \in \Omega, a \in A}  q_t(x,\omega, a) \xi_{R,t}(x,\omega,b^t(a,x)),
	\end{equation*}
	We focus on bounding the second term in the inequality above in the first $N\ell$ rounds of Algorithm~\ref{alg:bandit_feedback}. Thus, with probability at least $1-\delta$ we have:  
	\begin{align}
		& \sum_{t \le N \ell}  \sum_{x \in X, \omega \in \Omega, a \in A}  q_t (x,\omega,a) \xi_{R,t}(x,\omega,b^t(a,x))  \nonumber \\
		& \leq \sum_{t =1 }^{\ell} \sum_{x \in X, \omega \in \Omega, a \in A}  q_t (x,\omega,a)
		 \xi_{R,t}(x,\omega,b^t(a,x))	+ \sum_{t =\ell + 1}^{ N \ell} \sum_{x \in X, \omega \in \Omega, a \in A}  q_t (x,\omega,a)
		\left( \xi_{R,t}(x,\omega,b^t(a,x))\right)  \nonumber \\
		& \leq L |X| |\Omega | |A| + \sum_{x \in X, \omega \in \Omega, a \in A}  \left(  \sum_{j =1}^{ N-1} \sum_{t = j \ell}^{ (j+1) \ell} q_t (x,\omega,a)
		 \xi_{R,t}(x,\omega,b^t(a,x)) \right)   \label{eq:violations_partial_first_layer}\\
		& \leq L |X| |\Omega| |A| +  \sum_{x \in X, \omega \in \Omega, a \in A}  \left[\sum_{a' \in A} \left( \sum_{j=1}^{N-1} \sum_{t= j \ell }^{(j+1) \ell } q_t (x,\omega,a') 
		\left( \xi_{R,t}(x,\omega,a') \mathbbm{1} \{b^{t}(a,x) = a'\} \right)  \right) \right]  \nonumber \\
		& \leq L |X| |\Omega| |A| +  \sum_{x \in X, \omega \in \Omega, a \in A}  \left[	\sum_{a' \in A} \left( \sum_{j=1}^{N-1} q_{t_j(x,\omega,a')}(x,\omega,a')  \sum_{t= j \ell }^{(j+1) \ell } 
		\left( \xi_{R,t}(x,\omega,a') \mathbbm{1} \{b^{t}(a,x) = a'\}  \right) \right) \right]	\label{eq:violations_partial_qtj}  \\
		& \leq L |X| |\Omega| |A| + 
		\sum_{x \in X, \omega \in \Omega, a \in A}  \left[	\sum_{a' \in A} \left( \sum_{j=1}^{N-1} q_{t_j(x,\omega,a')}(x,\omega,a')  \sum_{t= j \ell }^{(j+1) \ell } 
		 \xi_{R,t}(x,\omega,a')  \right) \right]	\label{eq:violations_partial_indicator_one}  \\
		 & \leq L |X| |\Omega| |A| +  \hspace{-0.5mm} \sqrt{\ln \left(\frac{2T|X||\Omega||A|}{\delta} \right)} \hspace{-0.5mm} \sum_{x \in X, \omega \in \Omega, a \in A}  \left[	\sum_{a' \in A} \left( \sum_{j=1}^{N-1} q_{t_j(x,\omega,a')}(x,\omega,a')  \sum_{t= j \ell }^{(j+1) \ell } 
		 \frac{1}{ \sqrt{\max \{1, N_t(x,\omega, a') }\} }  \right) \right]	\nonumber  \\
		 & \leq L |X| |\Omega| |A| +  \ell \sqrt{\ln \left(\frac{2T|X||\Omega||A|}{\delta} \right)}\sum_{x \in X, \omega \in \Omega, a \in A}  \left[	\sum_{a' \in A} \left( \sum_{j=1}^{N-1} \frac{q_{t_j(x,\omega,a')}(x,\omega,a')}{\sqrt{\max \{1,N_j(x,\omega,a') }\} }\right) \right]	  \label{eq:violations_partial_N_smaller} \\
		 & \leq L |X| |\Omega| |A| +  \ell \sqrt{\ln \left(\frac{2T|X||\Omega||A|}{\delta} \right)} \,\, \cdot  \nonumber \\
		 &  \mkern200mu \cdot \Bigg( \sum_{x \in X, \omega \in \Omega, a \in A} 	 \left[	\sum_{a' \in A} \left( \sum_{j=1}^{N-1} \frac{\mathbbm{1}_{t_j(x,\omega,a')}(x,\omega,a')}{\sqrt{\max \{1,N_j(x,\omega,a') }\} }\right) \right] + L |A|\sqrt{2N\ln\frac{1}{\delta}} \Bigg)	 \label{eq:violations_partial_azuma} \\
		  & \leq L |X| |\Omega| |A| +  3 \ell \sqrt{\ln \left(\frac{2T|X||\Omega||A|}{\delta} \right)}\left(\sum_{x \in X, \omega \in \Omega, a \in A}  \left[	\sum_{a' \in A}  \sqrt{ \sum_{i =1}^{N} \mathbbm{1}_{t_{i}(x,\omega,a')}} \right]	+ L |A|\sqrt{2N\ln\frac{1}{\delta}}\right)  \nonumber  \\
		   & \leq L |X| |\Omega| |A| +  3 \ell \sqrt{\ln \left(\frac{2T|X||\Omega||A|}{\delta}  \right)} \left(	\sum_{x \in X, \omega \in \Omega, a \in A} 	 \left[	\sum_{a' \in A}  \sqrt{N_{ N \ell }(x,\omega,a' ) }\right] + L |A|\sqrt{2N\ln\frac{1}{\delta}} 	\right) \label{eq:violations_partial_N_smaller_2}   \\
		   & \leq L |X| |\Omega| |A| +  3 \ell |A|\sqrt{\ln \left(\frac{2T|X||\Omega||A|}{\delta} \right)} \left( \sum_{x \in X, \omega \in \Omega, a' \in A}   \sqrt{N_{ N \ell }(x,\omega,a' ) }	+ L \sqrt{2N\ln\frac{1}{\delta}} \right)  \nonumber \\
		   & \leq L |X| |\Omega| |A| +  3 \ell |A|\sqrt{\ln \left(\frac{2T|X||\Omega||A|}{\delta}  \right)} \left( \sqrt{L N \ell  }	+ L  \sqrt{2N\ln\frac{1}{\delta}} \right), \label{eq:violations_partial_cauchy}
	\end{align}
	where we let $N_{j}(x,\omega,a) = \sum_{i \le j} \mathbbm{1}_{t_{i}(x,\omega,a)}( x,\omega,a)$ for the the sake of simplicity. Furthermore, we notice that Inequality~\eqref{eq:violations_partial_first_layer} follows observing that $\xi_{r,t}(x,\omega,a) \le 1$ for each $(x,\omega,a) \in X \times \Omega \times A$ and $t \in [T]$, and because the occupancy defines a probability distribution over each layer $k \in [0, \dots, L]$. Inequality~\eqref{eq:violations_partial_qtj} holds thanks to Inequality~\eqref{eq:violations_partial_ic}. Inequality~\eqref{eq:violations_partial_indicator_one} follows because each indicator function takes value of at most one. Inequality~\eqref{eq:violations_partial_N_smaller} follows by observing that the number of times that the triplet $(x, \omega ,a')$ is visited overall is always greater or equal to the the number of times such a triplet has been visited during the rounds in which Algorithm~\ref{alg:bandit_feedback} maximizes the exploration of that triplet. Inequality~\eqref{eq:violations_partial_azuma} holds with probability at least $1-\delta$ and follows from the Azuma-Hoeffding inequality, and Inequality~\eqref{eq:violations_partial_cauchy} holds, noticing that $\sum_{x\in X,\omega\in\Omega,a\in A}N_T(x,\omega,a)\leq LN \ell$ and employing the Cauchy-Schwarz inequality. 
	
	We focus on bounding the cumulative violations suffered in the remaining $T- N\ell$ rounds of Algorithm~\ref{alg:bandit_feedback}. With probability at least $1-\delta$ the following holds:  
		\begin{align}
		 \sum_{t > N \ell} & \sum_{x \in X, \omega \in \Omega, a \in A}  q_t (x,\omega,a)  \xi_{R,t}(x,\omega,b^{t}(a,x))  \nonumber \\
		& \leq \sum_{x \in X, \omega \in \Omega, a \in A}  \left( \sum_{a' \in A}  \sum_{ t > N \ell} q_t (x,\omega,a') 
		 \xi_{R,t}(x,\omega,a') \mathbbm{1}_t \{b^{t}(a,x) = a'\} \right)	  \nonumber \\
		& \leq \sum_{x \in X, \omega \in \Omega, a \in A}\sqrt{\ln \left(\frac{2T|X||\Omega||A|}{\delta} \right)}  \sum_{a' \in A}   q_{t_N(x,\omega,a')}(x, \omega, a') \sum_{t > N \ell}
		{\frac{ 1 }{\sqrt {\max \{1, N_t(x,\omega, a')\}} }}	  \label{eq:violations_partial_qtN} \\
		& \leq |A| \sqrt{\ln \left(\frac{2T|X||\Omega||A|}{\delta} \right)} \sum_{x \in X, \omega \in \Omega, a \in A}    q_{t_N(x,\omega,a)} (x,\omega,a)  \sum_{t > N \ell}
		{\frac{ 1 }{\sqrt {\max \{1, N_{N \ell}(x,\omega, a)\}} }}	  \nonumber \\
		 & \leq |A| \sqrt{\ln \left(\frac{2T|X||\Omega||A|}{\delta} \right)} \sum_{x \in X, \omega \in \Omega, a \in A}   q_{t_N(x,\omega,a)} (x,\omega,a)
		 {\frac{ \left( T  -  N \ell \right)}{\sqrt {\max \{1, N_{N \ell}(x,\omega, a)\}} }}	 \nonumber  \\
		& \leq |A| \sqrt{\ln \left(\frac{2T|X||\Omega||A|}{\delta} \right)} \sum_{x \in X, \omega \in \Omega, a \in A}   \frac{N_{N \ell}(x, \omega, a) + L\sqrt{2 N  \ln\frac{1}{\delta}}}{N } 
		{\frac{ \left( T  -  N \ell \right)}{\sqrt {\max \{1, N_{N \ell}(x,\omega, a)\}} }}	  \label{eq:violations_partial_azuma_2}  \\
		& \leq |A| \sqrt{\ln \left(\frac{2T|X||\Omega||A|}{\delta} \right)}   \frac{T}{N} \left(	\sqrt{L N \ell  } + L \ell \sqrt{2N \ln \frac{1}{\delta}}\right)\label{eq:violations_partial_cauchy_2} \\
		& \leq 2 |A| \sqrt{\ln \left(\frac{2T|X||\Omega||A|}{\delta} \right)}   \frac{T}{\sqrt N}  L \ell \sqrt{2 \ln \frac{1}{\delta}}. 
	\end{align}
	Inequality~\eqref{eq:violations_partial_qtN} holds thanks to Inequality~\eqref{eq:violations_partial_ic} and observing that the indicator function takes value of at most one. Inequality~\eqref{eq:violations_partial_cauchy_2} holds, noticing that $\sum_{x\in X,\omega\in\Omega,a\in A}N_T(x,\omega,a)\leq LN \ell$ and employing the Cauchy-Schwarz inequality.  Inequality~\eqref{eq:violations_partial_azuma_2} holds with probability at least $1-\delta$ and follows by employing the Azuma-Hoeffding and observing the following:
	\begin{equation*}
		N_{ N \ell}(x, \omega, a)
		\ge \sum_{k = 1}^{N} \mathbbm{1}_{t_k} (x,\omega,a)
		\ge \sum_{k = 1}^{N} q_{t_k} (x,\omega,a)- L\sqrt{2 N \ln\frac{1}{\delta}}
		\ge N q_{t_N(x,\omega,a)} (x,\omega, a) - L\sqrt{2 N \ln\frac{1}{\delta}},
	\end{equation*} 
	which can be written as follows:
	\begin{equation*}
	\frac{N_{ N \ell}(x, \omega, a) + L\sqrt{2 N \ln\frac{1}{\delta}}}{ N} \ge q_{t_N(x,\omega,a)} (x,\omega, a).
	\end{equation*}
	Finally, thanks to Lemma~\ref{lem:violation_decomposition} and employing Inequality~\eqref{eq:violations_partial_cauchy} and Inequality~\eqref{eq:violations_partial_cauchy_2} we get:
	\begin{equation*}
	V_T \leq \widetilde{\mathcal{O}}  \left(  \rho \left( |A| \frac{T}{\sqrt N } + |A|
	\sqrt{N} + L^2 \sqrt{T}   \right)\right).
	\end{equation*}
	With $\rho := (|X| |\Omega| |A|)^{3/2} \sqrt{\ln\left(\nicefrac{1}{\delta} \right)}$, such a result holds with a probability of at least $1- 9\delta$, employing a union bound and observing that $\mathcal{E}(\delta)$ holds with a probability of at least $1-4\delta$, Lemma~\ref{lem:violation_decomposition} holds with a probability of at least $1-3 \delta$, and both Inequality~\eqref{eq:violations_partial_cauchy} and Inequality~\eqref{eq:violations_partial_cauchy_2} hold with a probability of at least $1-\delta$.
\end{proof}

\subsection{Lower bound}
\lowerbound*
\begin{proof}
	
	We consider two instances with a single possible outcome and a single state. In the following, we omit the dependence on the sender and receiver utility from these parameters. We assume that the sender's utility in the first instance is a deterministic function given by $r^1_{S}(a_1) = 1$ and $r^1_{S}(a_2) = 0$, while the receiver's utility is given by $r^2_{R}(a_1) \sim \text{Be}(1/2 + \epsilon)$ and $r^2_{R}(a_2) \sim \text{Be}(1/2)$. Meanwhile, the sender's utility in the second instance is $r^2_{S}(a_1) = 1$ and $r^1_{S}(a_2) = 0$, while the follower's utility is equal to $r^2_{R}(a_1) \sim \text{Be}(1/2+ \epsilon)$ and $r^2_{R}(a_2) \sim \text{Be}(1/2 + 2\epsilon)$, for some $\epsilon \in (0,1/2)$. Thus, the sender's regret in the first instance is given by:
	\begin{equation*}
		R_T^1=\sum_{t=1}^T \phi^t(a_2),
	\end{equation*}
	since the optimal signaling scheme is the one that always recommends action $a_1 \in \mathcal{A}$ in the first instance. In the following, we define $\mathbb{P}^1\left(\text{respectively, }\mathbb{P}^2\right)$ as the probability measure induced by recommending, at each round, signaling schemes according to some algorithm in the first (respectively, second) instance. Then, we bound the probability that the regret in the first instance is larger than a constant $C \in \mathbb{N}$ as follows:
	\begin{equation}\label{eq:regret_fisrt_instance}
		\mathbb{P}^1 \left( R_T^1 \le C \right)  = \mathbb{P}^1	 \left( \sum_{t=1}^T  \phi^t(a_2) \le C \right)  \ge 1- \eta,
	\end{equation}
	for some $\eta \in (0,1)$. Furthermore, by Pinsker's inequality and Equation~\eqref{eq:regret_fisrt_instance} the following holds.
	\begin{equation}\label{eq:probability_second_instance}
		\mathbb{P}^2	 \left( \sum_{t=1}^T  \phi^t(a_2) \le C \right)  \ge 1- \eta - \sqrt{D_{KL}(\mathbb{P}^1, \mathbb{P}^2)},
	\end{equation}
	where we denote with $D_{KL}(\cdot, \cdot)$ the Kullback-Leibler divergence between two probability measure. By means of the well known divergence decomposition, we have:
	\begin{equation}\label{eq:kl_decomposition}
		D_{KL}(\mathbb{P}^1, \mathbb{P}^2) \le  \ \mathbb{E}^1	 \left[\sum_{t=1}^T  \phi^t(a_2)  \right] D_{KL}(\text{Be}(1/2 +2 \epsilon ), \text{Be}(1/2 )) \le 16 \epsilon^2 \mathbb{E}^1	 \left[\sum_{t=1}^T  \phi^t(a_2)  \right],
	\end{equation}
	where in the latter inequality we used the well known property ensuring that $D_{KL} (\text{Be}(p),\text{Be}(q)) \le \frac{(p - q)^2}{q(1 - q)}$. Then, by reverse Markov inequality and Equation~\eqref{eq:regret_fisrt_instance} we get:
	\begin{equation*}
		\mathbb{E}^1	 \left[\sum_{t=1}^T  \phi^t(a_2)  \right] \le \mathbb{P}^1	 \left( \sum_{t=1}^T  \phi^t(a_2) \ge C \right) (T-C) +  C \le  \eta (T-C) + C,
	\end{equation*}
	Furthermore, by means of the latter inequality and Equation~\eqref{eq:kl_decomposition} we have:
	\begin{equation*}
		\mathbb{P}^2	 \left( \sum_{t=1}^T  \phi^t(a_2) \le C \right)  \ge 1- \eta - \sqrt{ 16 \epsilon^2( \eta(T-C) +  C)}
	\end{equation*}
	We now consider the receiver's violations in the second instance which can be computed as follows:
	\begin{equation*}
		V_T^2=\sum_{t=1}^T \phi^t (a_1) \left( \overline{r}_{R}^2(a_2) -  \overline{r}_{R}^2(a_1) \right)   = \epsilon \sum_{t=1}^T \phi^t (a_1).
	\end{equation*}
	Then, by means of Equation~\eqref{eq:probability_second_instance} we get:
	\begin{align*}
		\mathbb{P}^2	 \left( V_T^2 \ge \epsilon T \right) &\ge \mathbb{P}^2	 \left( V_T^2 \ge \epsilon (T-C)  \right)\\ &= \mathbb{P}^2	 \left( \epsilon \sum_{t=1}^T \phi^t (a_1) \ge \epsilon (T-C)  \right)\\ &= \mathbb{P}^2	 \left( T -  \sum_{t=1}^T \phi^t (a_2) \ge  T-C \right)\\ &=  \mathbb{P}^2	 \left(  \sum_{t=1}^T \phi^t (a_2) \le  C \right)  \ge 1- \eta - \sqrt{ 16 \epsilon^2 (\eta(T-C) +  C)}.
	\end{align*}
	
	Finally, by setting $C = \frac{T^\alpha}{2}$ and $\epsilon=\frac{T^{-	\alpha/2}}{16} $ and $\eta = \frac{T^{\alpha-1}}{2}$ we get:
	\begin{align*}
		\mathbb{P}^1 \left( R_T^1 \le C \right)   & \ge 1- \eta \\
		\mathbb{P}^1 \left( R_T^1 \le  \frac{T^{\alpha}}{2} \right)   & \ge 1- \frac{T^{\alpha-1}}{2} \ge \frac{1}{2}, 
	\end{align*}
	since $\alpha \in [1/2,1]$. Then, the latter result implies that:
	\begin{align*}
		\mathbb{P}^2	 \left( V_T^2 \ge \epsilon T \right)   &\ge 1- \eta - \sqrt{ 16 \epsilon^2 (\eta(T-C) +  C)}\\
		&   \ge 1- \frac{T^{\alpha-1}}{2} - \sqrt{ \frac{T^{-\alpha}}{16} \left(\frac{T^{\alpha}}{2}  +  \frac{T^\alpha}{2}\right)} \ge \frac{1}{4},\\
	\end{align*}
	which concludes the proof.
\end{proof}

\end{document}